\newcommand{\R}{{\mathbb R}}
\newcommand{\y}{{\bf y}}
\newcommand{\C}{{\rm const}}
\newcommand{\x}{{\bf x}}
\newcommand{\n}{{\bf n}}
\newtheorem{theorem}{Theorem}[section]
\newtheorem{proposition}[theorem]{Proposition}
\newtheorem{corollary}[theorem]{Corollary}
\newtheorem{lemma}[theorem]{Lemma}
\theoremstyle{definition}
\numberwithin{equation}{section}
\begin{document}

\noindent 
\begin{center}
\textbf{\large The Faraday effect revisited: Thermodynamic limit}
\end{center}

\begin{center}
February 28th, 2009
\end{center}

\vspace{0.5cm}

\noindent 

\begin{center}
\textbf{ 
Horia D. Cornean\footnote{Department of Mathematical Sciences, 
    Aalborg
    University, Fredrik Bajers Vej 7G, 9220 Aalborg, Denmark; e-mail:
    cornean@math.aau.dk},
Gheorghe Nenciu\footnote{Dept. Theor. Phys.,
University of Bucharest, P. O. Box MG11, 
RO-76900 Bu\-cha\-rest, Romania; }${}^,$
\footnote{
Inst. of Math. ``Simion Stoilow'' of
the Romanian Academy, P. O. Box 1-764, RO-014700 Bu\-cha\-rest, Romania;
e-mail: Gheorghe.Nenciu@imar.ro}}
     
\end{center}

\vspace{0.5cm}

\noindent

\begin{abstract}
 This paper is the second in a series revisiting 
   the (effect of) Faraday rotation. We formulate and 
prove the thermodynamic limit for the transverse electric 
conductivity of Bloch electrons, as well as for the Verdet
constant. The main mathematical tool is a regularized magnetic and
geometric perturbation theory combined with elliptic regularity and Agmon-Combes-Thomas uniform
exponential decay estimates.  
\end{abstract}

\vspace{0.5cm}

\tableofcontents

\newpage

\section{Introduction}

\subsection{Generalities}
The rotation of the polarization of a plane-polarized electromagnetic
wave passing through a 
material immersed in a homogeneous magnetic field oriented parallel to the direction of 
propagation, is known in physics as  the Faraday effect (called sometimes also Faraday rotation).
The experiment consists in sending a monochromatic light wave,
parallel to the $0z$ direction and linearly polarized in the plane
$x0z$. When the light enters the sample, the polarization plane starts
rotating. A simple argument based on (classical) Maxwell equations
shows that there exists a linear relation between the angle $\theta$
of rotation of the plane of polarization per unit length and the
transverse component of the conductivity tensor of the material (see
e.g. formula (1) in \cite{Roth}). For most materials - and we will
restrict ourselves to this case -  the transverse component of the conductivity tensor
vanishes when the magnetic field is absent and is no longer zero when
the magnetic field is turned on. Under the proviso that the dependence
of the conductivity tensor upon the strength $B$ of the magnetic
field is smooth, for weak fields one expands the conductivity tensor
to the first order and neglect the higher terms. The coefficient of
the linear term is known as the Verdet constant of the corresponding material.

It follows that the basic object is the conductivity tensor and the
main goal of the theory (classical or quantum) is to provide a
workable formula for it, in particular for the Verdet constant. 
The problem has a long and distinguished history in solid state
physics  theory and the spectrum of possible applications ranges from
astrophysics to optics and general quantum mechanics 
(see e.g. \cite{Roth, Ped, ZP, CNP, MPP, HaJa, Kas} and references therein). 

Using quantum theory in the setting in which the sample is modeled by 
a system of independent electrons subjected to a periodic electric
potential, Laura Roth \cite{Roth} obtained  (albeit only at a formal
level) a formula for the Verdet constant in full generality and
applied it to metals as well as semiconductors. Roth's method is based
on an effective Hamiltonian approach for Bloch electrons in the
presence of a weak constant magnetic field (see \cite{Roth1}
and references therein) which in turn is based on a 
(proto) magnetic pseudodifferential calculus (for recent mathematical
developments see \cite{IMP} and references therein).
 
But Roth's theory is far from being free of difficulties. Due to her
highly formal way of doing computations, it seems almost hopeless -
even with present day mathematical tools - to control the errors or push the
computations to higher orders in $B$ except maybe the case of simple
bands. Even more, the final formula contains terms which are singular
at the crossings of Bloch bands. Accordingly, in spite of the fact
that it has been considered a landmark of the subject, it came as no
surprise that at the practical level this theory  only met a moderate
success and a multitude of unrelated, simplified models have been tailored for specific cases.

Our paper is the second in a series aiming at a complete, unitary and
mathematically sound theory of Faraday effect having the same
generality as Roth's theory (i.e. a theory of the conductivity tensor
for electrons subjected to a periodic electric potential and to a
constant magnetic field in the linear response approximation), but
free of its shortcomings. More precisely in the first paper \cite{CNP}
we started by a rigorous derivation of  the transverse component of the
conductivity tensor in the linear response regime for a finite
sample. It is given as the formula \ref{stone5}
below. To proceed further, we employ a method going back at least 
Sondheimer  and Wilson 
\cite{SW} and which has been also used in the rigorous study of the
Landau magnetism \cite{ABN, Cor, BC, BCL1, BCL2}. The basic idea is
that the traces involved in computing various physical quantities can
be written as integrals involving Green functions (i.e. integral
kernels in the configuration space of either the resolvent or the
semi-group of the Hamiltonian of the system), which are more robust and
easier to control.

As it stands, the conductivity tensor depends upon the shape of the
sample and of boundary conditions which define the quantum
Hamiltonian. The physical idea of the thermodynamic limit for an
intensive physical quantity is that in the limit of large samples it
approaches a limit which is independent upon the shape of the sample,
boundary conditions etc. The existence of the thermodynamic limit is
one of the basic (and far from trivial) problem of statistical
mechanics (see e.g. \cite{Ru, BraRo}). In \cite{CNP} we took for granted that the 
thermodynamic limit of the the transverse component of the
conductivity tensor exists and the limit is smooth as a function of the magnetic
field strength $B$. Moreover, we assumed that the thermodynamic limit
commuted with taking the derivative with respect to $B$. Under these
assumptions, we gave - among other things - explicit formulas for the
Verdet constant in terms
of zero magnetic field Green functions, free of any divergences.

The proof of the thermodynamic limit, which from the mathematical
point of view is the most delicate part of the theory of the Faraday
effect, was left aside in  \cite{CNP}, and is the content of the current paper.
The mathematical problem behind it is hard due to the singularity
induced by the long range magnetic 
perturbation. Even for a simpler problem involving constant magnetic
field - namely the Landau diamagnetism of free electrons - the existence of the 
thermodynamic limit leading to a correct thermodynamic behavior was a
long standing problem. Naive computations 
led to unphysical and contradictory results (see \cite{ABN} for
historical remarks). Accordingly, the first 
rigorous results came as late as 1975 \cite{ABN} and were based on
various identities expressing the gauge invariance which was crucial
in dealing with the singular terms appearing in the thermodynamic limit. Even 
though the importance of gauge invariance was already highlighted in
\cite{ABN}, an efficient way to implement 
this idea at a technical level was still lacking. 
Only recently a regularized magnetic perturbation theory 
based on factorizing the (singular in the thermodynamic limit)
magnetic phase factor has been fully developed 
in \cite{Cor, CN, CN2, Nen}. This regularized magnetic perturbation theory has been already used in 
\cite{Cor, BC, BCL1, BCL2} in order to prove far reaching
generalizations of the results in \cite{ABN}.  

Coming back to the Faraday effect, we would like to stress that the
object at hand is much more singular than the one encountered in the Landau
diamagnetism. This adds an order of magnitude to the mathematical difficulty 
and requires an elaborate and tedious combination of regularized magnetic perturbation theory with 
techniques like Combes-Thomas exponential decay, trace norm estimates and elliptic regularity. 

We expect that the method developed here in order to control the
thermodynamic limit in the presence of an 
extended magnetic field to be useful in related problems, e.g. 
to obtain an elegant and complete study of the diamagnetism and de
Haas-van Alphen effect for electrons in metals.

The content of the paper is as follows. In the rest of this
Introduction we state the mathematical problem, give the main result
in Theorem \ref{teorema1}, and since the proof is quite long and
technical we will briefly describe the main points.
The other sections are devoted to the proof of our main theorem. 
The core of the proof heavily involving  regularized magnetic
perturbation theory is contained in Section 4. Some technical
estimates about exponential decay with uniform control in the spectral
parameter are given as an Appendix.

\subsection{The main result}

Consider a simply connected open and bounded set $\Lambda_1\subset \R^3$,
which contains the origin. We assume that the boundary $\partial
\Lambda_1$ is smooth. Consider a family of scaled domains 
\begin{equation}\label{prel1}
\Lambda_L=\{\x\in\R^3:\; \x/L\in \Lambda_1\},\; L>1.
\end{equation}
We have the estimates 
\begin{equation}\label{prel2}
{\rm Vol}(\Lambda_L)\sim L^3,\quad {\rm Area}(\partial \Lambda_L)\sim L^2.
\end{equation}

We are interested in the thermodynamic limit, which will mean
$L\rightarrow \infty$, that is $\Lambda_L$ will fill out the whole space. 
The one particle Hilbert space is 
$\mathcal{H}_L:=L^2(\Lambda_L)$. Note that we
include the case $L=\infty$.

The one body Hamiltonian of a non-confined particle, subjected to a
constant magnetic field $(0,0,B)$, in an external
potential $V$, formally looks like this:
\begin{align}\label{feshch}
H_\infty(B)&={\bf P}^2(B) +V,
\end{align}
 with
\begin{equation}\label{impuls}
{\bf P}(B)=-i\nabla +B {\bf a}={\bf P}(0)+B{\bf a}.
\end{equation}
Let us explain the various terms. Here ${\bf a}(\x)$ is a smooth
magnetic vector potential which generates a magnetic field
of intensity $B=1$ i.e. $\nabla \wedge {\bf a}(\x)=(0,0,1)$. A 
frequently used magnetic vector potential is the symmetric gauge:
\begin{equation}\label{simg}
{\bf A}(\x)=\frac{1}{2}\n_{3}\wedge \x=(-x_2/2,x_1/2,0),
\end{equation}
where $\n_{3}$ is the unit vector along $z$ axis. We neglect the spin 
structure since it only complicates the notation and does not influence the 
mathematical problem.  

On components, \eqref{impuls} reads as:
\begin{equation}\label{impuls2}
P_j(B)=D_j +Ba_j =:P_j(0)+B a_j,\quad j\in\{1,2,3\}.
\end{equation}

We will from now on assume that $V$ is a $C^\infty(\R^3)$ function,
periodic with respect to the lattice $\mathbb{Z}^3$. Standard
arguments then show that $H_\infty(B)$ is essentially self-adjoint on
$C_0^\infty(\R^3)$. 

When $L<\infty$ we need to specify a boundary condition. We will only
consider Dirichlet boundary conditions, that is we start with the same 
expression as in \eqref{feshch}, defined on
$C_0^\infty(\Lambda_L)$, and we define
$H_L(B)$ to be the Friedrichs extension of it. This is indeed
possible, because our operator can be written as 
$ -\Delta_D +W,$ where $\Delta_D$ is the Dirichlet Laplacian and $W$
is a first order differential operator, relatively bounded
to $-\Delta_D$ (remember that $L<\infty$).  The form domain of
$H_L(B)$ is the Sobolev space $H_0^1(\Lambda_L)$, while the operator
domain (use the estimates in section 10.5, Lemma 10.5.1, in
\cite{Hoe}) is $H^2(\Lambda_L)\cap H_0^1(\Lambda_L)$. Moreover,
$H_L(B)$ is essentially self-adjoint on 
$C_{(0)}^\infty (\overline{\Lambda_L})$, i.e. functions with support in
$\overline{\Lambda_L}$ and indefinitely differentiable in $\Lambda_L$
up to the boundary. 

Another important operator is $(-i\nabla +B {\bf a})_D^2$, i.e. the
usual free magnetic Schr\"odinger operator defined with Dirichlet
boundary conditions. We know that its spectrum is non-negative for all
$L>1$. By adding a positive constant, we can always assume that the
spectrum of $H_L(B)$ is non-negative, uniformly in $L>1$.

Let us now introduce the physical quantity we want to study. Consider 
$\omega\in\mathbb{C}$ and $\Im(\omega)<0$. For some fixed
$\mu\in\R$ and $\beta>0$, define the 
Fermi-Dirac function on its maximal domain of analyticity:
\begin{equation}\label{fd}
f_{FD}(z)=\frac{1}{e^{\beta(z-\mu)}+1}.
\end{equation}
Define
\begin{equation}\label{bo}
d:=\min\left\{ \frac{\pi}{2\beta},\frac{|{\rm Im}\;\omega|}{2}\right\},
\end{equation}
and introduce a counter-clockwise oriented contour given by
\begin{equation}\label{contur1}
\Gamma_\omega=\left \{x\pm i d:\;a\leq x
  <\infty
\right \}\bigcup 
\left \{a+i y:\:-d\leq y \leq
d\right \} 
\end{equation}
where $a+1$ lies below the spectrum of $H_L(B)$. By adding a positive
constant to $V$, we can take $a=-1$ uniformly in $L\geq 1 $ and $B\in [0,1]$. 

We introduce the transverse component of the conductivity tensor (see 
\cite{Roth, CNP}) as 
\begin{align}\label{stone5}
&  \sigma_{L}(B)= -\frac{1}{ {\rm Vol}(\Lambda_L)} \\
&\cdot {\rm Tr}
\int_{\Gamma_\omega}
{f}_{FD}(z) 
\left \{ P_1(B)(H_L(B)-z)^{-1}P_2(B) 
(H_L(B)-z-\omega)^{-1}\right .\nonumber \\
&+ \left . P_1(B)(H_L(B)-z+\omega)^{-1}P_2(B) 
(H_L(B)-z)^{-1} \right \}dz.
\nonumber 
\end{align}

Here {\it Tr} assumes that the integral is a trace-class operator. Now
we are prepared to formulate our main result.

\begin{theorem}\label{teorema1}
The above defined transverse component of the conductivity 
tensor admits the thermodynamic
limit; more precisely:

{\rm i.} The following operator, defined by a $B(L^2(\Lambda_L))$-
norm convergent Riemann integral,
\begin{align}\label{efel}
F_L&:=\int_{\Gamma_\omega}{f}_{FD}(z) \{ P_1(B)(H_L(B)-z)^{-1}P_2(B) 
(H_L(B)-z-\omega)^{-1}\nonumber \\
&+ P_1(B)(H_L(B)-z+\omega )^{-1}P_2(B) 
(H_L(B)-z)^{-1}\}dz,
\end{align} 
is in fact trace-class;

{\rm ii.} Consider the operator $F_\infty$ defined by the same integral but
with $H_\infty(B)$ instead of $H_L(B)$, and defined on the whole
space. Then $F_\infty$ is an integral
operator, with a kernel $\mathcal{F}(\x,\x')$ which is jointly
continuous on its variables. Moreover, the continuous function defined by 
$\R^3\ni \x\to s_{B}(\x):=\mathcal{F}(\x,\x)\in \R$
is periodic with respect to $\mathbb{Z}^3$; 

{\rm iii.} Denote by $\Omega$ the unit cube in $\R^3$. 
The thermodynamic limit exists: 
\begin{align}\label{efel22}
\sigma_\infty(B):=\lim_{L\to \infty}\sigma_{L}(B)=-\int_\Omega s_{B}(\x)
d\x.
\end{align}
Moreover, the mapping $B\to s_{B}\in L^\infty(\Omega)$ is
differentiable at $B=0$ and:
\begin{align}\label{efel2}
\left . \frac{}{}\partial_B\sigma_\infty(0)=-\int_\Omega
\partial_Bs_{B}\right \vert _{B=0}(\x)d\x=\lim_{L\to \infty} 
\partial_B\sigma_{L}(0).
\end{align} 
\end{theorem}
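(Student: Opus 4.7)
The plan is to tackle the three statements in order, with parts (i)--(ii) providing analytic infrastructure and part (iii) assembled from a bulk--boundary decomposition. The central technical tool throughout is the factorization of resolvent integral kernels into a long-range magnetic phase times a regularized factor, combined with Combes--Thomas exponential decay bounds that are uniform in $L$ and in $z\in\Gamma_\omega$ (supplied by the Appendix). For part (i), write $R_L(z):=(H_L(B)-z)^{-1}$ and decompose $P_j(B)=P_j(0)+Ba_j$ as in \eqref{impuls2}. Elliptic regularity together with the uniform spectral lower bound gives a uniform bound on $\|P_j(0)R_L(z)\|$, while the linearly growing factor $a_j$ is tamed by the exponential off-diagonal decay of the kernel of $R_L(z)$. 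Iterating the resolvent identity trades a power of $R_L(z)$ for $R_L(z_0)$ at a fixed $z_0<\inf\sigma(H_L(B))$; a sufficiently high power of the latter is trace class with a uniform Schatten bound (via the diamagnetic inequality applied to $(-\Delta_D+1)^{-N}$, or a direct kernel estimate). This yields trace-class membership of the integrand in \eqref{efel} for every $z\in\Gamma_\omega$, with a trace-norm bound that is integrable against $f_{FD}(z)$ along $\Gamma_\omega$.

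For part (ii), the resolvent $R_\infty(z)$ has an integral kernel that can be written as a magnetic phase times a smooth, jointly continuous factor with exponential decay (e.g.\ via a Feynman--Kac--It\^o representation or a Dunford-type construction starting from the free magnetic operator). Propagating this structure through the operator products in \eqref{efel} and invoking elliptic regularity of $H_\infty(B)$ produces a jointly continuous kernel $\mathcal{F}(\x,\x')$ for $F_\infty$. Periodicity of $s_{B}(\x):=\mathcal{F}(\x,\x)$ then follows from magnetic covariance: for each $\gamma\in\mathbb{Z}^3$ there is a magnetic translation $(T_\gamma\psi)(\x)=e^{iB\phi_\gamma(\x)}\psi(\x-\gamma)$ that commutes with $H_\infty(B)$ (since $V$ is $\mathbb{Z}^3$-periodic and the magnetic field is constant), hence with $F_\infty$; because $T_\gamma$ acts as translation modulated by a pointwise unitary phase, the phases cancel on the diagonal and $\mathcal{F}(\x+\gamma,\x+\gamma)=\mathcal{F}(\x,\x)$.

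Part (iii) is then a bulk--boundary decomposition. For $\x\in\Lambda_{L/2}$ I would compare $\mathcal{F}_L(\x,\x)$ with $\mathcal{F}(\x,\x)$ via the geometric resolvent identity using smooth cutoffs supported near $\x$: the difference $R_L(z)-R_\infty(z)$, sandwiched between such cutoffs, reduces to commutators with the cutoff that are supported in an annulus at distance $\sim L/2$ from $\x$, so Combes--Thomas decay gives $|\mathcal{F}_L(\x,\x)-\mathcal{F}(\x,\x)|\le C e^{-c\,\mathrm{dist}(\x,\partial\Lambda_L)}$ uniformly in $z$ on $\Gamma_\omega$. Splitting $\Lambda_L$ into a bulk region (where this difference is exponentially small in the distance to $\partial\Lambda_L$) and a width-$O(1)$ boundary shell of volume $O(L^2)$ on which the integrand is uniformly bounded by elliptic estimates, both error contributions are $O(L^{-1})$ after normalization by $\mathrm{Vol}(\Lambda_L)\sim L^3$; the remaining bulk integral of $\mathcal{F}(\x,\x)$, divided by $L^3$, converges by periodicity of $s_B$ to $\int_\Omega s_B(\x)\,d\x$, yielding \eqref{efel22}.

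Finally, \eqref{efel2} is obtained by running the same scheme on the formal $\partial_B$ of \eqref{efel}. Factoring the integral kernel of $R_L(B;z)$ as $e^{iB\varphi_L(\x,\y)}\tilde R_L(B;z)(\x,\y)$ where $\tilde R_L$ admits a norm-convergent expansion in $B$ whose coefficients have kernels with uniform-in-$L$ exponential decay, pointwise differentiation at $B=0$ combined with the bulk--boundary argument of part (iii) gives both the existence of $\partial_B\sigma_L(0)\to\partial_B\sigma_\infty(0)$ and its identification with $-\int_\Omega\partial_B s_B|_{B=0}(\x)\,d\x$. The main obstacle throughout is precisely that the bare perturbation $B\ba$ is unbounded and \emph{not} relatively bounded with respect to $H_L(0)$ uniformly in $L$ (the vector potential grows linearly while the sample also grows), so ordinary analytic perturbation theory breaks down in the thermodynamic limit; isolating the singular gauge-dependent phase and then proving quantitative uniform-in-$L$ trace-norm bounds for the compound object in \eqref{efel} is where the bulk of the technical work will lie.
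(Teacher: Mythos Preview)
Your plan has two genuine gaps.

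\textbf{Part (i).} You assert that the integrand $P_1(B)R_L(z)P_2(B)R_L(z\pm\omega)$ is trace class for each $z$. It is not: in three dimensions $R_L(z)$ is Hilbert--Schmidt but $P_j(B)R_L(z)$ is not (its Hilbert--Schmidt norm squared is essentially $\sum_k \lambda_k/(\lambda_k+1)^2\sim\sum_k k^{-2/3}=\infty$), so the product of two such factors need not be trace class. Trading $R_L(z)$ for $R_L(z_0)$ via the resolvent identity does not help, since the expansion $R_L(z)=\sum_{k=0}^{N-1}(z-z_0)^k R_L(z_0)^{k+1}+\cdots$ still contains the bad $k=0$ term. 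The paper's device is different and essential: integrate by parts in $z$ along $\Gamma_\omega$, using exponentially decaying antiderivatives of $f_{FD}$, to replace single resolvents by powers $R_L(z)^{-m}$ with $m\ge 3$. Only then does one obtain a genuine product of two Hilbert--Schmidt factors and an integrable trace-norm bound.

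\textbf{Part (iii).} Your two-region split (bulk $\Lambda_{L/2}$ versus an $O(1)$-width shell) combined with a single geometric resolvent identity would work if $B=0$, but it does not survive the constant magnetic field. The commutator $[H_\infty(B),\chi]$ contains $\nabla\chi\cdot\mathbf{P}(B)=\nabla\chi\cdot(\mathbf{P}(0)+B\mathbf{A})$, and on the boundary region $|\mathbf{A}(\x)|\sim L$; this linear growth beats the Combes--Thomas exponential only at distance $\sim L$, so you cannot make the boundary shell thin enough to be a surface term while simultaneously keeping the commutator under control. The paper handles this by a three-layer partition: the boundary tube (of width $\sim L^\alpha$, $\alpha$ small) is further chopped into $\sim L^{2-2\alpha}$ boxes of side $L^\alpha$, and on each box one performs a \emph{local} gauge change $e^{iB\varphi_0(\cdot,\gamma)}$ that converts $\mathbf{A}(\x)$ into $\mathbf{A}(\x-\gamma)$, which is only $O(L^\alpha)$ on that box. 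This is precisely what makes the boundary contribution grow like $L^{2+O(\alpha)}$ rather than $L^3$. The same local-gauge mechanism (together with the flux identity $\varphi_0(\x,\x')+\varphi_0(\x',\gamma)-\varphi_0(\x,\gamma)=\mathrm{fl}(\x,\x',\gamma)$, which is bounded by $|\x-\x'|\cdot|\x'-\gamma|$ and hence controllable) is what drives the proof of \eqref{efel2}; your last paragraph correctly identifies the obstacle but does not supply this mechanism.

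Your treatment of (ii) is in line with the paper's.
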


\noindent {\bf Remark} 1. The formula \eqref{efel22} is only the
starting point 
in the study of the Faraday rotation. A related problem is the diamagnetism 
of Bloch electrons, where the main object is the integrated density of states 
of magnetic Schr\"odinger operators (see \cite{HeSj, HeSj2, Ift}). For a systematic treatment of  
magnetic pseudo-differential operators which generalizes our  magnetic
perturbation theory, see \cite{IMP, MP1, MP2, MP3}.

\noindent {\bf Remark} 2. The Dirichlet boundary conditions are
important for us. Even though we suspect that our main result should also
hold for Neumann conditions and for less regular domains (see
\cite{BF, Fou}), we do not see an easy way to prove it. 

\noindent{\bf Remark} 3. We believe that the method we use in
the proof of \eqref{efel2} can be used in order to obtain a stronger result:
the mapping $B\to s_{B}\in L^\infty(\Omega)$ is smooth and for any $n\geq 1$:
\begin{align}\label{efel2bis}
 \frac{}{}\partial_B^n\sigma_\infty(B)=-\int_\Omega
\partial_B^ns_{B}(\x)d\x=\lim_{L\to \infty} 
\partial_B^n\sigma_{L}(B).
\end{align} 
We leave this statement as an open problem. In the rest of the paper
we give the proof of Theorem \ref{teorema1}.

\subsection{A short description of the proof strategy}

Since the proof is rather long, we list here the
main steps and ideas. Let us start with some general
considerations about the thermodynamic limit.

If we are interested in the thermodynamic limit of a quantum physical quantity, the object we need to control is the trace of the operator representing the corresponding quantity. The basic ideea consists in writing this trace as an integral  of the diagonal value of the operator's integral kernel (Schwartz kernel) over the confining box. This procedure makes the quantum thermodynamic limit look very similar to what happens in classical statistical mechanics. More precisely, we need to show that the difference between the integral kernel for the finite box and the one for the entire space decays sufficiently  rapidly with the distance from the boundary of the box, so that the replacement of the integral kernel for the box with the one corresponding to the entire space gives an error term increasing slower than the volume, which then disappears in the limit. 

It turns out that for the transverse conductivity this kernel is far more
complicated than say the heat kernel - whose behavior has been extensively
studied in the literature. Notice for example, that the integrand in  \eqref{stone5} contains  two resolvents sandwiched with magnetic momentum operators. Thus we need a good control of their integral kernels, in particular when the distance between their arguments increases to infinity, and all that uniformly in the spectral parameter $z$. Since a constant magnetic field is present, the biggest difficulty is to deal with the linear growth of the vector potential. Here, the use of  gauge covariance is crucial.

Now let us list the main ideas of the proof. 
For the first statement of the theorem one simply uses integration by parts with respect to
  $z$ in order to transform the integrand into a product of
  Hilbert-Schmidt operators. 

The proof of the second statement is based on elliptic regularity. 
The main technical difficulty is to control the $z$ behavior
  of all our bounds, especially the exponential localization of the magnetic
  resolvents sandwiched with momentum operators. We also have to control the linear growth of the magnetic
  potential. We turn the operator norm
bounds which we obtain in the Appendix into pointwise bounds for certain
integral kernels. It is a long road using magnetic perturbation
theory, but nevertheless, we use nothing more than well-known Combes-Thomas exponential
bounds, local gauge covariance, the Cauchy-Schwarz inequality, and integration by parts.

The third statement of Theorem \ref{teorema1} contains the main
  result and is proved in Section 4. We start with a bit strange
  three-layered partition of unity defined in
  \eqref{margine1}-\eqref{margine111}. This idea goes back
  at least to \cite{BCD, BC}. The main effort consists in 
  isolating the bulk of $\Lambda_L$ - where only operators defined in the
  whole space will act - from the region close to the
  boundary. Note that in the absence of the magnetic field, it would be enough to work with only two cut-off functions: 
one isolating the bulk from the boundary, and the other one supported in  a tubular neighborhood of the boundary. 
When long-range magnetic fields are present, this is not enough. The tubular neigborhood needs to be 
chopped up in many small pieces, in order to apply local gauge transformations (see below why we need them).  

The central idea in proving \eqref{efel22} is to show 
that the contribution to the total trace of the region
close to the boundary grows slower than
the volume. Technically, this is obtained by approximating the true
resolvent $(H_L(B)-z)^{-1}$ with an operator $U_L(B,z)$ given in
\eqref{geigi4}. $U_L(B,z)$ contains the bulk term, plus a boundary
contribution which consists from a sum of terms each {\it locally}
approximating $(H_L(B)-z)^{-1}$ and containing a specially tailored local gauge
given in \eqref{geigi}. These locally defined vector potentials are
made {\it globally} bounded with the help of our third layer of cut-off
functions $\tilde{\tilde{g}}_\gamma$'s. The switch to the local gauge
is performed through the central identity \eqref{fazam2}. An explanation of why
$U_L(B,z)$ is a convenient approximation for the full resolvent can be
found right after \eqref{geigi6}, and there is the place where we first
fully use the exponential estimates of the Appendix. 

In Proposition \ref{prop22} we prove that we can replace
$(H_L(B)-z)^{-1}$ with $U_L(B,z)$ in the conductivity formula, without
changing the value of the limit. In Proposition \ref{prop22} we show
that only the bulk term from $U_L(B,z)$ will contribute. In
Proposition \ref{prop222} we show that removing the cut-offs only
gives a surface contribution. 

The proof of \eqref{efel2} is heavily based on magnetic
  perturbation theory. Although the technical estimates are
  considerably more involved than at the previous point, the main idea
  is the same: the boundary terms can be discarded. The full power of
  the magnetic phases is used in Lemma \ref{lema30}; an heuristic
  explanation of how and why they manage to kill the linear growth of
  the magnetic potential is given right after \eqref{fazam99}.

\section{Proof of {\rm i}.} 

The integrand in \eqref{efel} is a bounded operator, with an $L^2$
norm bounded by a constant times $|\Re(z)|^2$, uniformly in $L$, as we
can see from \eqref{impuls2} and \eqref{keyest2}. Because $f_{FD}$ has
an exponential decay in $|\Re(z)|$, the integral converges and defines
a bounded operator. Let us note that the integrand is not a trace
class operator under our conditions. But the total integral is a
different matter. The point is that we can integrate by parts with
respect to $z$ by using anti-derivatives of $f_{FD}$ which are still
decaying exponentially at infinity. By doing this at least four
times, we obtain integrals
of the form

\begin{equation}\label{integrparti}
\int_{\Gamma_\omega}
\tilde{f}(z) P_1(B)(H_L(B)-z)^{-m}P_2(B) 
(H_L(B)-z-\omega)^{-n}dz
\end{equation} 
where $m+n\geq 5$, hence $\max\{m,n\}\geq 3$. Assume that $m\geq
3$. Then we can write the above integral as 
\begin{equation}\label{integrparti2}
\int_{\Gamma_\omega}
\tilde{f}(z) P_1(B)(H_L(B)-z)^{-m+2}(H_L(B)-z)^{-2}P_2(B) 
(H_L(B)-z-\omega)^{-n}dz.
\end{equation}
The main point is that $(H_L(B)-z)^{-1}$ is Hilbert-Schmidt since we
can write 
\begin{align}\label{sssa}
&(H_L(B)-z)^{-1} \nonumber \\
&=[(-i\nabla +B{\bf a})_D^2+1]^{-1} \left \{[(-i\nabla
  +B{\bf a})_D^2+1](H_L(B)-z)^{-1}\right \},
\end{align}
and by using \eqref{expdek3} with $\delta=0$, and \eqref{intkern}, we obtain
\begin{align}\label{ssshs}
||(H_L(B)-z)^{-1}||_{B_2}\leq \C \cdot \sqrt{{\rm Vol}(\Lambda_L)} 
\langle \Re(z)\rangle ,
\end{align}
where the above constant does not depend on $L$ and $z$. Thus
$(H_L(B)-z)^{-2}$ is trace class, and the trace norm of the integrand
in \eqref{integrparti2} is bounded by 
$$\C \cdot |\tilde{f}(z)|\cdot \langle \Re(z)\rangle^4 
\cdot  {\rm Vol}(\Lambda_L)$$
where again the constant is uniform in $L$ and $z$. This now is 
integrable on the contour, thus the integral defines a trace class
operator. Moreover, its trace grows at most like the volume of
$\Lambda_L$, hence 
$\limsup_{L\to\infty}|\sigma_L|<\infty$.\qed 

\vspace{0.5cm}

{\bf Remark.} The same type of argument may be used to show that
$\sigma_L(B)$ is smooth in $B$, by repeatedly using the formal identity 
\begin{align}\label{derivrez}
\partial_B(H_L(B)-z)^{-1}=-(H_L(B)-z)^{-1}\{\partial_BH_L(B)\}(H_L(B)-z)^{-1}
\end{align}
in the sense of bounded operators. Note the important fact that
$\partial_BH_L(B)$ will generate some linear growing terms coming
from the magnetic vector potential ${\bf a}(\x)$, therefore the trace
norm of the new integrand will grow like $L^4$. We therefore cannot
conclude here that the derivatives $|\partial_B^n\sigma_L(B)|$ will
admit a finite $\limsup$ when $L\to\infty$.

\section{Proof of {\rm ii}.}

We are going to prove the regularity statement for the kernel without
using the periodicity of $V$, only the fact that the potential is
smooth and bounded on $\R^3$ together with all its derivatives.

The strategy consists in integrating by parts with respect to $z$ many
times, such that we obtain high powers of the resolvent
$(H_\infty(B)-\zeta)^{-N}$. Then we will prove that
$P_j(B)(H_\infty(B)-\zeta)^{-N}P_k(B)$ has a smooth kernel which 
does not grow too fast with $\langle |\zeta|\rangle $. 

Let us now be more precise and start with some technical results. 

\begin{proposition}\label{prop5}
Fix $0<\eta<1$ and choose $z\in\mathbb{C}$ with ${\rm dist}\{z,[0,\infty)\}= \eta>0 $. Let 
$r=\langle |\Re(z)|\rangle $. Then the operator
$ (H_\infty(0)-z)^{-1}$
 has an integral kernel $G_1(\x,\x';z)$ which is smooth away
 from the diagonal $\x=\x'$. There exists $\delta >0$ and some 
$M\geq 1$ such that for
 any multi-index $\alpha\in\mathbb{N}^3$ with $|\alpha|\leq 1$ we have 
the estimate 

\begin{align}\label{intkernrez}
\sup_{\x\neq \x'\in\R^3}|\x-\x'|^{|\alpha|+1}e^{\frac{\delta}
{\langle r\rangle}|\x-\x'|}
|D_{\x}^{\alpha}G_1(\x,\x';z)| =C_1(\alpha,\eta)\langle r\rangle ^M<\infty.
\end{align}
\end{proposition}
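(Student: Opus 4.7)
The plan is to combine a Combes--Thomas conjugation of the resolvent $R(z):=(H_\infty(0)-z)^{-1}$ (giving exponential decay in operator norm at rate $\delta/\langle r\rangle$) with a symmetric resolvent expansion and local elliptic regularity (which upgrade operator-norm bounds to pointwise kernel bounds at the cost of polynomial factors in $\langle r\rangle$).

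First comes the Combes--Thomas step. Fix $\y_0\in\R^3$ and take a smooth Lipschitz function $\rho_{\y_0}$ with $|\nabla\rho_{\y_0}|\leq 1$ that agrees with $|\x-\y_0|$ outside a small ball around $\y_0$. Conjugating by $T_\lambda=e^{\lambda\cdot\rho_{\y_0}}$ gives $T_\lambda H_\infty(0)T_\lambda^{-1}=H_\infty(0)+W_\lambda$, where $W_\lambda=2\lambda\nabla\rho\cdot\nabla+\lambda\Delta\rho-|\lambda\nabla\rho|^2$ is a first-order differential operator. The form inequality $\|\nabla u\|^2\leq \langle H_\infty(0)u,u\rangle+C\|u\|^2$ yields $\|\nabla R(z)\|_{L^2\to L^2}\leq C(\eta)\langle r\rangle^{1/2}$, so $\|W_\lambda R(z)\|\leq 1/2$ whenever $|\lambda|\leq \delta/\langle r\rangle$ for $\delta=\delta(\eta)>0$ small enough. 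A Neumann-series argument then produces $\|T_\lambda R(z) T_\lambda^{-1}\|_{L^2\to L^2}\leq 2/\eta$.

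Second, convert the operator-norm bound into a pointwise kernel bound via a symmetric expansion. Setting $S:=(H_\infty(0)+1)^{-1}$, iterating $R(z)=S+(z+1)SR(z)$ together with its right analog yields, for any $N\geq 1$,
\begin{equation*}
R(z) \;=\; \sum_{k=0}^{2N-1}(z+1)^{k}\,S^{k+1}\;+\;(z+1)^{2N}\,S^{N}R(z)\,S^{N}.
\end{equation*}
The singular structure at the diagonal is captured entirely by the explicit terms $S^{k+1}$: since $V$ is smooth and bounded and $H_\infty(0)+1\geq 1$, a Born-series comparison with $(-\Delta+1)^{-(k+1)}$ combined with a $z$-independent Combes--Thomas estimate for $H_\infty(0)+1$ shows that $S$ has a diagonal singularity of exactly order $|\x-\x'|^{-1}$, first derivatives of order $|\x-\x'|^{-2}$, and uniform exponential decay, while $S^{k}$ for $k\geq 2$ is pointwise bounded. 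This is precisely the claimed $|\x-\x'|^{-|\alpha|-1}$ singularity for $|\alpha|\leq 1$, and the coefficients $(z+1)^{k}$ provide the polynomial growth in $\langle r\rangle$.

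For the remainder take $N=2$: the Sobolev embedding $H^{4}\hookrightarrow C^{1}(\R^{3})$ gives $S^{2}:L^{2}\to C^{1}$, i.e., $D_\x^{\alpha}S^{2}:L^{2}\to L^{\infty}$ for $|\alpha|\leq 1$; by self-adjointness of $S$ and duality, $S^{2}:L^{1}\to L^{2}$. Hence both $S^{2}R(z)S^{2}$ and its first $\x$-derivative map $L^{1}\to L^{\infty}$ with operator norm $\leq C/\eta$, so their kernels are pointwise bounded by $C/\eta$. Applying Step~1 to each factor in the sandwich $T_\lambda S^{2}T_\lambda^{-1}\cdot T_\lambda R(z)T_\lambda^{-1}\cdot T_\lambda S^{2}T_\lambda^{-1}$ delivers exponential decay of the remainder at rate $\delta/\langle r\rangle$. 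The main obstacle is the careful bookkeeping of the polynomial factors $\langle r\rangle^{M}$ that accumulate from $(z+1)^{2N}$ and from each elliptic-regularity / Sobolev-embedding estimate, while verifying that the Combes--Thomas rate survives through all three conjugated factors in the remainder; this yields the stated bound with some explicit (possibly large) finite exponent $M$.
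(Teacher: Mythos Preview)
Your proof is correct and shares the same skeleton as the paper's (Combes--Thomas for the exponential rate $\delta/\langle r\rangle$, a resolvent expansion around a fixed spectral point to isolate the $z$-dependence, and elliptic regularity/Sobolev embedding to pass from operator bounds to pointwise kernel bounds), but the technical implementation differs in two places worth noting.

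First, for the diagonal singularity of the reference kernel at $z=-1$, the paper builds a pseudodifferential parametrix for $H_\infty(0)+\lambda$ (the $S^{-2}_{1,0}$ symbol calculus of Grigis--Sj\"ostrand) and reads off the $|\x-\x'|^{-|\alpha|-1}$ behavior from the leading symbol; you instead compare $S=(H_\infty(0)+1)^{-1}$ to the explicit free kernel $(-\Delta+1)^{-1}$ via a Born/second-resolvent identity. Both are standard; your route avoids any microlocal machinery, while the paper's parametrix gives full smoothness off the diagonal for free (you only explicitly establish $C^1$, though higher regularity follows by the same mechanism with larger $N$).

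Second, and more substantively, your \emph{symmetric} expansion $R(z)=\sum_{k<2N}(z+1)^kS^{k+1}+(z+1)^{2N}S^NR(z)S^N$ is cleaner than the paper's one-sided version $R(z)=\sum_{k<N}(z+1)^kS^{k+1}+(z+1)^N S^N R(z)$: the sandwich $S^NR(z)S^N$ immediately gives an $L^1\to L^\infty$ map (via $S^2:L^2\to H^4\hookrightarrow C^1$ and its dual $S^2:L^1\to L^2$), so the remainder kernel and its first derivative are pointwise bounded in one stroke. The paper instead first proves an $L^2$ bound on kernel columns (Riesz representation), then bootstraps to $L^\infty$ via the Cauchy--Schwarz identity for $G_2=G_1\ast G_1$ and the integral formula $G_1(\cdot;-\lambda)=\int_\lambda^\infty G_2(\cdot;-\lambda')\,d\lambda'$, and handles $D_j$ by commuting it through several resolvents. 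Your argument is shorter here; the paper's route is perhaps more hands-on about the constants. Either way, the polynomial bookkeeping in $\langle r\rangle$ you flag as ``the main obstacle'' is indeed just bookkeeping: every Combes--Thomas, Sobolev, and commutator step contributes at most a fixed power, and there are finitely many steps.
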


\noindent{Proof.} The result without the exponential decay is
essentially contained in \cite{Sj}. 
The symbol of $H_\infty(0)$, denoted by $h_0(\x,\xi)$ belongs to
$S^2_{1,0}(\R^3\times \R^3)$, $H_\infty(0)\in L^2_{1,0}(\R^3)$
(see Example 3.1 in \cite{Sj}), and is uniformly elliptic.

Fix $\lambda >0$ large enough. We can apply Theorem 4.1 in \cite{Sj} 
and construct a parametrix for
$H_\infty(0)+\lambda $ starting from the symbol
$q_0(\x,\xi):=1/(h(\x,\xi)+\lambda )\in
S^{-2}_{1,0}(\R^3\times \R^3)$. The symbol giving the parametrix is an
asymptotic sum of symbols, starting with $q_0$, then the next one is
in $S^{-3}_{1,0}$ and so on. Each term gives a contribution to the
integral kernel of the parametrix. The most singular contribution is
(in the sense of oscillatory integrals):
$$\frac{1}{(2\pi)^3}\int_{\R^3}e^{i(\x-\x')\cdot\xi}q_0(\x,\xi)d\xi.$$ 
We only have to consider a few terms besides this one, 
since symbols in $S^{-N}_{1,0}$
for large $N$ generate more and more regular kernels at the diagonal. 
By standard ``integration by parts'' tricks, and using the fact that
we work in three dimensions, one can prove the
estimate 
\begin{align}\label{intkernrez2}
\sup_{\x\neq \x'\in\R^3}|\x-\x'|^{|\alpha|+1}
|D_{\x}^{\alpha}G_1(\x,\x';-\lambda )| =\C (\alpha,\lambda 
)<\infty.
\end{align}
In fact, outside the region $|\x-\x'|\geq 1$ we can integrate by parts
several times with respect to $\xi$ and prove that
$G_1(\x,\x';-\lambda )$ decays faster than  any power of 
$\langle|\x-\x'|\rangle $. But the Combes-Thomas method will give a
better, exponential localization. 

The important thing is that the $L^2$ estimates from
the Combes-Thomas argument can be transferred into point-wise
estimates for the kernel. Let us now prove this. 

Using \eqref{intkern}at $L=\infty$ and $B=0$, 
together with the triangle and Cauchy-Schwarz 
inequalities, we get that $(-\Delta+\lambda)^{-1}$ with exponential 
weights maps $L^2$ into
$L^\infty$. The key estimate is ($0<c<1$)
$$e^{-c\sqrt{\lambda}|\x
  -\x_0|}|K_\infty(\x,\x')|e^{c\sqrt{\lambda}|\x' -\x_0|}\leq 
\frac{e^{-(1-c)\sqrt{\lambda}|\x
  -\x'|}}{4\pi|\x-\x'|}.$$

Since we can write:
\begin{align}\label{nou1}
(H_\infty(0)+\lambda)^{-1}=(-\Delta+\lambda)^{-1}
(-\Delta+\lambda) (H_\infty(0)+\lambda)^{-1},
\end{align}
using \eqref{expcor2} (at $L=\infty$ and $B=0$), it
follows that this resolvent with exponential weights is a
bounded map from $L^2$ into $L^\infty$. More
precisely, for any $\x_0\in\R^3$, there exists $0<c<1$ small enough such
that:
\begin{align}\label{intkernrez5}
\sup_{\x_0}||e^{-c \sqrt{\lambda}\langle \cdot-\x_0 \rangle} (H_\infty(0)+\lambda)^{-1}e^{c\sqrt{\lambda}\langle
  \cdot -\x_0\rangle}||_{B(L^2,L^\infty)}\leq \C,\quad \lambda\geq \lambda_0.
\end{align} 
Now if we look at the map 
$$C_0^\infty(\R^3) \ni\Psi\to
\int_{\R^3}G_1(\x_0,\x;-\lambda )
e^{\delta_\lambda\langle \x-\x_0\rangle }\Psi(\x)d\x,$$
(it makes sense to fix $\x_0$ since the resolvent maps smooth
functions into smooth functions), we see that by using
\eqref{intkernrez5} we can extend this map to a linear and bounded 
functional on $L^2$. Riesz' representation theorem then gives: 
\begin{align}\label{intkernrez77}
\sup_{\x_0\in\R^3}||e^{c\sqrt{\lambda}\langle \cdot-\x_0\rangle
}G_1(\x_0,\cdot;-\lambda
)||_{L^2}=\sup_{\x_0\in\R^3}||e^{c\sqrt{\lambda}\langle
  \cdot-\x_0\rangle }G_1(\cdot,\x_0;-\lambda )||_{L^2}\leq \C ,
\end{align}
uniformly in $\lambda\geq \lambda_0$. 
Using this, together with the Cauchy-Schwarz and the triangle inequality, 
we get that the
integral kernel $G_2(\x,\x';-\lambda )$ of
$(H_\infty(0)+\lambda)^{-2}$ obeys uniformly in
$\lambda\geq \lambda_0$:
\begin{align}\label{intkernrez773}
&\sup_{\x,\x'}e^{c\sqrt{\lambda} |
  \x-\x'| }|G_2(\x,\x';-\lambda )|\\
&\leq \sup_{\x,\x'}\int_{\R^3}|e^{c\sqrt{\lambda}\langle
  \x-\x''\rangle }G_1(\x,\x'';-\lambda )e^{c\sqrt{\lambda}\langle
  \x''-\x'\rangle }G_1(\x'',\x';-\lambda )|d\x''\leq \C
.\nonumber 
\end{align}

Now if $|\x-\x'|\geq 1$, write 
\begin{align}\label{intkernrez3w}
G_1(\x,\x';-\lambda )=\int_{\lambda}^{\infty}G_2(\x,\x';-\lambda_1 )d\lambda_1,
\end{align}
which together with \eqref{intkernrez773} and the integrability of
$e^{-\frac{c}{2}\sqrt{\lambda_1}}$ imply that 
\begin{align}\label{intkernrez774}
\sup_{|\x-\x'|\geq 1 }e^{\frac{c}{2}\sqrt{\lambda} |
  \x-\x'| }|G_1(\x,\x';-\lambda )|\leq \C.
\end{align}

We can also deal with derivatives with respect to $\x$, by showing that the
operator $D_j (H_\infty(0)+\lambda)^{-N}$ ($N$ large enough) has an
integral kernel $D_jG_N(\x,\x';-\lambda )$ obeying the same type of
estimate as in \eqref{intkernrez773}. This is done by commuting $D_j$
several times with a few resolvents; let us see how it works.  

First, by commuting we have:
\begin{align}\label{revised2}
& D_j
(H_\infty(0)+\lambda)^{-1}=(H_\infty(0)+\lambda)^{-1}D_j+(H_\infty(0)+\lambda)^{-2}T_1,\\
&T_1=[H_\infty(0),D_j]-[H_\infty(0),[H_\infty(0),D_j]](H_\infty(0)+\lambda)^{-1},\nonumber 
\end{align}
where $T_1$ is bounded due to the fact that $[H_\infty(0),D_j]$ is bounded, while $[H_\infty(0),[H_\infty(0),D_j]]$ is relatively bounded with respect to $H_\infty(0)$. 

Second, by commuting twice we have:
\begin{align}\label{revised3}
D_j
(H_\infty(0)+\lambda)^{-2}=(H_\infty(0)+\lambda)^{-2}D_j+(H_\infty(0)+\lambda)^{-3}T_1+(H_\infty(0)+\lambda)^{-2}T_1(H_\infty(0)+\lambda)^{-1}.
\end{align}

This identity allows us to write:
\begin{align}\label{revised1}
& D_j
(H_\infty(0)+\lambda)^{-N+2}=(H_\infty(0)+\lambda)^{-2}T(H_\infty(0)+\lambda)^{-N+5},\\
&T=\{D_j+(H_\infty(0)+\lambda)^{-1}T_1+T_1(H_\infty(0)+\lambda)^{-1}\}(H_\infty(0)+\lambda)^{-1},\nonumber 
\end{align}
where $T$ with exponential weights is bounded from $L^2$ to
$L^2$.  
Then we prove that the integral kernel of $D_j
(H_\infty(0)+\lambda)^{-N+2}$ obeys an $L^2$ estimate like
in \eqref{intkernrez77}, then from the
identity
$$ D_j
(H_\infty(0)+\lambda)^{-N}= D_j
(H_\infty(0)+\lambda)^{-N+2}(H_\infty(0)+\lambda)^{-2}$$
we get the needed $L^\infty$ estimate by mimicking
\eqref{intkernrez773}.

Then we write 
$$D_j G_1(\x,\x';-\lambda
)=\frac{(-1)^{N}}{(N-1)!}\int_{\lambda}^{\infty}d\lambda_1\int_{\lambda_1}^{\infty}
d\lambda_2\dots  \int_{\lambda_{N-1}}^{\infty}d\lambda_ND_j G_N(\x,\x';-\lambda' )$$
and propagate the exponential decay over the integrals in
$\lambda$.

Therefore we can state the first result regarding the exponential
localization. For $\lambda$ large enough we have:
\begin{align}\label{intkernrez3}
\sup_{\x\neq \x'\in\R^3}|\x-\x'|^{|\alpha|+1}e^{|\x-\x'|}
|D_{\x}^{\alpha}G_1(\x,\x';-\lambda )| =\C(\alpha,\lambda 
)<\infty.
\end{align}

Now let us investigate the $z$ dependence. Let us apply the resolvent
identity several times and get ($N\geq 2$):
\begin{align}\label{intkernrez4}
(H_\infty(0)-z)^{-1}&=(H_\infty(0)+\lambda)^{-1}+(z+\lambda)
(H_\infty(0)+\lambda)^{-2}+\dots\nonumber
\\
&+(z+\lambda)^N(H_\infty(0)+\lambda)^{-N}(H_\infty(0)-z)^{-1}.
\end{align}
The idea is to keep the $z$ dependence to the right in the last term, and to keep a regular kernel to the 
left. We start with a norm estimate. From the usual resolvent
identity:
$$(H_\infty(0)-z)^{-1}=(H_\infty(0)+\lambda)^{-1}+(z+\lambda)
(H_\infty(0)+\lambda)^{-1}(H_\infty(0)-z)^{-1},$$
the use \eqref{intkernrez77} and \eqref{expdek} (with
$L=\infty$) provides us  
with some $N_1>N$ such that:
\begin{align}\label{intkernrez6}
\sup_{\x_0}||e^{-\frac{\delta}{r}\langle \cdot-\x_0 \rangle}
 (H_\infty(0)-z)^{-1}
e^{\frac{\delta}{r}\langle \cdot-\x_0 \rangle} ||_{B(L^2,L^\infty)}
\leq \C(\eta)\cdot r^{N_1}.
\end{align}

This estimate implies that the map 
(initially defined on compactly supported functions)
$$L^2(\R^3) \ni\Psi\to
\int_{\R^3}G_1(\x_0,\x;\overline{z} )
e^{\frac{\delta}{r}\langle \x-\x_0\rangle }\Psi(\x)d\x\; \in \mathbb{C},$$
is a bounded linear functional. Riesz' representation theorem leads us to: 
\begin{align}\label{intkernrez7}
\sup_{\x_0\in\R^3}||e^{\frac{\delta}{r}\langle \cdot-\x_0\rangle }G_1(\x_0,\cdot;\overline{z} )||_{L^2}=\sup_{\x_0\in\R^3}||e^{\frac{\delta}{r}\langle \cdot-\x_0\rangle }G_1(\cdot,\x_0;z )||_{L^2}\leq \C(\eta)\cdot r^{N_1}.
\end{align}
We are only left with the case in which we have a derivative on the
left. Using \eqref{intkernrez7} in 
\eqref{intkernrez4} and the other results we have obtained for the
kernel where $z=\lambda$, it is not hard to 
obtain the exponential decay claimed in \eqref{intkernrez}. \qed 

\vspace{0.5cm}

\begin{proposition}\label{prop6}
Assume that $\alpha\in\{0,1\}$ and $0<\eta<1$. Let ${\rm
  dist}\{z,[0,\infty)\}=\eta$ and $N$ is large enough. Then 
the operator $ P_1^\alpha(B)(H_\infty(B)-z)^{-N}P_2^{1-\alpha}(B)$ has a 
jointly continuous
integral kernel $K_{N,B}(\x,\x';z)$, and there exists $\delta>0$ small
enough and $M$ large enough such that:
\begin{equation}\label{31415}
\sup_{\x,\x'\in \R^3}e^{\frac{\delta}{\langle r\rangle}|\x-\x'|}
|K_{N,B}(\x,\x';z)|\leq
const(N,B,\eta)\cdot \langle r\rangle ^M.
\end{equation}
\end{proposition}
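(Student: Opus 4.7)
I would extend the argument of Proposition \ref{prop5} in three directions simultaneously: from $H_\infty(0)$ to $H_\infty(B)$, from a single resolvent to an arbitrarily high power $N$, and to accommodate one momentum operator $P_j(B)$ on either side. The backbone of the argument --- parametrix plus Combes-Thomas plus iterated resolvent identity --- is preserved; the new content lies in handling the magnetic field and inserting $P_j(B)$. By taking adjoints one may assume $\alpha=1$ throughout, so there is only one case to treat.

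First I would establish the direct analogue of Proposition \ref{prop5} for $(H_\infty(B)+\lambda)^{-1}$ with $\lambda$ large. The symbol $(\xi+B\mathbf{a}(\x))^2+V(\x)+\lambda$ is uniformly elliptic in $\xi$ and smooth in $\x$, so Sj\"ostrand's parametrix construction yields the pointwise off-diagonal bound \eqref{intkernrez2}; the linear growth of $\mathbf{a}$ in $\x$ is not an obstacle because only pointwise bounds (not uniform-in-$\x$ H\"ormander class bounds) are sought. For the weighted $L^2\to L^\infty$ mapping property, I would replace the explicit free-Laplacian kernel bound of \eqref{nou1} by the diamagnetic inequality on the heat kernel of $\mathbf{P}^2(B)_D+\lambda$, which upon Laplace transform in the time parameter yields a pointwise estimate of the same exponential-over-$4\pi|\x-\x'|$ form. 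The Combes-Thomas step extends without change: for $f=c\langle\cdot-\x_0\rangle$,
\begin{align*}
e^{-f}H_\infty(B)e^f - H_\infty(B) = -\sum_{k}\bigl[2i(\partial_k f)P_k(B) + \partial_k^2 f + (\partial_k f)^2\bigr],
\end{align*}
whose $H_\infty(B)^{1/2}$-relative norm is controlled by $\|\nabla f\|_\infty$ and $\|\Delta f\|_\infty$ independently of $B\in[0,1]$. Together these give magnetic analogues of \eqref{intkernrez5}--\eqref{intkernrez3}.

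For higher powers $(H_\infty(B)+\lambda)^{-N}$ I would convolve $N$ copies of the single-resolvent kernel; in $\R^3$ the on-diagonal $|\x-\x'|^{-1}$ singularity becomes integrable once $N\geq 3$, producing a jointly continuous kernel with the same exponential decay rate. The $z$-dependence is then transported via the iterated resolvent identity \eqref{intkernrez4} combined with a weighted Combes-Thomas bound for $(H_\infty(B)-z)^{-1}$ as in \eqref{intkernrez6}--\eqref{intkernrez7}; the rate must be relaxed to $\delta/\langle r\rangle$ in order to absorb the $H_\infty(B)^{1/2}$-relative norm of the weighted perturbation, and the polynomial factor $\langle r\rangle^M$ is accumulated in the process.

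To insert $P_j(B)$ I would mimic the commutator manipulation \eqref{revised2}--\eqref{revised1}. The structural inputs are: $[H_\infty(B),P_j(B)]=-2iB\sum_k\epsilon_{kj3}P_k(B)-i\partial_j V$ is a first-order magnetic differential operator with bounded scalar coefficients, hence $H_\infty(B)^{1/2}$-bounded; and its double commutator is again first-order with bounded scalar coefficients, hence relatively bounded by $H_\infty(B)$. These are precisely the hypotheses under which the operator $T$ in \eqref{revised1} is bounded in the magnetic case. Composing with enough resolvents to turn the resulting weighted $L^2$ kernel estimate into an $L^\infty$ one (mimicking \eqref{intkernrez773}) and then applying the $\lambda$-integration trick of \eqref{intkernrez3w} delivers \eqref{31415}. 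The main obstacle I anticipate is the first step: justifying the parametrix and the $L^2\to L^\infty$ bound for the magnetic operator despite the linearly growing vector potential $\mathbf{a}$. The saving grace is that every quantity actually entering an estimate is either pointwise in $\x$ (for the kernel) or involves only derivatives of $\mathbf{a}$ (in the commutators), both of which are bounded by smoothness --- the bare magnitude of $\mathbf{a}$ itself never appears in a controlling constant.
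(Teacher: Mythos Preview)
Your overall strategy is sound, but it takes a genuinely different route from the paper and contains one step that needs repair.

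The paper does \emph{not} redo the parametrix construction for the magnetic operator. Instead it uses magnetic perturbation theory: it introduces the gauge phase $\varphi_0(\x,\y)=-\mathbf{A}(\y)\cdot\x$ and the operator $S_B(z)$ with kernel $e^{iB\varphi_0(\x,\x')}G_1(\x,\x';z)$, where $G_1$ is the \emph{zero-field} resolvent kernel already controlled by Proposition \ref{prop5}. The key identity $(H_\infty(B)-z)S_B(z)=1+B\,T_B(z)$ holds with a remainder $T_B$ whose kernel involves only $\mathbf{A}(\x-\x')$, not $\mathbf{A}(\x)$; since $|\mathbf{A}(\x-\x')|\leq|\x-\x'|$, this linear growth is absorbed by the exponential decay of $G_1$. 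From $(H_\infty(B)-z)^{-1}=S_B(z)-B\hat T_B(z)(H_\infty(B)-z)^{-1}$ the magnetic kernel bound \eqref{fazam7} follows directly from the zero-field one, with no new pseudodifferential work. This phase factorization is also reused heavily in Section 4, so Proposition \ref{prop6} is partly setting up later machinery; your approach would prove the proposition but would not produce this tool.

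Your direct approach has a gap exactly where you anticipate trouble. The claim that ``the linear growth of $\mathbf{a}$ is not an obstacle because only pointwise bounds are sought'' is not right: the target bound $\sup_{\x\neq\x'}|\x-\x'|^{|\alpha|+1}|D^\alpha G_1^B(\x,\x';-\lambda)|<\infty$ \emph{is} a uniform-in-$\x$ statement, and Theorem 4.1 in \cite{Sj} requires the symbol to lie in $S^2_{1,0}$, which $(\xi+B\mathbf{a}(\x))^2+V(\x)$ does not since $|h(\x,\xi)|\not\leq C\langle\xi\rangle^2$ uniformly in $\x$. The fix is precisely gauge covariance: for each fixed $\x'$, conjugation by $e^{iB\varphi_0(\cdot,\x')}$ replaces $\mathbf{A}(\x)$ by $\mathbf{A}(\x-\x')$, and then the parametrix (or simply a shift $\xi\mapsto\xi-B\mathbf{a}(\x)$ in the oscillatory integral) yields uniform kernel bounds. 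Once you insert this step your argument goes through, but you are then essentially rediscovering the paper's magnetic-phase method. Your remaining ingredients --- the diamagnetic inequality for the $L^2\to L^\infty$ step, Combes--Thomas with $P_k(B)$ in place of $D_k$, and the commutator algebra $[H_\infty(B),P_j(B)]=-2iB\sum_k\epsilon_{kj3}P_k(B)+i\partial_jV$ for inserting $P_j(B)$ --- are correct and parallel what the paper does after \eqref{fazam7}.
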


\noindent{\bf Proof.} Although this particular result might be
obtained with other methods, we will employ the magnetic perturbation
theory as developed in \cite{CN, BC, Nen}. For different approaches involving 
magnetic pseudo-differential calculus, 
see \cite{HeSj, HeSj2, MP1, MP2, IMP}.

We can assume that the magnetic vector potential is expressed in the
transverse gauge, given in \eqref{simg}. Define the antisymmetric magnetic 
gauge
phase 
\begin{align}\label{fazam}
\varphi_0(\x,\y):=-{\bf A}(\y)\cdot \x=\frac{1}{2} (y_2 x_1 - y_1
x_2)=\frac{1}{2}{\bf e}_3\cdot (\x\wedge \y),
\end{align}
where ${\bf e}_3$ denotes the unit vector $(0,0,1)\in \R^3$. 
Then we have the following identity 
(true for instance on Schwartz functions), 
valid for every vector $\y$ kept fixed in $\R^3$:
\begin{equation}\label{fazam2}
\{{\bf P}_\x(0)+B {\bf
  A}(\x)\}e^{iB\varphi_0(\x,\y)}=e^{iB\varphi_0(\x,\y)}\{{\bf P}_\x(0)+B
{\bf A}(\x -\y )\}.
\end{equation}
For every $z\in \mathbb{C}\setminus [0,\infty)$ define
\begin{equation}\label{fazam3}
\mathcal{S}_B(\x,\x';z):=e^{iB\varphi_0(\x,\x')}G_1(\x,\x';z ). 
\end{equation}
This integral kernel generates an $L^2$ bounded operator (use
the estimate \eqref{intkernrez} and employ the Schur-Holmgren criterion). We denote this operator by
$S_B(z)$. There are two important facts related to this kernel. 
First, 
$G_1(\x,\x';z )$ solves the distributional equation
$$({\bf P}_\x^2(0)+V(\x)-z)G_1(\x,\x';z )=\delta(\x -\x').$$ 
Second, one can prove that $S_B(z)$ maps Schwartz 
functions into Schwartz functions. 
Moreover, employing \eqref{fazam2} and integrating by parts, we can establish an identity which holds 
at first in 
the weak sense on Schwartz functions:
\begin{equation}\label{fazam4''}
\langle(H_\infty(B)-\overline{z})\Psi,S_B(z)\Xi\rangle =\langle \Psi, (1+B\;T_B(z))\Xi\rangle ,
\end{equation}
where $T_B(z)$ is the operator generated by the following integral kernel:
\begin{align}\label{fazam4}
&\mathcal{T}_B(\x,\x';z):=e^{iB\varphi_0(\x,\x')} \\ 
&\cdot \left\{ \frac{}{} -2i {\bf A}(\x
  -\x' )\cdot \nabla_\x G_1(\x,\x';z )+ B|{\bf A}(\x-\x'
  )|^2G_1(\x,\x';z ) \right \}.\nonumber 
\end{align}
Using \eqref{intkernrez}, and the fact that $|{\bf A}(\x-\x'
  )|\leq |\x-\x'|$, we obtain the following point-wise estimate true
  for all $\x\neq \x'$:
\begin{align}\label{fazam5}
\max\{ |\mathcal{S}_B(\x,\x';z)|,|\mathcal{T}_B(\x,\x';z)|\}\leq \C 
\frac{e^{-\frac{\delta}{\langle r\rangle}|\x-\x'|}}{|\x-\x'|}\cdot 
\langle r\rangle ^M .
\end{align}
Clearly,  $T_B(z)$ can be extended to a bounded operator. 

Now let us prove that \eqref{fazam4''} also holds true in the strong sense. Because $H_\infty(B)$ is 
essentially self-adjoint on 
the set of Schwartz functions,  \eqref{fazam4''} can be extended to any $\Xi\in L^2$ and any 
$\Psi\in {\rm Dom}(H_\infty(B))$. It means that the range of $S_B(z)$ belongs to the domain of 
$H_\infty(B)$ and 
\begin{equation}\label{fazam4'}
(H_\infty(B)-z)S_B(z) =1+B\;T_B(z).
\end{equation}

At this point we can establish the following identity, valid for all $z$ of
interest:
\begin{align}\label{fazam6}
(H_\infty(B)-z)^{-1}&=S_B(z)- B(H_\infty(B)-z)^{-1}T_B(z).
\end{align}
Denote by $\hat{T}_B(z):=T^*_B(\overline{z})$ the bounded operator generated by the kernel 
$$\hat{\mathcal{T}}_B(\x,\x';z):=\overline{\mathcal{T}_B(\x',\x;\overline{z})}.$$ Now replace $z$ with $\overline{z}$ in \eqref{fazam6} and then take the adjoint. We obtain:
\begin{align}\label{fazam667}
(H_\infty(B)-z)^{-1}&=S_B(z)- B\hat{T}_B(z)(H_\infty(B)-z)^{-1}.
\end{align}

Denote by $K_{1,B}(\x,\x';z)$ the integral kernel of
$(H_\infty(B)-z)^{-1}$. With \eqref{fazam667} as starting point, together with \eqref{expdek}, we can use the same argument as in the zero magnetic field case, in order to show that the resolvent 
(with exponential weights) maps 
$L^2$ into $L^\infty$. Thus its kernel obeys an estimate like in \eqref{intkernrez7}. Moreover, 
\eqref{fazam667} implies via the Cauchy-Schwarz inequality
that  
\begin{align}\label{fazam7}
|K_{1,B}(\x,\x';z)|\leq \C
\frac{e^{-\frac{\delta}{\langle r\rangle}|\x-\x'|}}{|\x-\x'|}\cdot 
\langle r\rangle ^M
\end{align}
Moreover, we can repeat the arguments from the proof of Proposition
\ref{prop5} and get the boundedness and joint continuity for the
kernel of $P_j(B)(H_\infty(B)-z)^{-N}$ for large $N$, since all we
have to do is to change $D_j$ with  $P_j(B)$ and to notice that the
formal commutator $[P_j(B),H_\infty(B)]$ is only linear in
$P_k(B)$'s, therefore  $P_j(B)$ ''commutes well'' with
$(H_\infty(B)-z)^{-1}$. 

Now let us show that $(H_\infty(B)-z)^{-1}$ maps $L^2$ into H\"older 
continuous functions. In fact, one can prove the
following estimate: 

\begin{lemma}\label{lemahoelder} Fix a compact set
$U\subset\R^3$, and fix $\beta\in (0,1/2)$. Take 
$\psi$ with $||\psi||_{L^2}=1$. 
Then there exist two positive constants $C$ and $M$ such that 
\begin{align}\label{fazam8}
&\sup_{z\in \Gamma_\omega}\langle r\rangle ^{-M}
|\{(H_\infty(B)-z)^{-1}\psi\}(\x)-\{(H_\infty(B)-z)^{-1}\psi\}(\y)|
\nonumber
\\
&\leq C \cdot |\x-\y|^\beta,
\end{align}
for any $\x,\y\in U$. The same estimate holds true for $S_B(z)$.
\end{lemma}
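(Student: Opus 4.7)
\vspace{0.3cm}

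\noindent\textbf{Proof plan.} The strategy I would take is to combine interior elliptic regularity with the Sobolev embedding $H^2(\R^3)\hookrightarrow C^{0,\beta}(\R^3)$, which in three dimensions holds for every $\beta\in(0,1/2)$ since $2-3/2=1/2$. This is precisely the reason the claimed exponent $\beta$ must lie strictly below $1/2$.

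Set $u:=(H_\infty(B)-z)^{-1}\psi$. First I would fix bounded neighbourhoods $U\subset\subset W\subset\subset W'$ and a cut-off $\chi\in C_0^\infty(W)$ with $\chi\equiv 1$ on a neighbourhood of $U$. On the compact set $W'$ the vector potential $\ba$ and the potential $V$ are smooth and bounded together with all their derivatives, so expanding
\begin{equation*}
H_\infty(B)=-\Delta-2iB\ba\cdot\nabla-iB(\nabla\cdot\ba)+B^2|\ba|^2+V
\end{equation*}
and using the distributional identity $(H_\infty(B)-z)u=\psi$ rewrites $-\Delta(\chi u)$ as a sum of terms supported in $W$, each consisting of a smooth bounded coefficient (the $z$-term growing at most like $\langle r\rangle$) times $u$, $\nabla u$ or $\psi$. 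Standard $L^2$ elliptic regularity for $-\Delta$ on $\R^3$ then yields
\begin{equation*}
\|\chi u\|_{H^2(\R^3)}\le \C\cdot\langle r\rangle\cdot\bigl(\|u\|_{L^2(W)}+\|\nabla u\|_{L^2(W)}+\|\psi\|_{L^2}\bigr),
\end{equation*}
with constants depending only on $U,W,W',B$ and $V$.

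The two remaining $L^2$ norms are controlled as follows. Because $\mathrm{dist}\{z,[0,\infty)\}=\eta$ and $H_\infty(B)$ is self-adjoint with non-negative spectrum, $\|u\|_{L^2}\le\eta^{-1}$. Taking the real part of $\langle u,(H_\infty(B)-z)u\rangle=\langle u,\psi\rangle$ and using $H_\infty(B)={\bf P}(B)^2+V$ gives
\begin{equation*}
\|{\bf P}(B)u\|_{L^2}^2=\Re(z)\|u\|^2-\langle u,Vu\rangle+\Re\langle u,\psi\rangle\le\C\langle r\rangle,
\end{equation*}
and since $|\ba|$ is bounded on $W$, this implies $\|\nabla u\|_{L^2(W)}\le\C\langle r\rangle^{1/2}$. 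Putting everything together yields $\|\chi u\|_{H^2}\le\C\langle r\rangle^M$ for some $M$, and applying the Sobolev embedding to $\chi u$ produces the desired H\"older estimate for $u$ on $U$.

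The statement for $S_B(z)$ follows by reduction. The identity \eqref{fazam4'} reads $S_B(z)=(H_\infty(B)-z)^{-1}(1+BT_B(z))$, and the Schur-Holmgren test applied to \eqref{fazam5} shows that $T_B(z)$ is bounded on $L^2$ with norm $\le\C\langle r\rangle^{M'}$ for some $M'$. Hence $S_B(z)\psi$ equals the resolvent applied to a vector of $L^2$-norm $\le\C\langle r\rangle^{M'}$, and the first half of the proof applies. The main obstacle I anticipate is not any single step but rather the careful bookkeeping of polynomial powers of $\langle r\rangle$ through the elliptic estimate; the compact localization performed at the outset is crucial because it decouples the $H^2$ bound from the linear growth of $\ba$ at infinity, which is also why the argument does not produce a uniform-in-$\x$ H\"older bound on all of $\R^3$.
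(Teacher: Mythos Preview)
Your argument is correct and rests on the same two ingredients as the paper's proof: local $H^2$ regularity for functions in the domain of $H_\infty(B)$, together with the Sobolev embedding $H^2_{\mathrm{loc}}(\R^3)\hookrightarrow C^{0,\beta}_{\mathrm{loc}}$ for $\beta<1/2$. The only organizational difference is in how the $z$-dependence is handled. You work directly at the spectral point $z\in\Gamma_\omega$ and track the powers of $\langle r\rangle$ through the interior elliptic estimate, which requires the auxiliary quadratic-form computation bounding $\|\nabla u\|_{L^2(W)}$. The paper instead fixes a single large $\lambda>0$, obtains the H\"older bound once for $(H_\infty(B)+\lambda)^{-1}\psi$ (where no $z$-tracking is needed), and then invokes the resolvent identity
\[
(H_\infty(B)-z)^{-1}=(H_\infty(B)+\lambda)^{-1}+(z+\lambda)(H_\infty(B)+\lambda)^{-1}(H_\infty(B)-z)^{-1},
\]
so that the $\langle r\rangle$-growth enters only through the scalar factor $z+\lambda$ and the operator norm $\|(H_\infty(B)-z)^{-1}\|$. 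This buys a shorter argument at the cost of one extra algebraic step; your direct route is slightly longer but more self-contained. For $S_B(z)$ both proofs use the same reduction via $S_B(z)=(H_\infty(B)-z)^{-1}(1+BT_B(z))$.
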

\noindent{\bf Proof.}  The domain of $H_\infty(B)$ is locally $H^2$, 
hence for some positive $\lambda$, the function 
$(H_\infty(B)+\lambda)^{-1}\psi$ is locally $H^2$. From the Sobolev 
embedding lemma, we obtain that $(H_\infty(B)+\lambda)^{-1}\psi$ is 
$\beta$-H\"older continuous for every $\beta\in [0,1/2)$. The estimate 
\eqref{fazam8} follows from the resolvent identity  
$$(H_\infty(B)-z)^{-1}=(H_\infty(B)+\lambda)^{-1}+(z+\lambda)
(H_\infty(B)+\lambda)^{-1}(H_\infty(B)-z)^{-1}.$$
The same result for $S_B(z)$ follows from 
\eqref{fazam6}. \qed

\vspace{0.5cm}

We now can start the actual proof of ({\bf ii}). We integrate by parts $N$ 
times with respect to $z$ in the expression of $F_\infty$, 
and $N$ is supposed to be large. The terms we obtain in the integrand will look like this 
one: 
\begin{align}
 P_j(B)(H_\infty(B)-z_1)^{-N_1}P_k(B)(H_\infty(B)-z_2)^{-N_2},
\end{align}
where $N_1+N_2=N+2$, and either $N_1$ or $N_2$ is large. Here $z_1$ and $z_2$ 
are complex numbers like $z\in \Gamma_\omega$ or $z\pm \omega$.

By repeated commutations, we can always write this operator as  
$(H_\infty(B)-z_1)^{-1}W(H_\infty(B)-z_2)^{-1}$, where $W$ is a sum of terms 
like this one:
$$W_1:=(H_\infty(B)-z_1)^{-n_1}v_{\alpha_1}P_{\alpha_1}(B)
(H_\infty(B)-z_1)^{-n_2}v_{\alpha_2}P_{\alpha_2}(B) 
(H_\infty(B)-z_1)^{-n_3}.$$ 
Here $n_1,n_2,n_3\geq 1$, $v_\alpha$'s are smooth and uniformly bounded 
functions. Note that such a term always starts and ends with a resolvent. 
Using the fact that $P_{\alpha_1}(B)
(H_\infty(B)-z_1)^{-n_2}P_{\alpha_2}(B)$ is a bounded operator, then we can 
always write $W$ as the product of the form $(H_\infty(B)-z_1)^{-1}\tilde{W}$ 
where $\tilde{W}$ is a bounded operator whose norm increases at most 
polynomially in $r$. Thus $e^{-\langle \cdot \rangle \delta/r}W$ is a 
Hilbert-Schmidt operator, with a H-S norm which increases polynomially in 
$r$. Hence it is an integral operator which obeys the estimate:
  \begin{align}\label{marsy22}
\int \int 
e^{-2\langle \y \rangle \delta/r}|W(\y,\y';z)|^2d\y\; d\y'\leq \C \; r^M,
\end{align}
where the constant and $M$ are independent of $r$. Thus we have:
\begin{align}\label{marsy23}
&P_j(B)(H_\infty(B)-z_1)^{-N_1}P_k(B)(H_\infty(B)-z_2)^{-N_2}\nonumber \\
&=(H_\infty(B)-z_1)^{-1}
e^{\langle \cdot \rangle \delta/r}e^{-\langle \cdot \rangle \delta/r}
W(H_\infty(B)-z_2)^{-1}.
\end{align}
Using an identity like the one in \eqref{expdek4}, we can rewrite the above 
operator as:
$$e^{\langle \cdot \rangle \delta/r}(H_\infty(B)-z_1)^{-1}
Te^{-\langle \cdot \rangle \delta/r}W(H_\infty(B)-z_2)^{-1},$$
where $T$ is a bounded operator uniformly in $r$ if $\delta$ is small enough. 
Thus $W':=Te^{-\langle \cdot \rangle \delta/r}W$ is Hilbert-Schmidt, and 
has an integral kernel whose norm in $L^2(\R^6)$ is polynomially bounded in 
$r$.  

Therefore we are left with the investigation of the joint continuity in $\x$ 
and $\x'$ of the integral kernel defined by:
$$f_\infty(\x,\x'):=\int \int K_{1,B}(\x,\y;z_1) 
W'(\y,\y')K_{1,B}(\y',\x';z_2)d\y\; d\y',$$
where $\int \int |W'(\y,\y')|^2d\y\; d\y'\leq \C\; r^M$. Using this, 
\eqref{fazam7}, and the Cauchy-Schwarz inequality, we obtain 
$|f(\x,\x')|\leq \C\; r^{M_1} ||W'||_{B_2}$, uniformly in $\x,\x'\in U$. 
Since $W'$ is Hilbert-Schmidt, we can approximate 
$W'(\y,\y')$ with a finite sum of the type $\sum g_j(\y)h_j(\y')$ where $g$'s 
and $h$'s are $L^2$. Thus we can (uniformly in $\x,\x'\in U$) approximate the function $f_\infty$ with 
functions of the type 
$$\sum_{j=1}^n \{(H_\infty(B)-z_1)^{-1}g_j\}(\x)
\overline{\{(H_\infty(B)-z_2)^{-1}\overline{h}_j\}(\x')},$$
which from Lemma \ref{lemahoelder} we know they are continuous on compacts. 
Hence $f_\infty$ is jointly continuous on its variables. As for the integral 
kernel of $F_\infty$, we see that it can be written as the integral with 
respect to $z$ of a finite number of kernels of the same type as $f_\infty$. 
Due to the exponential decay of $f_{FD}$ (see \eqref{fd}), we see that we can 
approximate $F_\infty(\x,\x')$ uniformly on compacts with continuous 
functions, and we are done. 

Therefore the function $s_B$ as defined in Theorem
\ref{teorema1} is a continuous function. If $V$ is periodic with
respect to $\mathbb{Z}^3$, then
$H_\infty(B)$ commutes with the magnetic translations, defined for
every $\gamma\in\mathbb{Z}^3$ as (see also \eqref{fazam2}): 
$$[M_{\gamma}
\psi](\x):=e^{iB\varphi_0(\x,\gamma)}\psi(\x-\gamma),\quad M_{-\gamma}M_{\gamma}=1.$$
Hence we have that as operators, $M_{-\gamma}F_\infty
M_{\gamma}=F_\infty$, which for kernels gives 
\begin{align}\label{fazam12}
e^{-iB\varphi_0(\x,\gamma)}\mathcal{F}(\x+\gamma,\x'+\gamma)e^{iB\varphi_0(\x'+\gamma,\gamma)}\mathcal{F}(\x,\x').
\end{align}
Now since $\varphi_0(\gamma,\gamma)=0$, when we put $\x=\x'$ we get
$s_B(\x+\gamma)=s_B(\x)$ and we are done with (ii). \qed

\section{Proof of {\rm iii}.}

We start by proving the thermodynamic limit for the conductivity, that
is \eqref{efel22}. We need to introduce a special partition of unity
in $\Lambda_L$. 

\subsection{Partitions and cut-offs}

Fix $0<\alpha<1$ (small enough, to be chosen later) and define for
$t>0$:
\begin{equation}\label{margine1}
\Xi_L(t):=\left \{\x\in \overline{\Lambda_L}:\; {\rm
    dist}\{\x,\partial \Lambda_L\}\leq t L^\alpha\right \}.
\end{equation}
This models a ''thin'' compact subset of $\Lambda_L$, near the boundary, with a volume of order $
 t L^{2+\alpha}$. Because we assumed that the boundary
$\partial\Lambda_1$ was smooth, all points of $\Xi_L(t)$ have unique
projections on $\partial\Lambda_L$, if $L$ is large enough.  

Then if $t_1<t_2$ we have $\Xi_L(t_1)\subset \Xi_L(t_2)$ and :
\begin{align}\label{margine3}
{\rm dist}\{\Xi_L(t_1), \overline{\Lambda_L\setminus\Xi_L(t_2)}\}
\geq (t_2-t_1) L^{\alpha}. 
\end{align}

The subset $\overline{\Lambda_L\setminus\Xi_L(1)}$ models the ``bulk
region'' of $\Lambda_L$, which is still ``far-away'' from the
boundary. 

Now consider the inclusion of $\Xi_L(2)$ in the dilated lattice $L^\alpha \mathbb{Z}^3$. That is we cover 
$\Xi_L(2)$ with disjoint closed cubic boxes parallel to the
coordinate axis, centered at points in $L^\alpha \mathbb{Z}^3$, of side length
$L^\alpha$. Denote by $E\subset L^\alpha \mathbb{Z}^3$ the 
set of centers of those cubes which have common points with $\Xi_L(2)$. Clearly, due to volume 
considerations,  $\# E\sim L^{2-2\alpha}$. 

In order to fix notation, let us denote by
$K(\gamma, s)$ the cube centered at $\gamma\in E$, with side length
equal to $s\geq L^\alpha$. Moreover, denote by $\tilde{E}:=E\cup \{(0,0,0)\}$ (note
that the origin cannot belong to $E$ if $L$ is large enough). 

Now choose a partition of unity $\{g_{\gamma}\}_{\gamma\in \tilde{E}}$
of $\Lambda_L$ which has the following
properties: 
\begin{align}\label{margine4}
&{\rm supp}(g_{0})\subset \Lambda_L\setminus\Xi_L(1);\\
&{\rm supp}(g_{\gamma})\subset K(\gamma, 2L^\alpha), \gamma\in
E;\label{margine5}\\
&0\leq g_\gamma\leq 1,\qquad \sum_{\gamma\in
  \tilde{E}}g_\gamma(\x)=1,\quad  \forall \x\in
\Lambda_L;\label{margine6}\\
&||D^\beta g_\gamma||_\infty\sim L^{-\alpha |\beta|},\quad \forall 
\beta\in \mathbb{N}^3,\quad  \forall \gamma\in \tilde{E}\label{margine7}. 
\end{align}
This partition has the property that if we restrict ourselves to $E$,
then uniformly in $L$, the number of $g_\gamma$'s which are not zero
at the same time is bounded by a constant. Only $g_0$ has $\sim L^{2-2\alpha}$ 
neighbors whose supports have common points with ${\rm supp}(g_{0})$. 

Now we choose another set of functions,
$\{\tilde{g}_{\gamma}\}_{\gamma\in \tilde{E}}$ having the following
properties: 
\begin{align}\label{margine8}
&{\rm supp}(\tilde{g}_{0})\subset
\Lambda_L\setminus \Xi_L(1/4);\qquad \tilde{g}_{0}(\x)=1 \: {\rm
  if}\: \x\in \Lambda_L\setminus \Xi_L(1/2);\\
&{\rm supp}(\tilde{g}_{\gamma})\subset K(\gamma, 10 L^\alpha), \gamma\in
E;\quad \tilde{g}_{\gamma}(\x)=1 \: {\rm
  if}\: \x\in K(\gamma, 9 L^\alpha)\label{margine9}\\
&||D^\beta \tilde{g}_\gamma||_\infty\sim L^{-\alpha |\beta|},\quad \forall 
\beta\in \mathbb{N}^3,\quad  \forall \gamma\in \tilde{E}\label{margine11}. 
\end{align}
These functions were chosen ``wider'' than the $g_\gamma$'s, and obey 
\begin{align}\label{extras1}
\tilde{g}_\gamma g_\gamma=g_\gamma,\quad {\rm dist}\{{\rm
  supp}(D\tilde{g}_{\gamma}),{\rm supp}(g_{\gamma})\}\sim
L^\alpha,\qquad \gamma\in\tilde{E}.
\end{align}
By $D\tilde{g}_{\gamma}$ we mean that we take at least one derivative
of $\tilde{g}_{\gamma}$. 

Now let us define the third type of cut-offs,
$\{\tilde{\tilde{g}}_{\gamma}\}_{\gamma\in E}$:
\begin{align}
&{\rm supp}(\tilde{\tilde{g}}_{\gamma})\subset K(\gamma, 12 L^\alpha), \gamma\in
E;\quad \tilde{g}_{\gamma}(\x)=1 \: {\rm
  if}\: \x\in K(\gamma, 11 L^\alpha)\label{margine91}\\
&||D^\beta \tilde{\tilde{g}}_\gamma||_\infty\sim L^{-\alpha |\beta|},\quad \forall 
\beta\in \mathbb{N}^3,\quad  \forall \gamma\in E\label{margine111}. 
\end{align}
Note that we have $\tilde{\tilde{g}}_\gamma
\tilde{g}_\gamma=\tilde{g}_\gamma$ (the origin is not
considered here). 

\subsection{Proof of \eqref{efel22}} 

Define for every $\gamma\in E$:
\begin{align}\label{geigi}
{\bf A}_{\gamma}(\x):=\tilde{\tilde{g}}_\gamma(\x)\;{\bf A}(\x-\gamma).
\end{align}
Due to the support properties of our cut-off functions, we have the
estimates
\begin{align}\label{geigi2}
&\tilde{g}_\gamma\; {\bf A}_{\gamma}=\tilde{g}_\gamma\;{\bf
  A}(\x-\gamma),\nonumber \\
&||{\bf A}_{\gamma}||_{C^1(\R^3)}\leq \C\cdot L^\alpha.
\end{align}
Define for every $\gamma\in E$ (see also \eqref{feshch} and \eqref{impuls}):
\begin{align}\label{geigi3}
\mathbf{P}_\gamma(B):=\mathbf{P}(0)+B{\bf A}_{\gamma},\: 
H_L(B,\gamma):=\mathbf{P}_\gamma(B)^2+V,
\end{align}
where the Hamiltonian is defined with Dirichlet boundary conditions. 
Note that $H_L(B,\gamma)-H_L(0)$
is a relatively bounded perturbation of $H_L(0)$. 

Define the operator 
\begin{align}\label{geigi4}
U_L(B,z)& :=\tilde{g}_0 (H_\infty(B)-z)^{-1}g_0\nonumber \\
&+\sum_{\gamma\in E}
e^{iB\phi_0(\cdot,\gamma)}\tilde{g}_\gamma (H_L(B,\gamma)-z)^{-1}
e^{-iB\phi_0(\cdot,\gamma)}g_\gamma .
\end{align}
One can prove that the range of $U_L(B,z)$ is in the domain of
$H_L(B)$ and we have:
\begin{align}\label{geigi5}
&(H_L(B)-z)U_L(B,z)=1 +V_L(B,z),\\
&V_L(B,z):=\{-2i(\nabla\tilde{g}_0)\cdot \mathbf{P}(B)-(\Delta\tilde{g}_0)\} 
(H_\infty(B)-z)^{-1}g_0\nonumber \\
&+\sum_{\gamma\in E}
e^{iB\phi_0(\cdot,\gamma)}\{-2i(\nabla\tilde{g}_\gamma)\cdot
\mathbf{P}_\gamma(B)-(\Delta\tilde{g}_\gamma)\} (H_L(B,\gamma)-z)^{-1}
e^{-iB\phi_0(\cdot,\gamma)}g_\gamma .\nonumber 
\end{align}
In order to obtain this equality we used the locality of our
operators and various support properties of our cut-off functions, the
identity \eqref{fazam2}, definition \eqref{geigi}, and
\eqref{margine6}. 

Then we can write  
\begin{align}\label{geigi6}
(H_L(B)-z)^{-1}=U_L(B,z)+(H_L(B)-z)^{-1}V_L(B,z).
\end{align}
The good thing about $V_L(B,z)$ is that its operator 
norm is exponentially small in $L^\alpha$. This is
because we have the boundedness from \eqref{expdek} and
\eqref{expdek2} (valid also for $H_L(B,\gamma)$, as can easily be seen
from the proofs), and because of the estimate in
\eqref{extras1}. Indeed, for terms involving $\gamma\neq 0$, 
put $\x_0=\gamma$ in the two exponential
estimates, and take $+\delta$ on the left and $-\delta$ on the
right. Then we gain an overall decaying term from the left as \eqref{extras1} implies:
\begin{align}\label{geigi7}
\sup_{\x\in {\rm supp}(D\tilde{g}_\gamma) }\sup_{\x'\in {\rm
    supp}(g_\gamma) }e^{-\frac{\delta}{\langle r\rangle}(\langle
    \x-\gamma\rangle -\langle
    \x'-\gamma\rangle) } \leq e^{- \frac{\delta_1}{\langle
    r\rangle}L^\alpha},\; \gamma\neq 0,
\end{align}
where $\delta_1>0$ is small enough and $L$ is larger than some
$L_0$. 

For $\gamma=0$ the situation is slightly different, because we did not
assume convexity for $\Lambda_L$. But one of the terms whose norm we need to
estimate is (see \eqref{geigi5}) 
$$(\nabla\tilde{g}_0)\cdot \mathbf{P}(B) (H_\infty(B)-z)^{-1}g_0.$$
From \eqref{intkernrez} follows that the integral kernel of this
operator is bounded by 
\begin{align}\label{ggeigi8}
|(\nabla\tilde{g}_0)\cdot \mathbf{P}(B)
 (H_\infty(B)-z)^{-1}g_0|(\x,\x')\leq \C(\eta)\langle r\rangle ^M
 e^{-\frac{\delta}{\langle r\rangle }|\x-\x'|}.
\end{align}
Because $\x$ and $\x'$ are always separated by $\sim L^\alpha$ (see
the support properties for our cut-offs), we can write:
\begin{align}\label{ggeigi9}
|(\nabla\tilde{g}_0)\cdot \mathbf{P}(B)
 (H_\infty(B)-z)^{-1}g_0|(\x,\x')\leq \C(\eta)\langle r\rangle ^M
 e^{-\frac{\delta_1}{\langle r\rangle }|\x-\x'|}
e^{-\frac{\delta_2}{\langle r\rangle }L^\alpha},
\end{align}
where $\delta_1$ and $\delta_2$ are smaller than $\delta$. 

Therefore we can write for all $N\geq 1$:
\begin{align}\label{geigi8}
||V_L(B,z)|| &\leq \C(\eta)\cdot  L^{2-2\alpha} \langle r\rangle ^M
e^{- \frac{\delta_3}{\langle r\rangle}L^\alpha}\leq \C
(\eta,\alpha,N)\cdot  L^{-N} \langle r\rangle ^{M_1},
\end{align}
where we have to remember that we have $\sim L^{2-2\alpha}$ of
$g_\gamma$'s. The second estimate says that the norm decays faster
than any power of $L$, with the price of a higher power in $\langle
r\rangle$.

We now want to show that $V_L(B,z)$ does not contribute to the
thermodynamic limit of $\sigma_{L}(B)$. We have the following result:
\begin{proposition}\label{prop22}
\begin{align}\label{kompar1}
&  \sigma_{L}(B)= -\frac{1}{ {\rm Vol}(\Lambda_L)} 
\cdot {\rm Tr}
\int_{\Gamma_\omega}
{f}_{FD}(z) 
\left \{ P_1(B)U_L(B,z)P_2(B) 
U_L(B,z+\omega)\right .\nonumber \\
&\left . +  P_1(B)U_L(B,z-\omega)P_2(B) 
U_L(B,z)\right \}dz +\mathcal{O}(L^{-\infty}).
\end{align}
\end{proposition}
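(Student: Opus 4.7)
The natural starting point is the identity \eqref{geigi6} applied to each resolvent appearing in \eqref{stone5}, written in the shorthand $R(z):=(H_L(B)-z)^{-1}=U_L(B,z)+R(z)V_L(B,z)$. Substituting this into the two terms of the conductivity integrand and expanding the products, I obtain the claimed ``$U$-version'' of $\sigma_L(B)$ plus a finite sum of ``error integrands'' of the generic forms
\begin{align*}
&P_1(B)\,U_L(B,z)\,P_2(B)\,R(z-\omega)\,V_L(B,z-\omega),\\
&P_1(B)\,R(z)\,V_L(B,z)\,P_2(B)\,U_L(B,z-\omega),\\
&P_1(B)\,R(z)\,V_L(B,z)\,P_2(B)\,R(z-\omega)\,V_L(B,z-\omega),
\end{align*}
and three analogous terms from the other line of \eqref{stone5}. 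Each error integrand carries at least one factor of $V_L$, whose operator norm obeys \eqref{geigi8}: for every $N\geq 1$, $\|V_L(B,z)\|\leq \C(N)\,L^{-N}\langle r\rangle^{M_1}$. The goal is to show that after integration against $f_{FD}(z)$ on $\Gamma_\omega$, each such error produces a trace-class operator whose trace norm is $\mathcal{O}(L^{3-N})$ for every $N$; dividing by $\mathrm{Vol}(\Lambda_L)\sim L^3$ then yields the $\mathcal{O}(L^{-\infty})$ correction.

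To upgrade these bounded-norm products to trace-class ones I would repeat the integration-by-parts device of Section 2: integrate in $z$ several times, using anti-derivatives of $f_{FD}$ that still decay exponentially along $\Gamma_\omega$, thereby raising the powers of the resolvents inside $R(z)$, $R(z\pm\omega)$, $U_L$ and $V_L$ themselves. A crucial, but routine, observation is that $\partial_z^k U_L$ and $\partial_z^k V_L$ are obtained by raising certain resolvent powers inside \eqref{geigi4}--\eqref{geigi5} while leaving the cut-off structure untouched, so the exponential-smallness argument that produced \eqref{geigi8} still applies: $\|\partial_z^k V_L(B,z)\|\leq \C(N,k)\,L^{-N}\langle r\rangle^{M_k}$ for every $N,k\ge 0$. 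After enough integrations by parts, each surviving integrand can be factored as a product of two Hilbert--Schmidt pieces (of the form $(H_L(B)-z)^{-2}$ times bounded, as in \eqref{sssa}--\eqref{ssshs}) sandwiching at least one $V_L$ (or a derivative of it). Applying H\"older's inequality for Schatten norms, $\|ABC\|_1\leq\|A\|_{B_2}\|B\|\|C\|_{B_2}$, gives a trace-norm estimate of the form
\begin{equation*}
\|\text{error integrand}\|_1 \leq \C\,\mathrm{Vol}(\Lambda_L)\,\langle r\rangle^{M_2}\,L^{-N},
\end{equation*}
uniformly in $z\in\Gamma_\omega$, and the remaining $z$-integration against the exponentially decaying $f_{FD}$ converges to a finite constant depending only on $\beta,\omega,\eta$. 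Since $N$ is arbitrary, dividing by $\mathrm{Vol}(\Lambda_L)$ produces a remainder that decays faster than any power of $L$, which is precisely the $\mathcal{O}(L^{-\infty})$ in \eqref{kompar1}.

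\textbf{Main obstacle.} The awkward point is the bookkeeping for the integration by parts: $U_L$ is not a single resolvent but a sum indexed over $E$, each summand involving a different Hamiltonian $H_L(B,\gamma)$ with its own resolvent, and the $\gamma=0$ piece involves $H_\infty(B)$ on the whole space. One must therefore be careful that (a) the Hilbert--Schmidt estimate \eqref{ssshs} applied to any of the $(H_L(B,\gamma)-z)^{-2}$ (or to $(H_\infty(B)-z)^{-2}$ localized by $g_0$) is still controlled by $\sqrt{\mathrm{Vol}(\Lambda_L)}\,\langle r\rangle$, and (b) the $\sim L^{2-2\alpha}$ summands arising from expanding $U_L$ do not spoil the decay, exactly as they did not in \eqref{geigi8}. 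Both points are handled by the same exponential-decay / disjoint-support mechanism already exploited in \eqref{geigi7}--\eqref{geigi8}, so no new analytic input is needed; the proof is a careful but essentially mechanical combination of Section 2's integration-by-parts argument with the smallness estimate \eqref{geigi8}.
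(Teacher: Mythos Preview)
Your overall strategy---expand via \eqref{geigi6}, integrate by parts in $z$, and combine the operator-norm smallness \eqref{geigi8} of $V_L$ with Hilbert--Schmidt bounds on high resolvent powers---is exactly the paper's. But your factorization claim has a genuine gap. After integrating by parts, the Leibniz rule produces terms in which \emph{all} $z$-derivatives land on $V_L$, leaving the remaining factors with only single resolvent powers. A representative such term (coming from the error $P_1 U_L\,P_2 R\,V_L$) is
\[
P_1(B)\,U_L(B,z)\;P_2(B)\,(H_L(B)-z-\omega)^{-1}\;\partial_z^{\,N-1}V_L(B,z+\omega).
\]
Here $P_2(B)(H_L(B)-z-\omega)^{-1}$ is merely bounded---it is \emph{not} Hilbert--Schmidt in three dimensions (one resolvent power is not enough once a $P_j(B)$ is attached)---so the ``HS $\times$ (small bounded) $\times$ HS'' pattern you invoke is simply unavailable. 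Knowing only $\|\partial_z^k V_L\|\le C_N L^{-N}$ in \emph{operator} norm does not yield trace class for this term.

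The paper singles out precisely this case (``if the differentiation with respect to $z$ acts on $V_L(B)$\ldots'') and proves instead that the building blocks of $\partial_z^{\,N-1}V_L$, namely
\[
(\partial_j\tilde g_0)\,P_j(B)\,(H_\infty(B)-z)^{-N}g_0
\qquad\text{and}\qquad
(\partial_j\tilde g_\gamma)\,P_{j,\gamma}(B)\,(H_L(B,\gamma)-z)^{-N}g_\gamma,
\]
are themselves \emph{trace-class} with trace norm exponentially small in $L^\alpha$. For the $\gamma=0$ piece this needs extra work because the resolvent of $H_\infty(B)$ on $\R^3$ is not Hilbert--Schmidt: the paper covers $\mathrm{supp}(\nabla\tilde g_0)$ and $\mathrm{supp}(g_0)$ by unit cubes and applies the pointwise kernel bounds of Proposition~\ref{prop6} to each pair (Lemma~\ref{bucurestilema1} and \eqref{oggg1}). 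For the $\gamma\in E$ pieces one uses that $(H_L(B,\gamma)-z)^{-1}$ is Hilbert--Schmidt via \eqref{fff2}, together with the $L^\alpha$-separation of $\mathrm{supp}(\nabla\tilde g_\gamma)$ from $\mathrm{supp}(g_\gamma)$; see \eqref{djh1} and the paragraph following it. Once these pieces are $B_1$-small, the surrounding factors need only be bounded and your argument closes. In short, the missing step is upgrading \eqref{geigi8} from an operator-norm bound on $V_L$ (and its $z$-derivatives) to a \emph{trace-norm} bound, and that upgrade is where the real work in the paper's proof lies; your ``Main obstacle'' paragraph worries about the bookkeeping of the $U_L$ sum, but the actual obstacle is this trace-norm control of $\partial_z^k V_L$.
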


\noindent {\bf Proof}. 
The main idea is to show that when we replace $(H_L(B)-z)^{-1}$
in $F_L$ with the right hand side of \eqref{geigi6}, all terms containing
$V_L(B)$ will generate (after integrating by parts with respect to
$z$) some operators which in the trace norm will decay faster than any
power of $L$. 

If the differentiation does not act on $V_L(B)$ but on the other
resolvents, then it is enough to know that in the norm of
$B(L^2)$ it goes to zero faster than any power in $L$, as we saw in
\eqref{geigi8}. 

If the differentiation with respect to $z$ acts on $V_L(B)$, then
there will be a few terms which must be separately considered, and
prove their smallness in the trace norm. 

To give an example, after differentiating $N-1$ 
times with respect to $z$ ($N$ large), we obtain a term containing a
factor like:
\begin{equation}\label{buku01}
\sum_{j=1}^3(\partial_j\tilde{g}_0) P_j(B) (H_\infty(B)-z)^{-N}g_0.
\end{equation}
We will prove (in the trace norm) that it decays faster than any power
of $L$, times some polynomially bounded factor in $\langle
r\rangle$. Note that all the other factors multiplying the above operator are bounded 
operators, with a norm which is polynomially bounded in $\langle r\rangle$. 

Let us start with a technical result:
\begin{lemma}\label{bucurestilema1}
Let $Q_1$ and $Q_2$ be 
two compact unit cubes such that ${\rm dist}(Q_1,Q_2)=d >1$, and let $\chi_1$, $\chi_2$ denote 
their characteristic functions. Let $\alpha\in \{0,1\}$ and $j\in\{1,2,3\}$. Then if $N$ is large enough, 
there exist three 
constants $\delta_2 >0$, $N_1>1$ and $C>0$, all three independent of $z\in \Gamma_\omega$, $d$, $\alpha$, $j$ 
and $Q$'s such that 
\begin{align}\label{buku1}
||\chi_1 P_j^\alpha(B) (H_\infty(B)-z)^{-N} \chi_2||_{B_1}\leq C r^{N_1}\exp\{-d\delta_2/r\}. 
\end{align}
\end{lemma}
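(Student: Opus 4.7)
The strategy is to bound the trace norm by the product of two Hilbert--Schmidt norms, arranging that one of them captures the spatial separation $d$ and therefore decays as $e^{-d\delta_2/r}$, while the other remains polynomially bounded in $r$. First I would extend Proposition~\ref{prop6} to cover the case where the right-hand $P_k(B)$ factor is replaced by the identity: the same magnetic perturbation argument applies verbatim and yields jointly continuous integral kernels $\mathcal{K}_a(\x,\x';z)$ of $P_j^\alpha(B)(H_\infty(B)-z)^{-N_a}$ and $\mathcal{K}_b(\x,\x';z)$ of $(H_\infty(B)-z)^{-N_b}$, each satisfying
$$|\mathcal{K}_l(\x,\x';z)| \leq C\,\langle r\rangle^M\, e^{-\delta |\x-\x'|/\langle r\rangle}, \quad l\in\{a,b\},$$
uniformly in $z\in\Gamma_\omega$ and $\x,\x'\in\R^3$, provided $N_a$ and $N_b$ are both chosen large enough.

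Let $c_1, c_2$ be the centers of $Q_1, Q_2$, let $\vec{n}=(c_2-c_1)/|c_2-c_1|$, and let $\Pi$ be the hyperplane through $(c_1+c_2)/2$ orthogonal to $\vec{n}$, with closed halfspaces $H^+,\,H^-$ containing $Q_1,\,Q_2$ respectively. For $d$ large enough one has $Q_1\subset H^+$ and $Q_2\subset H^-$ (for bounded $d$ the inequality to be proved is trivial, since $e^{-d\delta_2/r}$ is bounded away from $0$). Splitting $N=N_a+N_b$ and inserting $\mathbb{1}_{H^+}+\mathbb{1}_{H^-}=1$, I would write
$$\chi_1 P_j^\alpha(B)(H_\infty(B)-z)^{-N}\chi_2 = T_+ S_+ + T_- S_-,$$
where $T_\pm := \chi_1 P_j^\alpha(B)(H_\infty(B)-z)^{-N_a}\mathbb{1}_{H^\pm}$ and $S_\pm := \mathbb{1}_{H^\pm}(H_\infty(B)-z)^{-N_b}\chi_2$, and then control each summand by $\|T_\pm S_\pm\|_{B_1}\leq \|T_\pm\|_{B_2}\|S_\pm\|_{B_2}$.

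For $T_+$ the integration is over $(\x,\x')\in Q_1\times H^+$: only the exponential decay of $\mathcal{K}_a$ in $|\x-\x'|$ is used, and since $\x$ is confined to a unit cube while the decay rate is $\delta/\langle r\rangle$, the resulting Hilbert--Schmidt norm is bounded by $C\,\langle r\rangle^{M+3/2}$, uniformly in $d$. For $S_+$ the integration is over $(\x,\x')\in H^+\times Q_2$; projecting $\x'-\x$ onto $\vec{n}$ shows that $|\x-\x'|\geq d/2 - O(1)$ throughout this region, so splitting off half of the exponent yields $\|S_+\|_{B_2}\leq C\,\langle r\rangle^{M+3/2}\,e^{-\delta d/(4\langle r\rangle)}$. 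The pair $(T_-,S_-)$ is handled by an entirely symmetric argument. Summing, and using $\langle r\rangle \sim r$ on $\Gamma_\omega$, one obtains \eqref{buku1} with $\delta_2 = \delta/4$ and $N_1 = 2M+3$. The main obstacle is the bookkeeping required to keep every kernel bound uniformly polynomial in $r$ and to ensure that the weights and cut-offs respect the $z$-dependent decay rate $\delta/\langle r\rangle$ rather than $\delta$; this is exactly what forces the $r^{N_1}$ prefactor and the factor $1/r$ in the exponent of the final estimate. Once the kernel bound from the extension of Proposition~\ref{prop6} is in hand, the halfspace partition makes the spatial separation automatic and the remaining arithmetic is elementary.
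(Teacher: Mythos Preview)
Your argument is correct, but the paper proceeds differently and somewhat more directly. Instead of a geometric halfspace partition, the paper first commutes $P_j(B)$ past one resolvent to obtain $\chi_1(H_\infty(B)-z)^{-1}T_j(H_\infty(B)-z)^{-N+1}\chi_2$ with $T_j$ bounded, and then inserts exponential weights $e^{\pm\frac{\delta_2}{\langle r\rangle}\langle\cdot-\x_2\rangle}$ centered at a single point $\x_2\in Q_2$, in the spirit of Combes--Thomas (Proposition~\ref{prop2}). The two resulting factors, each carrying one of the cut-offs $\chi_1,\chi_2$, are Hilbert--Schmidt by the pointwise kernel bounds of Propositions~\ref{prop5}--\ref{prop6}, and the scalar multiplier $\chi_1 e^{-\frac{\delta_2}{\langle r\rangle}\langle\cdot-\x_2\rangle}$ immediately yields the factor $e^{-d\delta_2/r}$ because every point of $Q_1$ is at distance at least $d$ from $\x_2$.

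What each approach buys: the paper's weight insertion avoids any geometric construction and works with a single reference point, so there is no need to verify that the cubes lie on the correct sides of a hyperplane or to split into two summands $T_\pm S_\pm$. Your halfspace decomposition, on the other hand, relies only on the pointwise kernel estimate (your extension of Proposition~\ref{prop6}) and never invokes operator-level exponential conjugation; it is a perfectly valid alternative and makes the source of the decay geometrically transparent. The polynomial bookkeeping in $r$ is the same in both arguments.
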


\begin{proof}
We assume that $\alpha=1$, the other case being similar. The strategy is to write our operator as a product of two 
Hilbert-Schmidt operators. By commuting $P_j(B)$ with one resolvent, we can
rewrite our operator as:
$$\chi_1 (H_\infty(B)-z)^{-1}T_j
(H_\infty(B)-z)^{-N+1}\chi_2,$$
where $T_j$ is a bounded operator which contains factors like $P_k(B)(H_\infty(B)-z)^{-1}$. 

Denote by $\x_{2}$ an arbitrary point in the support of $\chi_{2}$. 
We insert some exponentials in the following way:
\begin{align}\label{ggg1}
&{\chi}_1 
e^{- \frac{\delta_2}{\langle r\rangle}\langle
  \cdot -\x_{2}\rangle} \{\chi_{1}e^{
  \frac{\delta_2}{\langle r\rangle}\langle \cdot -\x_{2}\rangle } 
(H_\infty(B)-z)^{-1}e^{ -\frac{\delta_2}{\langle
    r\rangle}\langle
  \cdot -\x_{2}\rangle }\}\\ \nonumber 
&\cdot \{e^{
  \frac{\delta_2}{\langle r\rangle}\langle
  \cdot  -\x_{2}\rangle }T_j(H_\infty(B)-z)^{-N+1}e^{
  -\frac{\delta_2}{\langle r\rangle}\langle
  \cdot  -\x_{2}\rangle }\chi_{2}\}e^{
  \frac{\delta_2}{\langle r\rangle}\langle
  \cdot  -\x_{2}\rangle }\chi_{2}.
\end{align}
 If $\delta_2<\delta$ we can write:
$$ e^{-\frac{\delta_2}{\langle r\rangle}\langle
  \x  -\x_{2}\rangle} e^{
  \frac{-\delta}{\langle r\rangle}\langle
  \x  -\x'\rangle }e^{
  \frac{\delta_2}{\langle r\rangle}\langle
  \x'  -\x_{2}\rangle }\leq {\rm const} \: e^{-
  \frac{\delta-\delta_2}{\langle r\rangle}|\x  -\x'| }$$
Now the two factors containing resolvents in \eqref{ggg1} are Hilbert-Schmidt due to the 
presence of the cut-offs $\chi$ and the exponential decay of our kernels  
(Proposition \ref{prop2} and \eqref{intkernrez}).
Their 
Hilbert-Schmidt norm will grow polynomially with $r$, but be independent of $d$. 
The choice of $\x_2$ provides the decaying exponential factor on the right hand side of \eqref{buku1}. \end{proof}

Let us go back to \eqref{buku01} and try to use the previous lemma. We will show that in the trace norm, 
this operator decays exponentially in $L^\alpha$. Consider only one $j$. Cover both ${\rm supp}\{ \partial_j
\tilde{g}_0\}$ and  ${\rm supp}\{ g_0\}$ with disjoint cubes centered at points in $\mathbb{Z}^3$ and side length
equal to $1$; we only use cubes which have common points with the respective supports. We then have 
\begin{align}\label{oggg1}
&(\partial_j\tilde{g}_0) P_j(B)
(H_\infty(B)-z)^{-N}g_0 \nonumber \\
&=\sum_{s,s'}\tilde{\chi}_s(\partial_j\tilde{g}_0)  P_j(B)
(H_\infty(B)-z)^{-N}g_0\chi_{s'},
\end{align}
where $\tilde{\chi}_s$ and respectively $\chi_s$ denote the characteristic 
function of such unit
cubes which cover ${\rm supp}\{ \partial_j
\tilde{g}_0\}$ and respectively ${\rm supp}\{ g_0\}$. 
The number of cubes needed to cover the support of $g_0$ is of order
$L^3$, while for the other one is of order $L^{3\alpha}$; 
hence we have about $L^{3+3\alpha}$ terms in the above double sum. 

But each operator of the form 
$$\tilde{\chi}_s(\partial_j\tilde{g}_0) P_j(B)
(H_\infty(B)-z)^{-N}g_0\chi_{s'}$$
is exponentially small in the trace norm due to Lemma \ref{bucurestilema1}, since the distance between any two 
supports of $\tilde{\chi}_s$ and $\chi_{s'}$ is of order $L^\alpha$. Hence the entire sum in \eqref{oggg1} will
be ($r$ dependent) exponentially small in $L^\alpha$. But then we can trade off the fading exponential decay with a 
polynomial decay in $L$ and a polynomial growth in $r$ as we did in \eqref{geigi8}. So this term is under control.

Now let us go back to the beginning of the proof of Proposition \ref{prop22}. 
Other ``bad'' terms from the remainder in \eqref{kompar1} after
differentiation with respect to $z$ will contain powers
of $(H_L(B,\gamma)-z)^{-1}$, like for example
\begin{equation}\label{djh1}
(\partial_j\tilde{g}_\gamma)
P_{j,\gamma}(B)(H_L(B,\gamma)-z)^{-N}g_\gamma .
\end{equation}

 Here we
cannot easily commute with $P$'s due to various boundary terms. But we
do not need to do that. Look at \eqref{expdek4}, where we put 
$s=\frac{\delta}{\langle r\rangle }$ and $\x_0=\gamma$. 
Because ${\bf A}_\gamma$ is bounded from above by $L^\alpha$ (see
\eqref{geigi2}), it follows:
\begin{equation}\label{fff2}
||(-\Delta_D+1)(H_L(B,\gamma)-z)^{-1}||\leq \langle r\rangle ^M
L^\alpha.
\end{equation}
It means that for small $\delta$, 
the resolvent $e^{
  \frac{\delta}{\langle r\rangle}\langle
  \cdot  -\gamma \rangle }(H_L(B,\gamma)-z)^{-1}e^{-
  \frac{\delta}{\langle r\rangle}\langle
  \cdot  -\gamma \rangle }$ sandwiched with exponentials remains 
Hilbert-Schmidt
(with a norm which does not grow faster than  the $B_2$ 
norm of $(-\Delta_D+1)^{-1}$ times $L^\alpha$ and some polynomial in
$\langle r\rangle$). Since we have $N$ resolvents, the product of two
of them will give a trace class operator. Now we can repeat the
insertion of exponentials as we did in \eqref{ggg1}, and use
\eqref{geigi7} for getting the exponential decay in $L^\alpha$. 

 \qed 

\vspace{0.5cm}

Now let us show that all terms involving the sum over $\gamma\in E$ in
\eqref{geigi4} will not contribute at the end. The explanation is that
these terms are ''localized near boundary''. We can formulate the result
as follows:

\begin{proposition}\label{prop23} If $\alpha >0$ is small enough, then:
\begin{align}\label{aproxunu}
&  \lim_{L\to \infty} \left \{ 
\sigma_{L}(B) \frac{}{} +\right . \nonumber \\
& \left . \frac{1}{ {\rm Vol}(\Lambda_L)} 
\cdot {\rm Tr}
\int_{\Gamma_\omega}
{f}_{FD}(z) 
\left [ P_1(B)\tilde{g}_0 (H_\infty(B)-z)^{-1}g_0 
P_2(B) \tilde{g}_0 (H_\infty(B)-z-\omega)^{-1}g_0 \frac{}{} \right . \right
.\nonumber \\
&\left . \left . +P_1(B)\tilde{g}_0 (H_\infty(B)-z+\omega)^{-1}g_0 P_2(B) \tilde{g}_0
(H_\infty(B)-z)^{-1}g_0
 \frac{}{} \right ]dz \right \}=0.
\end{align}
\end{proposition}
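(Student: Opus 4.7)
\textbf{Proof plan for Proposition \ref{prop23}.} Building on Proposition \ref{prop22}, we know that, up to $\mathcal{O}(L^{-\infty})$, the conductivity $\sigma_L(B)$ is given by the integral involving $U_L(B,z)$'s in place of $(H_L(B)-z)^{-1}$'s. Write $U_L(B,z)=U_L^{\rm bulk}(B,z)+U_L^{\partial}(B,z)$, where $U_L^{\rm bulk}(B,z):=\tilde{g}_0 (H_\infty(B)-z)^{-1}g_0$ and $U_L^{\partial}(B,z):=\sum_{\gamma\in E}e^{iB\varphi_0(\cdot,\gamma)}\tilde{g}_\gamma(H_L(B,\gamma)-z)^{-1}e^{-iB\varphi_0(\cdot,\gamma)}g_\gamma$. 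Expanding each product $U_L\cdot U_L$ in \eqref{kompar1} one finds that the target sum in \eqref{aproxunu} is precisely the (bulk, bulk) contribution, so it suffices to show that after multiplication by $1/{\rm Vol}(\Lambda_L)\sim L^{-3}$ the three ``mixed'' contributions $U_L^{\rm bulk}U_L^{\partial}$, $U_L^{\partial}U_L^{\rm bulk}$ and $U_L^{\partial}U_L^{\partial}$ each have trace norms that grow strictly slower than the volume.

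As in the proof of~(\textbf{i}), I would first integrate by parts several times in $z$ to replace single resolvents by sufficiently high powers, thereby turning each integrand into a trace-class operator whose trace norm can be controlled by a product of Hilbert--Schmidt norms. The crucial observation is that each summand in $U_L^{\partial}$ is \emph{spatially localized}: the kernel of $\tilde{g}_\gamma(H_L(B,\gamma)-z)^{-N}g_\gamma$ is supported in $K(\gamma,10L^\alpha)\times K(\gamma,2L^\alpha)$, both cubes of volume $\sim L^{3\alpha}$. Combining the pointwise kernel bounds of Proposition~\ref{prop6} (which apply to $H_L(B,\gamma)$ as well, since on the support of $\tilde{g}_\gamma$ the local vector potential ${\bf A}_\gamma$ coincides with ${\bf A}(\cdot-\gamma)$, and the commutation \eqref{fazam2} together with local gauge covariance transfers the exponential kernel estimates) with the unimodularity of the magnetic phases, one obtains Hilbert--Schmidt bounds of the form $\|P_j^\alpha(B)\tilde{g}_\gamma(H_L(B,\gamma)-z)^{-N}g_\gamma\|_{B_2}\leq \C\,\langle r\rangle^M L^{3\alpha/2}$, uniformly in $\gamma$ and $L$.

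With these building blocks, each boundary--boundary trace $\mathrm{Tr}\bigl[P_1(B)\cdots_{\gamma_1}P_2(B)\cdots_{\gamma_2}\bigr]$ is estimated by the product of two such $B_2$-norms, giving $\C\,\langle r\rangle^M L^{3\alpha}$. The cyclicity of the trace together with the disjointness of the supports $g_{\gamma_1}$ and $\tilde g_{\gamma_2}$ (which are in cubes of sides $2L^\alpha$ and $10L^\alpha$ centered at lattice points in $L^\alpha\mathbb{Z}^3$) shows that only pairs $(\gamma_1,\gamma_2)$ with $|\gamma_1-\gamma_2|_\infty\leq 6L^\alpha$ yield a nonzero contribution; for each $\gamma_1$ there are only $\mathcal{O}(1)$ such $\gamma_2$, so the double sum is bounded by $\#E\sim L^{2-2\alpha}$ terms. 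Likewise, for a bulk--boundary trace $\mathrm{Tr}\bigl[P_1(B)U_L^{\rm bulk}P_2(B)\cdots_\gamma\bigr]$, cyclicity and the support of $\tilde g_\gamma,g_\gamma$ let us restrict the bulk resolvent to the cube $K(\gamma,10L^\alpha)$: the relevant trace is dominated by $\|P_j^\alpha(B)\chi_{K(\gamma,10L^\alpha)}(H_\infty(B)-z)^{-N}\chi_{\rm supp(g_0)}\|_{B_2}\cdot\|(\text{localized boundary piece})\|_{B_2}\leq \C\,\langle r\rangle^M L^{3\alpha}$, and again there are $\sim L^{2-2\alpha}$ such $\gamma$'s.

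Summing, each of the three mixed contributions has trace norm at most $\C\,\langle r\rangle^M L^{2-2\alpha}\cdot L^{3\alpha}=\C\,\langle r\rangle^M L^{2+\alpha}$; after multiplying by the $z$-integrable function $f_{FD}(z)\langle r\rangle^M$ and dividing by ${\rm Vol}(\Lambda_L)\sim L^3$, the resulting bound is $\mathcal{O}(L^{\alpha-1})\to 0$ provided $0<\alpha<1$. I expect the main technical obstacle to be the interplay between the unbounded momentum operators $P_j(B)$ (whose coefficients grow linearly across $\Lambda_L$) and the localization: this is precisely why we have to work through the local gauges ${\bf A}_\gamma$ and factorize via \eqref{fazam2} rather than estimate ${\bf A}$ directly, and why the careful counting ($\#E\sim L^{2-2\alpha}$ together with cube-volume $L^{3\alpha}$) must beat $L^3$.
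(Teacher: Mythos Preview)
Your overall strategy coincides with the paper's: split $U_L$ into bulk and boundary pieces, integrate by parts in $z$ to create trace-class integrands, bound traces by products of Hilbert--Schmidt norms, and use the cardinality $\#E\sim L^{2-2\alpha}$ together with the finite-overlap property of the $g_\gamma$'s to show the mixed contributions are subvolumic. The cyclicity trick you use for the bulk--boundary term (localizing both ends of the bulk factor to a cube of side $\sim L^\alpha$) is in fact slightly slicker than the paper's commutation in \eqref{kompar7}.

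There is, however, a genuine gap in your key estimate
\[
\|P_{j}^\alpha(B)\tilde g_\gamma (H_L(B,\gamma)-z)^{-N}g_\gamma\|_{B_2}\le \C\,\langle r\rangle^M L^{3\alpha/2}.
\]
You justify it by invoking Proposition~\ref{prop6} for $H_L(B,\gamma)$, arguing that on ${\rm supp}(\tilde g_\gamma)$ one has ${\bf A}_\gamma={\bf A}(\cdot-\gamma)$ and that gauge covariance \eqref{fazam2} transfers the whole-space kernel bounds. This does not work: $H_L(B,\gamma)$ carries Dirichlet boundary conditions on $\partial\Lambda_L$, and the cubes $K(\gamma,\cdot)$ for $\gamma\in E$ sit precisely in the boundary layer $\Xi_L(2)$. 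A local gauge change is a unitary multiplication operator and cannot convert $(H_L(B,\gamma)-z)^{-1}$ into the restriction of a whole-space resolvent; the pointwise kernel estimates of Proposition~\ref{prop6} (which rely on elliptic regularity in $\R^3$ and the Combes--Thomas argument for $H_\infty$) are simply not available for the Dirichlet resolvent near $\partial\Lambda_L$.

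The paper confronts exactly this obstruction and pays a price: it controls $(H_L(B,\gamma)-z)^{-1}$ only through the operator-norm estimate \eqref{fff2}, namely $\|(-\Delta_D+1)(H_L(B,\gamma)-z)^{-1}\|\le \langle r\rangle^M L^\alpha$, where the extra $L^\alpha$ comes from $\|{\bf A}_\gamma\|_\infty\sim L^\alpha$. Together with $\|\tilde g_\gamma(-\Delta_D+1)^{-1}\|_{B_2}\sim L^{3\alpha/2}$ this yields a Hilbert--Schmidt bound $L^{5\alpha/2}$, hence trace-norm $L^{5\alpha}$ per $\gamma$, total $L^{2+3\alpha}$, and the restriction $\alpha<1/3$ rather than your $\alpha<1$. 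If you want to recover your sharper $L^{3\alpha/2}$ bound honestly, you should replace the free Laplacian by the magnetic one and use the gauge-invariant estimates of Proposition~\ref{prop2} (which hold for any vector potential, hence for ${\bf A}_\gamma$) together with the diamagnetic kernel bound \eqref{intkern}; but this must be argued, not asserted via Proposition~\ref{prop6}.
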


\noindent {\bf Proof}. If we compare this with \eqref{geigi4},
we see that we need to show that all terms containing factors localized near boundary will converge to zero. 
Let us look at one such term, and prove the next lemma:
\begin{lemma}\label{lema1} Assume that $0<\alpha<1/3$. Then we have that
\begin{align}\label{kompar2}
&\lim_{L\to \infty}\frac{1}{ {\rm Vol}(\Lambda_L)} 
\cdot {\rm Tr}
\int_{\Gamma_\omega}
{f}_{FD}(z) 
 P_1(B)\tilde{g}_0 (H_\infty(B)-z)^{-1}g_0  \nonumber \\
&\cdot   P_2(B)\sum_{\gamma\in E}
e^{iB\phi_0(\cdot,\gamma)}\tilde{g}_\gamma (H_L(B,\gamma)-z-\omega)^{-1}
e^{-iB\phi_0(\cdot,\gamma)}g_\gamma dz =0.
\end{align}
\end{lemma}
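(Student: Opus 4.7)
The plan is to show that the trace in \eqref{kompar2} grows at most like $L^{2+\alpha}$ (modulo factors harmless after the $z$-integration), so that dividing by ${\rm Vol}(\Lambda_L)\sim L^3$ gives an $O(L^{\alpha-1})$ bound that vanishes in the thermodynamic limit for any $\alpha<1$. The factor $L^{2+\alpha}$ arises from the $\#E\sim L^{2-2\alpha}$ boundary centres, each contributing a term whose support has volume $\sim L^{3\alpha}$.

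First I would integrate by parts in $z$ along $\Gamma_\omega$ a sufficient number of times, as in Section 2 and in the proof of Proposition \ref{prop22}: anti-derivatives of $f_{FD}$ retain their exponential decay on the contour, and boundary contributions at infinity vanish, so each resolvent may be replaced by a high power $(H_\infty(B)-z)^{-N_1}$, $(H_L(B,\gamma)-z-\omega)^{-N_2}$, with $N_1,N_2$ as large as needed. The integrand then becomes a finite sum of operators schematically of the form
\begin{equation*}
T_\gamma(z):= P_1(B)\,\tilde g_0\,(H_\infty(B)-z)^{-N_1}\,g_0\,P_2(B)\,e^{iB\varphi_0(\cdot,\gamma)}\,\tilde g_\gamma\,(H_L(B,\gamma)-z-\omega)^{-N_2}\,e^{-iB\varphi_0(\cdot,\gamma)}\,g_\gamma,
\end{equation*}
multiplied by factors polynomial in $r=\langle|\Re z|\rangle$.

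Next I would use cyclicity of the trace to move the compactly supported factor $g_\gamma$ to the leftmost position, writing ${\rm Tr}(T_\gamma(z))={\rm Tr}(g_\gamma M_\gamma(z))$. Assuming $M_\gamma(z)$ has a jointly continuous integral kernel $K_{M_\gamma}$ with an $L$-uniform pointwise bound,
\begin{equation*}
|{\rm Tr}(g_\gamma M_\gamma(z))|\;\leq\;\int g_\gamma(\x)\,|K_{M_\gamma}(\x,\x;z)|\,d\x\;\leq\;\C\,L^{3\alpha}\sup_{\x}|K_{M_\gamma}(\x,\x;z)|.
\end{equation*}
Summing over the $\sim L^{2-2\alpha}$ centres $\gamma\in E$ and integrating against $f_{FD}(z)$ on $\Gamma_\omega$ --- whose exponential decay in $|\Re z|$ absorbs any polynomial growth in $r$ --- would produce the desired $O(L^{2+\alpha})$ trace bound, and dividing by ${\rm Vol}(\Lambda_L)$ gives the vanishing limit.

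The main obstacle is the pointwise estimate
\begin{equation*}
\sup_{\x,\y}|K_{M_\gamma}(\x,\y;z)|\;\leq\;\C(\eta)\,\langle r\rangle^{M_0}
\end{equation*}
with constants uniform in $L$ and $\gamma\in E$. Two sources of linear growth have to be neutralised: the vector potential ${\bf a}$ in $P_1(B),P_2(B)$ grows like $L$ on $\Lambda_L$, and ${\bf A}_\gamma=\tilde{\tilde{g}}_\gamma\,{\bf A}(\cdot-\gamma)$ inside $H_L(B,\gamma)$ grows like $L^\alpha$ on ${\rm supp}(\tilde{\tilde{g}}_\gamma)$. Both are controlled by regularized magnetic perturbation theory in the spirit of Proposition \ref{prop6}: the gauge identity \eqref{fazam2} implies that conjugation by $e^{\pm iB\varphi_0(\cdot,\gamma)}$ interchanges the arguments $\x$ and $\x-\gamma$ of ${\bf A}$, and since $\tilde{\tilde{g}}_\gamma\equiv 1$ on ${\rm supp}(\tilde g_\gamma)$, the conjugated piece $e^{iB\varphi_0(\cdot,\gamma)}\tilde g_\gamma(H_L(B,\gamma)-z)^{-N_2}e^{-iB\varphi_0(\cdot,\gamma)}$ behaves on the relevant support like a resolvent of the genuine magnetic Hamiltonian $H_\infty(B)$. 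Combining the $L$-independent kernel bounds of Proposition \ref{prop6} with the Combes--Thomas exponential decay from the Appendix and the H\"older continuity of Lemma \ref{lemahoelder} then delivers the required uniform bound on $K_{M_\gamma}$. This magnetic-perturbation-theoretic control, rather than the (now routine) volume counting and integration-by-parts manipulations, is where the real difficulty lies and where the paper's general machinery pays off.
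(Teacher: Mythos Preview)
Your overall plan---integrate by parts in $z$, then control the trace term-by-term over $\gamma\in E$ and sum---matches the paper's strategy. The difference is in \emph{how} you bound each $\gamma$-term, and there the proposal has a real gap.

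The paper does not attempt a pointwise diagonal estimate on $K_{M_\gamma}$. Instead it bounds the trace by the \emph{trace norm}, factoring the integrand as a product of two Hilbert--Schmidt operators localized by $\tilde g_\gamma$ and $g_\gamma$. Using \eqref{fff2} (which costs an $L^\alpha$ because $\|{\bf A}_\gamma\|_\infty\lesssim L^\alpha$) and the volume bound $\|\tilde g_\gamma(-\Delta_D+1)^{-1}\|_{B_2}\lesssim L^{3\alpha/2}$, one gets $\|T_\gamma(z)\|_{B_1}\leq C\langle r\rangle^M L^{5\alpha}$ per $\gamma$, hence a total $\sim L^{2+3\alpha}$ after summing over $\#E\sim L^{2-2\alpha}$ centres. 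This is why the hypothesis $\alpha<1/3$ appears.

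Your route, by contrast, asserts $\sup_\x|K_{M_\gamma}(\x,\x;z)|\leq C(\eta)\langle r\rangle^{M_0}$ \emph{uniformly in $L$ and $\gamma$}, which would give the sharper $L^{2+\alpha}$ and hence any $\alpha<1$. But none of the results you invoke delivers this. Proposition~\ref{prop6} and Lemma~\ref{lemahoelder} are proved only for $H_\infty(B)$ on $\R^3$; the Appendix supplies operator-norm Combes--Thomas bounds for $H_L(B,\gamma)$ on the box, not pointwise kernel bounds. Your key sentence---that the conjugated piece $e^{iB\varphi_0(\cdot,\gamma)}\tilde g_\gamma(H_L(B,\gamma)-\zeta)^{-N_2}e^{-iB\varphi_0(\cdot,\gamma)}$ ``behaves on the relevant support like a resolvent of $H_\infty(B)$''---is a heuristic that conflates the local coincidence of the \emph{differential operators} with coincidence of their \emph{resolvent kernels}. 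The supports of $\tilde g_\gamma, g_\gamma$ sit right next to $\partial\Lambda_L$, where the Dirichlet kernel and the whole-space kernel differ substantially; there is no support separation here to make the boundary correction exponentially small. If you try to produce a pointwise bound for $(H_L(B,\gamma)-\zeta)^{-N_2}$ from the available tools (writing it as $(-\Delta_D+1)^{-1}$ times a bounded operator via \eqref{fff2}), the $L^\alpha$ growth of ${\bf A}_\gamma$ enters the constant, and your per-$\gamma$ bound degrades to at least $L^{5\alpha}$---no better than the paper's Hilbert--Schmidt argument and no longer yielding $\alpha<1$.

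In short: the volume-counting skeleton is right, but the pointwise kernel estimate on which your bound rests is not established by the machinery you cite. The robust way through is the paper's Hilbert--Schmidt factorization, which explains the $\alpha<1/3$ threshold in the statement.
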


\noindent {\bf Proof.} 
Using (\ref{fazam2}), and the fact that $\tilde{\tilde{g}}_\gamma
\tilde{g}_\gamma=\tilde{g}_\gamma$, we can rewrite the above term as
\begin{align}\label{kompar3}
&\frac{1}{ {\rm Vol}(\Lambda_L)} 
\cdot 
\sum_{\gamma\in E}{\rm Tr}\int_{\Gamma_\omega}
{f}_{FD}(z) 
 P_1(B)\tilde{g}_0 (H_\infty(B)-z)^{-1}g_0 \tilde{\tilde{g}}_\gamma 
\nonumber \\
&\cdot  
e^{iB\phi_0(\cdot,\gamma)}P_{2,\gamma}(B)\tilde{g}_\gamma 
(H_L(B,\gamma)-z-\omega)^{-1}
e^{-iB\phi_0(\cdot,\gamma)}g_\gamma dz.
\end{align}
After integrating by parts $N-1$ times with respect to $z$, 
we have to deal with
several situations. Let us take one resulting term (just the operator in the 
integrand):

\begin{align}\label{kompar4}
& P_1(B)\tilde{g}_0 (H_\infty(B)-z)^{-1}g_0 \tilde{\tilde{g}}_\gamma  
e^{iB\phi_0(\cdot,\gamma)}P_{2,\gamma}(B)\tilde{g}_\gamma 
(H_L(B,\gamma)-z)^{-N}
e^{-iB\phi_0(\cdot,\gamma)}g_\gamma ,
\end{align}
and let us estimate its trace norm. The factor containing
$H_\infty(B)$ is just bounded, and we cannot use it as a
Hilbert-Schmidt factor. What we do is to commute $\tilde{g}_\gamma$
over one resolvent to the right and get the identity:
\begin{align}\label{kompar5}
&P_{2,\gamma}(B)\tilde{g}_\gamma (H_L(B,\gamma)-z-\omega )^{-N}
g_\gamma\\
& =P_{2,\gamma}(B)(H_L(B,\gamma)-z-\omega)^{-1}\tilde{g}_\gamma
(H_L(B,\gamma)-z)^{-N+1}
g_\gamma\nonumber \\
&+P_{2,\gamma}(B)(H_L(B,\gamma)-z-\omega)^{-1}[H_L(B,\gamma),\tilde{g}_\gamma]
(H_L(B,\gamma)-z-\omega)^{-N}g_\gamma.\nonumber 
\end{align}
Now the second term will again contain at least one derivative of
$\tilde{g}_\gamma$ and reasoning as we did for \eqref{djh1} we can
show it will be exponentially small. Let
us look at the first term. The operator
$P_{2,\gamma}(B)(H_L(B,\gamma)-\zeta)^{-1}$ is only bounded. 
But $\tilde{g}_\gamma
(H_L(B,\gamma)-\zeta)^{-N+1}
g_\gamma$ is trace class and 
\begin{align}\label{pertii}
||\tilde{g}_\gamma
&(H_L(B,\gamma)-\zeta)^{-N+1}
g_\gamma||_{B_1}\\ 
&\leq \frac{1}{\eta^{N-2}}||\tilde{g}_\gamma
(H_L(B,\gamma)-\zeta)^{-1}||_{B_2}\cdot ||
(H_L(B,\gamma)-\zeta)^{-1}g_\gamma||_{B_2}.\nonumber 
\end{align}
Hence using \eqref{fff2} we have 
$$||\tilde{g}_\gamma 
(H_L(B,\gamma)-\zeta)^{-1}||_{B_2}\leq \C \langle r\rangle ^M
||\tilde{g}_\gamma (-\Delta_D+1)^{-1}||_{B_2} \cdot L^\alpha.$$
But the Hilbert-Schmidt norm of $\tilde{g}_\gamma (-\Delta_D+1)^{-1}$
is of order of the square root of the support of $\tilde{g}_\gamma$,
that is $L^{3\alpha/2}$ (use here \eqref{intkern} with $B=0$). 
The other factor comes with a similar contribution, hence we can write 
$$||\tilde{g}_\gamma
(H_L(B,\gamma)-z-\omega )^{-N+1}
g_\gamma||_{B_1}\leq \C\: \langle r\rangle ^M \cdot L^{5\alpha}.$$
Now using this in the integral with respect to $z$, this particular
term will give a contribution of  $L^{5\alpha}$ for each
$\gamma$. Since we have $\sim L^{2-2\alpha}$ different $\gamma$'s, the
total contribution will be bounded by $L^{2+3\alpha}$. But if $\alpha
<1/3$, after we divide with the volume of $\Lambda_L$ it will converge
to zero. 

Now let us go back to \eqref{kompar3}, and see that after integration
by parts we can get a term like 
\begin{align}\label{kompar6}
& P_1(B)\tilde{g}_0 (H_\infty(B)-z)^{-N}g_0 
\tilde{\tilde{g}}_\gamma \nonumber \\
&\cdot  
e^{iB\phi_0(\cdot,\gamma)}P_{2,\gamma}(B)\tilde{g}_\gamma
(H_L(B,\gamma)-z-\omega )^{-1}g_\gamma .
\end{align}
Here the operator $P_{2,\gamma}(B)\tilde{g}_\gamma
(H_L(B,\gamma)-z)^{-1}$ is not Hilbert-Schmidt, so we have to look at
the first factor. 

We can write 
\begin{align}\label{kompar7}
&P_1(B)\tilde{g}_0 (H_\infty(B)-z)^{-N}g_0 \tilde{\tilde{g}}_\gamma\\
&=P_1(B)\tilde{g}_0 (H_\infty(B)-z)^{-N+1}\tilde{\tilde{g}}_\gamma
(H_\infty(B)-z)^{-1}g_0\nonumber \\
&+ P_1(B)\tilde{g}_0 (H_\infty(B)-z)^{-N}
[H_\infty(B),\tilde{\tilde{g}}_\gamma](H_\infty(B)-z)^{-1}g_0.\nonumber 
\end{align}
In the first term, the function  $\tilde{\tilde{g}}_\gamma$ makes the
two resolvents next to it become Hilbert-Schmidt, each having a norm
proportional with $L^{3\alpha/2}$ and some power of $\langle r\rangle$
(use the exponential decay of the kernels). 
So this term is ''good'', considering that we have to divide with $L^3$ in the end. 
The second term contains at least one
derivative of $\tilde{\tilde{g}}_\gamma$ (here 
$[H_\infty(B),\tilde{\tilde{g}}_\gamma]$ is linear in $P_j(B)$'s), 
together with factors like 
$$\{P_1(B)\tilde{g}_0 (H_\infty(B)-z)^{-N}P_j(B)\}
(\partial_j\tilde{\tilde{g}}_\gamma)(H_\infty(B)-z)^{-1}g_0.$$
Now here we can use the estimate from Proposition \ref{prop6} and see
that we again have a product of two Hilbert-Schmidt operators: the first Hilbert-Schmidt norm will be proportional 
with $L^{3/2}$, while the other one will behave like  $L^{3\alpha/2}$.  When we divide by $L^3$, this contribution will go to zero.

All the other terms resulting from integrating by parts with respect
to $z$ can be treated in a similar way. \qed

\vspace{0.5cm}

Now let us go back to \eqref{kompar1} and analyze another boundary term:
\begin{lemma}\label{lema2} For every $0<\alpha<1/3$ we have:
\begin{align}\label{kompar8}
&\lim_{L\to \infty}\frac{1}{ {\rm Vol}(\Lambda_L)} 
\cdot {\rm Tr}
\int_{\Gamma_\omega}
{f}_{FD}(z) \nonumber \\
& \cdot 
 P_1(B)  \sum_{\gamma'\in E}
e^{iB\phi_0(\cdot,\gamma')}\tilde{g}_{\gamma'} (H_L(B,\gamma')-z)^{-1}
e^{-iB\phi_0(\cdot,\gamma')}g_{\gamma'}\nonumber \\
&\cdot   P_2(B)\sum_{\gamma\in E}
e^{iB\phi_0(\cdot,\gamma)}\tilde{g}_\gamma (H_L(B,\gamma)-z-\omega )^{-1}
e^{-iB\phi_0(\cdot,\gamma)}g_\gamma dz =0.
\end{align}
\end{lemma}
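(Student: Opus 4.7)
The approach mirrors that of Lemma \ref{lema1}, with one crucial additional structural observation: although the integrand is a double sum over $E \times E$, it collapses effectively to a single sum of length $\#E \sim L^{2-2\alpha}$. Indeed, $P_2(B)$ is a first-order differential operator and is therefore local; combined with the support properties of the partition, the $(\gamma,\gamma')$-term is non-zero only if $\mathrm{supp}(g_{\gamma'}) \cap \mathrm{supp}(\tilde g_\gamma) \ne \emptyset$. Since $\mathrm{supp}(g_{\gamma'}) \subset K(\gamma', 2L^\alpha)$ and $\mathrm{supp}(\tilde g_\gamma) \subset K(\gamma, 10 L^\alpha)$ with $\gamma,\gamma' \in L^\alpha\mathbb{Z}^3$, this forces $|\gamma - \gamma'|_\infty \lesssim L^\alpha$, so each $\gamma$ admits only $O(1)$ partners $\gamma'$. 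The effective number of non-vanishing pairs is therefore $\sim L^{2-2\alpha}$.

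Next, I would integrate by parts $N-1$ times with respect to $z$ along $\Gamma_\omega$, converting the two resolvents into high powers and replacing $f_{FD}$ by a smooth anti-derivative that retains its exponential decay in $|\mathrm{Re}(z)|$. For each resulting term and each admissible pair $(\gamma,\gamma')$, the trace norm is controlled by splitting the product into two Hilbert-Schmidt blocks, one near $\gamma'$ and one near $\gamma$. Within each block, \eqref{fff2} together with $\|\tilde g_\gamma(-\Delta_D+1)^{-1}\|_{B_2} \lesssim L^{3\alpha/2}$ (from \eqref{intkern}) yields
\[
\|\tilde g_\gamma (H_L(B,\gamma)-z)^{-1}\|_{B_2} \;\leq\; \C\,\langle r\rangle^M L^{5\alpha/2},
\]
and similarly for the other factor. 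The inner $P_2(B)$ between the two blocks is commuted across the phase $e^{iB\phi_0(\cdot,\gamma)}$ via \eqref{fazam2}, turning it into $P_{2,\gamma}(B)$ whose vector potential, pinned to $\mathrm{supp}(\tilde g_\gamma)$, is bounded by $L^\alpha$ (see \eqref{geigi2}); it is then absorbed into an adjacent resolvent via the commutator identity of \eqref{kompar5}, the commutator remainder being exponentially small in $L^\alpha$ on account of \eqref{geigi7}. The leftmost $P_1(B)$ is absorbed into the $\gamma'$-block in the same way. Thus each admissible pair contributes $\C \langle r\rangle^M L^{5\alpha}$ to the trace norm of the integrand.

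Summing over the $\sim L^{2-2\alpha}$ admissible pairs, integrating against $f_{FD}$ (whose exponential decay in $|\mathrm{Re}(z)|$ absorbs $\langle r\rangle^M$) along $\Gamma_\omega$, and finally dividing by $\mathrm{Vol}(\Lambda_L) \sim L^3$, one obtains a total bound of order $L^{-1+3\alpha}$, which vanishes as $L \to \infty$ whenever $\alpha < 1/3$. The principal obstacle is precisely the one already overcome in Lemma \ref{lema1}: ensuring that when $P_2(B)$ (respectively $P_1(B)$) is pushed inside the $\gamma$-block (respectively $\gamma'$-block), the magnetic piece does not grow faster than $L^\alpha$. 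This is the very reason for building the operators $H_L(B,\gamma)$ with the truncated gauge ${\bf A}_\gamma = \tilde{\tilde g}_\gamma \,{\bf A}(\cdot - \gamma)$, which by \eqref{geigi2} agrees with $ {\bf A}(\cdot - \gamma)$ on $\mathrm{supp}(\tilde g_\gamma)$ yet is globally bounded by $L^\alpha$; the local gauge identity \eqref{fazam2} then converts the ``external'' $P_j(B)$ into the ``internal'' $P_{j,\gamma}(B)$ with no loss beyond the tolerated $L^\alpha$ growth.
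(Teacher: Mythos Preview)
Your proposal is correct and follows essentially the same route as the paper: the key observation that each $\gamma$ has only $O(1)$ partners $\gamma'$ (so that the double sum has $\sim L^{2-2\alpha}$ non-zero terms), integration by parts in $z$, and the Hilbert--Schmidt splitting with each block of norm $\sim L^{5\alpha/2}$ yielding a total $L^{2+3\alpha}$. The paper's own proof is in fact a terse one-paragraph sketch that omits the details you have supplied, so your write-up is a faithful expansion of it.
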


\noindent{\bf Proof.} Let us note that when keeping $\gamma$ fixed, only a finite number ($L$-independent) 
of $\gamma'$'s will have an 
overlapping support. This means that the above double sum will only contain around $L^{2-2\alpha}$ non-zero terms. 
Now use again \eqref{fazam} and integration by parts
with respect to $z$. Each non-zero term in the double sum will be a product of two Hilbert-Schmidt operators, each 
with a Hilbert-Schmidt norm of the order of $L^{5\alpha/2}$. The total trace norm will grow at most like 
$L^{2+3\alpha}$, hence if $\alpha<1/3$ this term will not contribute. We do not give more
details. \qed 

\vspace{0.5cm}

The last ingredient in proving \eqref{efel22} is contained in the following result (see also \eqref{aproxunu}):
\begin{proposition}\label{prop222}
\begin{align}\label{limtherm1}
&  \lim_{L\to \infty}-\frac{1}{ {\rm Vol}(\Lambda_L)} 
\cdot {\rm Tr}
\int_{\Gamma_\omega}
{f}_{FD}(z) 
\left \{ P_1(B)\tilde{g_0}(H_\infty(B)-z)^{-1}g_0 P_2(B) \tilde{g_0}(H_\infty(B)-z-\omega)^{-1}g_0\frac{}{}\right .\nonumber \\
&\left .\frac{}{} P_1(B)\tilde{g_0}(H_\infty(B)-z+\omega)^{-1}g_0 P_2(B) \tilde{g_0}(H_\infty(B)-z)^{-1}g_0
 \right \}dz =-\int_\Omega s_B(\x)d\x.
\end{align}
\end{proposition}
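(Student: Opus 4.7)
The plan is to reduce the left-hand side of \eqref{limtherm1} to a spatial average of the full-space diagonal kernel $s_B$ and then invoke periodicity. First, I would rerun the integration-by-parts-in-$z$ argument already used for part \textbf{(i)} so that, after a finite number of integrations by parts, the contour integrand becomes a trace-class operator for every $z\in\Gamma_\omega$ whose trace norm grows at most polynomially in $r=\langle|\Re z|\rangle$. This justifies interchanging $\mathrm{Tr}$ with $\int_{\Gamma_\omega}$ and expressing the trace as a spatial integral of a jointly continuous diagonal kernel. Denote by $\Psi_L(\x,\x;z)$ this diagonal kernel and by $\Psi_\infty(\x,\x;z)$ the corresponding infinite-volume object (all cutoffs removed); by Theorem \ref{teorema1}(ii),
\begin{align*}
s_B(\x)=\int_{\Gamma_\omega} f_{FD}(z)\,\Psi_\infty(\x,\x;z)\,dz.
\end{align*}

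The decisive step is a pointwise bulk/boundary comparison. Choose $C>0$ (independent of $L$) so that $g_0\equiv\tilde g_0\equiv 1$ on $\Lambda_L^b:=\Lambda_L\setminus\Xi_L(C)$, which is possible by \eqref{margine4}--\eqref{margine9} and the covering property of $E$; then $\mathrm{Vol}(\Lambda_L\setminus\Lambda_L^b)=O(L^{2+\alpha})$. Expanding each factor $\tilde g_0 G g_0$ appearing in $\Psi_L$ via
\begin{align*}
\tilde g_0 G g_0 = G - (1-\tilde g_0)G - G(1-g_0) + (1-\tilde g_0)G(1-g_0),
\end{align*}
the difference $\Psi_L-\Psi_\infty$ becomes a finite sum of operator products in which at least one multiplicative factor $(1-\tilde g_0)(\y)$ or $(1-g_0)(\y)$ appears at some intermediate integration variable $\y$. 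For $\x\in\Lambda_L^b$ such $\y$'s lie at distance $\geq{\rm dist}(\x,\Xi_L(C))$ from $\x$, so the exponential decay of the kernels from Propositions \ref{prop5}--\ref{prop6} gives
\begin{align*}
|\Psi_L(\x,\x;z)-\Psi_\infty(\x,\x;z)|\leq \C\,\langle r\rangle^M e^{-\delta\,{\rm dist}(\x,\Xi_L(C))/\langle r\rangle},\quad \x\in\Lambda_L^b.
\end{align*}
On $\Lambda_L\setminus\Lambda_L^b$ the same estimates give only the crude bound $|\Psi_L(\x,\x;z)|\leq \C\,\langle r\rangle^M$.

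After integrating in $z$ against $f_{FD}$ (which absorbs every $\langle r\rangle^M$ by the exponential decay of the Fermi-Dirac function along $\Gamma_\omega$), the boundary-layer contribution is $O(L^{2+\alpha})$, while the bulk error, after trading part of the exponential rate for polynomial control of $\langle r\rangle$ as in \eqref{geigi8}, is exponentially small in $L^\alpha$ and so contributes $o(L^3)$. Dividing by $\mathrm{Vol}(\Lambda_L)\sim L^3$ and using the identification above of $s_B$, one obtains
\begin{align*}
-\frac{1}{\mathrm{Vol}(\Lambda_L)}\,\mathrm{Tr}\!\int_{\Gamma_\omega}\!f_{FD}(z)\Psi_L(z)\,dz = -\frac{1}{\mathrm{Vol}(\Lambda_L)}\int_{\Lambda_L^b} s_B(\x)\,d\x + o(1).
\end{align*}
Finally, the continuity and $\mathbb{Z}^3$-periodicity of $s_B$ (Theorem \ref{teorema1}(ii)), together with $\mathrm{Vol}(\Lambda_L\setminus\Lambda_L^b)/\mathrm{Vol}(\Lambda_L)\to 0$ and the smoothness of $\partial\Lambda_1$, imply by a standard van Hove averaging argument that
\begin{align*}
\frac{1}{\mathrm{Vol}(\Lambda_L)}\int_{\Lambda_L^b} s_B(\x)\,d\x \;\longrightarrow\; \int_\Omega s_B(\x)\,d\x,
\end{align*}
which proves \eqref{limtherm1}.

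The main obstacle is the pointwise bulk estimate: heuristically the multiplicative cutoffs are invisible far from $\partial\Lambda_L$, but quantifying this requires uniform (in $z\in\Gamma_\omega$) control of the integral kernels of $P_j(B)$-sandwiched resolvents and their compositions, whose exponential rate $\delta/\langle r\rangle$ degenerates as $|\Re z|\to\infty$ while the prefactor $\langle r\rangle^M$ grows. The compensation mechanism -- absorbing $\langle r\rangle^M$ through $f_{FD}$ at the expense of a small fraction of the exponential rate -- is exactly the device already used in Proposition \ref{prop22} for $V_L(B,z)$ (see \eqref{geigi8}), and transfers to the present setting essentially verbatim.
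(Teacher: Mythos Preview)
Your approach is correct and reaches the same conclusion, but the route differs from the paper's in a way worth noting.

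The paper works \emph{operator-theoretically}. It peels the cutoffs off one at a time using locality identities such as $g_0 P_2(B)\tilde g_0=g_0 P_2(B)=P_2(B)-(1-g_0)P_2(B)$, then controls each error operator in trace norm via Hilbert--Schmidt factorizations (splitting $(1-g_0)=(1-g_0)\chi_{\Lambda_L}+(1-\chi_{\Lambda_L})$ and invoking Lemma~\ref{bucurestilema1} for the exterior piece, volume--of--support bounds for the interior piece). Only at the very end does the paper pass to the diagonal kernel, once the operator has been reduced to $\tilde g_0 F_\infty g_0$.

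You instead work \emph{kernel-theoretically} from the outset: after integration by parts you compare the diagonal kernels $\Psi_L(\x,\x;z)$ and $\Psi_\infty(\x,\x;z)$ pointwise via the expansion $\tilde g_0 G g_0=G-(1-\tilde g_0)G-G(1-g_0)+(1-\tilde g_0)G(1-g_0)$ and the exponential decay of Propositions~\ref{prop5}--\ref{prop6}. This is more geometric and arguably more transparent, since it makes explicit that the error is concentrated near $\partial\Lambda_L$. The price is that you need the full pointwise kernel bounds (which the paper has anyway, from part~(ii)), whereas the paper's argument could in principle get by with the weaker $L^2$-type exponential bounds of the Appendix.

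One imprecision: your sentence ``the bulk error \ldots\ is exponentially small in $L^\alpha$'' overstates the situation. For $\x$ near the inner edge of $\Lambda_L^b$ the distance ${\rm dist}(\x,\Xi_L(C))$ is small, so the pointwise bound is not uniformly exponentially small on $\Lambda_L^b$. What is true is that $\int_{\Lambda_L^b}e^{-\delta\,{\rm dist}(\x,\Xi_L(C))/\langle r\rangle}\,d\x=O(L^2\langle r\rangle)$ by a co-area argument, which after dividing by ${\rm Vol}(\Lambda_L)$ still vanishes; the conclusion stands, only the description needs adjusting.
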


\noindent{\bf Proof.} First, due to support properties, we have
\begin{equation}\label{intsupr}
g_0P_2(B) \tilde{g_0}=g_0P_2(B)=P_2(B)-(1-g_0)P_2(B). 
\end{equation}
Because $1-g_0$ is supported outside of a thin region around the boundary of
$\Lambda_L$, all terms generated by $(1-g_0)P_2(B)$ will converge to zero; let us
prove this. After integrating $N-1$ by parts with respect to $z$, we obtain
several terms from the integrand which look like this one:
($N_1+N_2\geq N\geq 10 $):
\begin{equation}\label{intsupr2}
 P_1(B)\tilde{g_0}(H_\infty(B)-z)^{-N_1}(1-g_0)P_2(B)
(H_\infty(B)-z-\omega)^{-N_2}g_0.
\end{equation}
Due to the symmetry of this term, assume without loss of generality that $N_1\geq 5$. Then by writing
$1-g_0=(1-g_0)\chi_{\Lambda_L}+(1-g_0)(1-\chi_{\Lambda_L})$, we get two 
types of
contributions. The one coming from 
$(1-g_0)(1-\chi_{\Lambda_L})=(1-\chi_{\Lambda_L})$ is
localized outside $\Lambda_L$, and its trace norm will be
exponentially small in $L^\alpha$; we can apply 
Lemma \ref{bucurestilema1} since the distance between $\Lambda_L^c$ and the
supports of $\tilde{g}_0$ and $g_0$ is of order $L^\alpha$. The number of needed 
covering unit cubes is polynomially bounded in 
$L$.

Thus the only contribution from \eqref{intsupr2} can come from:
\begin{equation}\label{intsupr3}
 P_1(B)\tilde{g_0}(H_\infty(B)-z)^{-N_1}(1-g_0)\chi_{\Lambda_L}
P_2(B)(H_\infty(B)-z-\omega)^{-N_2}g_0.
\end{equation}
Here we can write:
\begin{align}\label{intsupr4}
 & P_1(B)\tilde{g_0}(H_\infty(B)-z)^{-N_1}(1-g_0)\chi_{\Lambda_L}\\
& \{P_1(B)\tilde{g_0}(H_\infty(B)-z)^{-2}\}\cdot
 \{(H_\infty(B)-z)^{-N_1+2}(1-g_0)\chi_{\Lambda_L}\},\nonumber 
\end{align}
where both factors are Hilbert-Schmidt, with kernels exponentially
localized near diagonal as in Propositions \ref{prop5} and
\ref{prop6}. The Hilbert-Schmidt norm of the first factor is bounded
by $L^{3/2}$, while for the second one we have a bound of
$L^{1+\alpha/2}$ (square roots of certain volumes). The trace norm of
the product is thus bounded by $L^{5/2+\alpha/2}$ and some polynomial
in $\langle r \rangle$. After integrating with respect to $z$, and
dividing with $L^3$ (the volume of $\Lambda_L$), this term will
converge to zero provided $\alpha <1$. 

Now we can go back to \eqref{intsupr} and analyze the term generated
by $P_2(B)$. Because there are no other cut-offs in the middle, and
because the commutator $[P_1(B),\tilde{g}_0]$ will generate another fast
decaying term, we see
that we have just proved the following identity (see Theorem
\ref{teorema1} ii for the definition of $F_\infty$):
 \begin{align}\label{intsupr5}
 \lim_{L\to \infty}\left (\sigma_L(B) +\frac{1}{{\rm
       Vol}(\Lambda_L)}{\rm Tr}\{\tilde{g}_0 F_\infty g_0\}\right ).
\end{align}
But the operator $\tilde{g}_0 F_\infty g_0$ is trace class, with a
jointly continuous kernel, hence (see \eqref{margine1}, 
\eqref{margine4} and \eqref{margine8})
$${\rm Tr}\{\tilde{g}_0 F_\infty g_0\}=\int_{{\rm supp}(g_0)}
s_B(\x,\x) g_0(\x)d\x.$$
Using the periodicity of $s_B$ with respect to $\mathbb{Z}^3$ and the 
support properties of $g_0$, we
finally get:
\begin{align}\label{intsupr6}
 \lim_{L\to \infty}\sigma_L(B)=-\int_{\Omega}
s_B(\x,\x)d\x
\end{align}
and the proof of \eqref{efel22} is over. 

\qed

\subsection{Proof of \eqref{efel2}.} 

We start by investigating $\partial_B\sigma_L(0)$ and
try to put it in a form which is better suited for the thermodynamic
limit. 

Note that we have already argued that $\sigma_L$ was smooth in $B$
near zero (see the remark around \eqref{derivrez}). The hard part is
to show that the polynomial growth in $L$ of the trace norm does not appear
in the actual trace. Similar difficulties involving magnetic semi-groups were 
encountered in \cite{ABN, Cor, BCL1, BCL2}.

\subsubsection{Only $U_L(B)$ counts.}

\noindent{\bf Notation}. In order to shorten our formulas, we will
sometimes write $(z\to z-\omega)$, which means that we have to repeat the
previously appearing product of operators in which we have to change
$z$ with $z-\omega$. 

\vspace{0.5cm}

First, let us prove that even if we differentiate with respect to $B$,
we still have a result similar to Proposition \ref{prop22}. 
\begin{proposition}\label{lemma100} At $B=0$ we have:
\begin{align}\label{derkompar1}
&  \partial_B \left \{\sigma_{L}(B) +\frac{1}{ {\rm Vol}(\Lambda_L)} 
\cdot {\rm Tr}
\int_{\Gamma_\omega}
{f}_{FD}(z) 
\left \{ P_1(B)U_L(B,z)P_2(B) 
U_L(B,z+\omega)\right .\right . \nonumber \\
&+ \left . \left . z \rightarrow z-\omega \right \}dz 
\frac{}{} \right \}=\mathcal{O}(L^{-\infty}).
\end{align}
\end{proposition}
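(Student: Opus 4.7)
My plan is to imitate the proof of Proposition \ref{prop22} one differentiation level higher, relying on the same exponential-smallness mechanism to absorb the additional $L$-growth that the $B$-derivative introduces. Substituting \eqref{geigi6} for both resolvent factors in \eqref{efel} and expanding, one writes $\sigma_L(B) = \tilde\sigma_L(B) + E_L(B)$, where $\tilde\sigma_L(B)$ is the $U_L$-only expression appearing in \eqref{derkompar1} and $E_L(B)$ is the finite sum of all remaining operator products, each of which contains at least one factor $(H_L(B)-\zeta)^{-1} V_L(B,\zeta)$. Proposition \ref{prop22} already guarantees $E_L(B) = \mathcal{O}(L^{-\infty})$, so it suffices to show $\partial_B E_L(B)|_{B=0} = \mathcal{O}(L^{-\infty})$.

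\textbf{Leibniz expansion.} Differentiating under the trace and the $z$-integral (justified as in the Remark after the proof of \textbf{i}), the Leibniz rule together with \eqref{derivrez} produces a finite number of insertions, which come in three flavours. In flavour (a) the derivative hits one of the full-space resolvents $(H_L(B)-\zeta)^{-1}$ surrounding the $V_L$ factor, generating an insertion of $\partial_B H_L(0) = \ba\cdot{\bf P}(0) + {\bf P}(0)\cdot\ba$. In flavour (b) the derivative hits a local resolvent $(H_L(B,\gamma)-\zeta)^{-1}$ inside $U_L$ or $V_L$, producing an insertion involving $\ba_\gamma\cdot{\bf P}$ whose coefficient is bounded by $\C L^\alpha$ by \eqref{geigi2}, or it hits one of the magnetic phases $e^{\pm iB\varphi_0(\cdot,\gamma)}$ in \eqref{geigi4}, producing a multiplication by $\varphi_0(\cdot,\gamma)$ which, on the support of $\tilde g_\gamma$ (where $|\x-\gamma|\leq 10 L^\alpha$), satisfies $|\varphi_0(\x,\gamma)|=\tfrac12|\gamma_2(x_1-\gamma_1)-\gamma_1(x_2-\gamma_2)|\leq \C L^{1+\alpha}$. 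In flavour (c) the derivative hits the prefactor $P_j(B)$, producing the bounded multiplication $a_j$ localised by an adjacent cut-off. In flavours (b) and (c) the bad $L$-growth is at most $L^{1+\alpha}$, which is easily defeated after integrating by parts in $z$ enough times to supply Hilbert--Schmidt factors (via \eqref{fff2} and the kernel decay of Propositions \ref{prop5}--\ref{prop6}) and then invoking the exponential smallness $\exp(-\delta_1 L^\alpha/\langle r\rangle)$ from \eqref{geigi7} and Lemma \ref{bucurestilema1}.

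\textbf{Main obstacle.} Flavour (a) is the only genuinely delicate case, because $|\ba(\x)|$ grows linearly on $\Lambda_L$, so the norm of the insertion $\ba\cdot{\bf P}(0)+{\bf P}(0)\cdot\ba$ is $\mathcal{O}(L)$. I would control it by keeping this insertion sandwiched between two high powers of $(H_\infty(0)-\zeta)^{-1}$, whose kernels by Proposition \ref{prop6} carry an exponential weight $\exp(-\delta|\cdot-\cdot|/\langle r\rangle)$. The $V_L$ factor still forces the arguments of the resulting composite kernel to lie in regions separated by a distance $\gtrsim L^\alpha$ (compare \eqref{extras1} and \eqref{geigi7}), yielding the multiplicative penalty $\exp(-\delta_2 L^\alpha/\langle r\rangle)$. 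Bounding $|\ba(\x)|\leq |\x|\leq \C L$ pointwise and summing over the $\sim L^{2-2\alpha}$ boundary centres $\gamma\in E$, over the $\sim L^3$ unit-cube cover of $\Lambda_L$ used to form trace-norm estimates as in Lemma \ref{bucurestilema1}, and over the finitely many Leibniz branches, incurs only polynomial-in-$L$ losses, all of which are absorbed by the exponential. After integrating over $\Gamma_\omega$ using the exponential decay of $f_{FD}$ and its antiderivatives in $|\Re z|$ and dividing by $\mathrm{Vol}(\Lambda_L)\sim L^3$, the whole contribution decays faster than any inverse power of $L$, establishing \eqref{derkompar1}. This is the same cancellation between the linear growth of $\ba$ and the exponential decay of the magnetic resolvent kernels that underlies the regularized magnetic perturbation theory developed in \cite{Cor, CN, BCL1, BCL2}.
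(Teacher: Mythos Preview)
Your overall strategy is right and matches the paper's: both argue that the error terms in $E_L(B)$, each containing a $V_L$ factor, survive one $B$-derivative still being $\mathcal{O}(L^{-\infty})$ because the separation $\sim L^\alpha$ between ${\rm supp}(D\tilde g_\gamma)$ and ${\rm supp}(g_\gamma)$ supplies an exponential factor that dominates any polynomial-in-$L$ growth the derivative produces.

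There is, however, one case your flavour classification (a)--(c) misses, and it is exactly the one the paper singles out as the key technical lemma (Lemma~\ref{lemma223}). The operators $U_L$ and $V_L$ contain, in their $\gamma=0$ pieces, the \emph{whole-space} resolvent $(H_\infty(B)-z)^{-1}$, sandwiched by $\tilde g_0$ and $g_0$. When $\partial_B$ lands there, the formal identity \eqref{derivrez} produces an insertion $2\mathbf{A}\cdot\mathbf{P}(0)$ with $\mathbf{A}$ defined on all of $\R^3$, and as the paper explicitly warns after \eqref{derkompar3}, the right-hand side of that resolvent identity ``makes no sense'' as a bounded-operator equation because $\mathbf{A}$ is not relatively bounded. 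Your crude bound $|\ba(\x)|\leq CL$ works only for the box resolvent $(H_L(B)-\zeta)^{-1}$ (your flavour (a)), where $\x\in\Lambda_L$; for $(H_\infty(B)-z)^{-1}$ the intermediate integration variable runs over $\R^3$ and you cannot simply bound $|\mathbf{A}|$ by a power of $L$ before first justifying that the expression converges at all.

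The paper resolves this by proving trace-norm differentiability of $(\Delta\tilde g_0)(H_\infty(B)-z)^{-N}g_0$ directly: one inserts \emph{asymmetric} exponential weights $e^{\frac{\delta_2}{\langle r\rangle}\langle\cdot-\x_0\rangle}$ on the left and $e^{-\frac{\delta_3}{\langle r\rangle}\langle\cdot-\x_0\rangle}$ on the right with $\delta_2<\delta_3$ (see \eqref{derkompar5}--\eqref{derkompar7}), so that the excess decay $e^{-(\delta_3-\delta_2)\langle\cdot-\x_0\rangle/\langle r\rangle}$ absorbs the linear growth of $\mathbf{A}$ locally, without ever invoking a global bound on $|\mathbf{A}|$. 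Your sketch can be repaired along these lines, but the point is substantive: it is the one place where the unboundedness of the magnetic potential on $\R^3$ genuinely bites, and the mechanism that tames it is the weighted-resolvent trick, not the ``exponential kills polynomial'' argument you invoke for flavours (a)--(c).
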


\noindent{\bf Proof}. As usual, before differentiating with respect to
$B$ one has to perform a certain number of integrations by parts with
respect to $z$. There will appear many terms in the remainder
containing $V_L(B,z)$ (see \eqref{geigi5}), but all of them will have
the distinctive feature of containing pairs of cut-off
functions whose supports are at a distance $\sim L^\alpha$ one from
another. 

We do not want to treat all the situations which can appear here, and
instead we will only prove a typical estimate needed for the result.

\begin{lemma}\label{lemma223}
Assume that $N$ is large enough, and choose $z$ such that ${\rm
  dist}\{z,[0,\infty )\}=\eta>0$, $r=\Re(z)$. Then the map 
$$(-1,1)\ni B\to (\Delta \tilde{g}_0)(H_\infty(B)-z)^{-N}g_0 \in 
B_1(L^2)$$
is differentiable in the trace norm, and there exists $\delta_1$ small enough and $M$
large enough such that 
\begin{align}\label{derkompar2}
\left \Vert \{\partial_B(\Delta
    \tilde{g}_0)(H_\infty(B)-z)^{-N}g_0\}_{B=0}\right \Vert_{1} \leq
  \C(\eta,N)e^{-\frac{\delta_1}{\langle r\rangle }L^{\alpha}}
\langle r\rangle ^M.
\end{align}
Moreover, 
\begin{align}\label{derkompar20}
&\left \Vert (\Delta
    \tilde{g}_0)(H_\infty(B)-z)^{-N}g_0-(\Delta
    \tilde{g}_0)(H_\infty(0)-z)^{-N}g_0\right .  \nonumber \\
&\left . -B\{\partial_B(\Delta
    \tilde{g}_0)(H_\infty(B)-z)^{-N}g_0\}_{B=0}\right 
  \Vert_{B_1}\leq 
\C(\eta,N)e^{-\frac{\delta_1}{\langle r\rangle }L^{\alpha}}\langle r\rangle ^M.
\end{align}
\end{lemma}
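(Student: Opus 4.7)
The plan is to write the $B$-derivative at $B=0$ via the standard resolvent identity and then exploit the spatial separation $\sim L^\alpha$ between $\mathrm{supp}(\Delta\tilde g_0)\subset \Xi_L(1/2)\setminus\Xi_L(1/4)$ and $\mathrm{supp}(g_0)\subset\Lambda_L\setminus\Xi_L(1)$. Differentiating $N$ copies of the identity \eqref{derivrez} yields
\begin{align*}
\partial_B (H_\infty(B)-z)^{-N}\big|_{B=0}
=-\sum_{j=0}^{N-1}(H_\infty(0)-z)^{-j-1}\, W_0\, (H_\infty(0)-z)^{-N+j},
\end{align*}
where $W_0:=\mathbf{a}\cdot\mathbf{P}(0)+\mathbf{P}(0)\cdot\mathbf{a}$ has coefficients that grow linearly in $|\x|$. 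This linear growth is the only source of trouble, and the task is to absorb it by exponential weights before the thermodynamic limit is taken.

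First I would fix a reference point $\x_0\in\mathrm{supp}(g_0)$ and insert the multiplicative identity $e^{\delta\langle\cdot-\x_0\rangle/\langle r\rangle}\cdot e^{-\delta\langle\cdot-\x_0\rangle/\langle r\rangle}$ between every pair of consecutive factors in each summand, exactly as in \eqref{ggg1}. Since $|\mathbf{a}(\x)|\lesssim \langle\x-\x_0\rangle$, a vanishingly small adjustment of the exponential rate swallows the linear factor, at the price of one polynomial power of $\langle r\rangle$. The exponentially conjugated resolvents remain bounded on $L^2$ by the Combes--Thomas estimates of the Appendix, and combined with enough copies of the cut-offs $(\Delta\tilde g_0)$ and $g_0$ they become Hilbert--Schmidt with norm polynomial in $\langle r\rangle$, exactly as in the proof of Lemma \ref{bucurestilema1}. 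Pairing two such Hilbert--Schmidt factors gives a trace-class bound, and the remaining outer factor $e^{-\delta\langle\cdot-\x_0\rangle/\langle r\rangle}(\Delta\tilde g_0)$ forces the prefactor $\exp\{-\delta_1 L^\alpha/\langle r\rangle\}$ because $\mathrm{dist}(\mathrm{supp}(\Delta\tilde g_0),\x_0)\gtrsim L^\alpha$. Summing the $N$ terms and absorbing numerical constants delivers \eqref{derkompar2}.

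For \eqref{derkompar20} I would use the integral form of the Taylor remainder,
\begin{align*}
&(H_\infty(B)-z)^{-N}-(H_\infty(0)-z)^{-N}-B\,\partial_B(H_\infty(B)-z)^{-N}\big|_{B=0} \\
&\qquad =B^2\int_0^1 (1-t)\,\partial_B^2 (H_\infty(tB)-z)^{-N}\,dt,
\end{align*}
and repeat the previous argument on the second derivative. Now $\partial_B^2$ produces at most two factors of the linearly growing $\mathbf{a}$, together with a bounded $|\mathbf{a}|^2$ piece, all sandwiched between resolvents. The resolvents now carry a genuine magnetic field $tB$ with $|tB|\le 1$, but thanks to the magnetic perturbation identities \eqref{fazam6}--\eqref{fazam667} and the Combes--Thomas bounds of Section 3, which are uniform for $B$ in a bounded interval, they behave for the weighted operator-norm and Hilbert--Schmidt purposes used above exactly as in the $B=0$ case, uniformly in $t\in[0,1]$ and in $|B|\le 1$. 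The same weight-insertion and Hilbert--Schmidt splitting then bound each remainder summand in trace norm by $\C(\eta,N)\langle r\rangle^M\exp\{-\delta_1 L^\alpha/\langle r\rangle\}$, which when integrated in $t$ produces \eqref{derkompar20}.

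The main obstacle is bookkeeping: many distinct terms appear after differentiation (sums over $j$, over which resolvent the derivative acts on, over the two factors in the Taylor remainder) and each carries linearly growing coefficients from $\mathbf{a}$, so one has to verify that every single such linear factor can be simultaneously absorbed by an arbitrarily small fraction of the exponential weight without destroying the uniform control in $\langle r\rangle$. Once this is done, and the spatial gap $L^\alpha$ is used only once (in the outer weight rather than repeatedly), the rest is routine.
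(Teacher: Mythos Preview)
Your overall strategy matches the paper's: differentiate via the resolvent identity, absorb the linear growth of $\mathbf{a}$ by a slight imbalance in the exponential weights, and extract the factor $e^{-\delta_1 L^\alpha/\langle r\rangle}$ from the spatial separation of the two supports. The paper does exactly this in \eqref{derkompar3}--\eqref{derkompar10}. However, there is a genuine gap in your execution of the weight insertion.

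You propose to fix a \emph{single} reference point $\x_0\in\mathrm{supp}(g_0)$ and then run the Combes--Thomas machinery ``exactly as in \eqref{ggg1}''. But \eqref{ggg1} works only because the cut-offs $\chi_1,\chi_2$ there are \emph{unit} cubes, so that the companion growing factor $e^{+\delta_2\langle\cdot-\x_2\rangle/\langle r\rangle}\chi_2$ is uniformly bounded. In your setting the rightmost cut-off is $g_0$, whose support has diameter of order $L$; hence $e^{+\delta\langle\cdot-\x_0\rangle/\langle r\rangle}g_0$ has operator norm of order $e^{\delta L/\langle r\rangle}$, and this overwhelms the gain $e^{-\delta L^\alpha/\langle r\rangle}$ coming from $(\Delta\tilde g_0)e^{-\delta\langle\cdot-\x_0\rangle/\langle r\rangle}$, since $\alpha<1$. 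The same obstruction ruins the Hilbert--Schmidt bounds you invoke. A related slip is the inequality $|\mathbf{a}(\x)|\lesssim\langle\x-\x_0\rangle$: in the symmetric gauge one only has $|\mathbf{A}(\x)|\lesssim\langle\x-\x_0\rangle+\langle\x_0\rangle$, and the second summand contributes a polynomial-in-$L$ factor that must be tracked explicitly (this is the origin of the $\langle\x_0\rangle^2$ in \eqref{derkompar7} and $\langle\x_0\rangle^4$ in \eqref{derkompar10}).

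The repair is precisely what the paper does: first cover ${\rm supp}(\Delta\tilde g_0)$ and ${\rm supp}(g_0)$ with unit cubes $\tilde\chi_s,\chi_{s'}$ as in \eqref{oggg1}, and only then, for each pair $(s,s')$, insert the weights with reference point $\x_{s'}$ (see \eqref{derkompar11}). Now $e^{+\delta\langle\cdot-\x_{s'}\rangle/\langle r\rangle}\chi_{s'}$ is bounded, each term carries the decay $e^{-\delta_1 L^\alpha/\langle r\rangle}$ together with at most polynomial-in-$L$ Hilbert--Schmidt norms and the harmless factor $\langle\x_{s'}\rangle^4\lesssim L^4$, and summing over the polynomially many pairs still loses to the exponential. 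Your integral Taylor-remainder argument for \eqref{derkompar20} is fine in spirit and essentially equivalent to the paper's iterated resolvent identity \eqref{derkompar9}--\eqref{derkompar10}, but it inherits the same gap and needs the same cube decomposition before the weights are inserted.
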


\noindent{\bf Proof of the lemma}. We only prove differentiability at
$B=0$. The key ingredient is to give a proper sense to the formal identity:

\begin{align}\label{derkompar3}
&(H_\infty(B)-z)^{-1}-(H_\infty(0)-z)^{-1}\nonumber \\
&=-(H_\infty(B)-z)^{-1}\{H_\infty(B)-H_\infty(0)\}(H_\infty(0)-z)^{-1}.
\end{align}
As it stands, the right hand side makes no sense because
$H_\infty(B)-H_\infty(0)$ contains terms like $-2i B {\bf
  A}(\x)\cdot\nabla_\x$ which are not relatively bounded to the free
Laplacian due to the linear growth of our magnetic potential. For 
$0<\delta_2<\delta_3$ and 
$\x_0\in\R^3$, we can
write (still formally):
\begin{align}\label{derkompar4}
&e^{\frac{\delta_2}{\langle
    r\rangle }\langle \cdot -\x_0\rangle }
\{(H_\infty(B)-z)^{-1}-(H_\infty(0)-z)^{-1}\}e^{-\frac{\delta_3}{\langle
    r\rangle }\langle \cdot -\x_0\rangle } \\
&=-e^{\frac{\delta_2}{\langle
    r\rangle }\langle \cdot -\x_0\rangle }(H_\infty(B)-z)^{-1}
\{H_\infty(B)-H_\infty(0)\}(H_\infty(0)-z)^{-1}e^{-\frac{\delta_3}{\langle
    r\rangle }\langle \cdot -\x_0\rangle }.\nonumber
\end{align}
Note the fact that the growing exponential is weaker that the decaying
one. This identity now holds in the Hilbert-Schmidt class; in order to
see this, introduce the notation 
\begin{align}\label{derkompar6}
1<\alpha_1<\alpha_2<\frac{\delta_3}{\delta_2},\quad,
\end{align}
Then we can write the above identity as 
\begin{align}\label{derkompar5}
&e^{-\frac{\delta_3-\delta_2}{\langle
    r\rangle }\langle \cdot -\x_0\rangle }\left \{e^{\frac{\delta_3}{\langle
    r\rangle }\langle \cdot -\x_0\rangle }
\{(H_\infty(B)-z)^{-1}-(H_\infty(0)-z)^{-1}\}e^{-\frac{\delta_3}{\langle
    r\rangle }\langle \cdot -\x_0\rangle }\right \} \nonumber \\
&=-e^{-\frac{(\alpha_1-1)\delta_2}{\langle
    r\rangle }\langle \cdot -\x_0\rangle }\cdot 
\left \{e^{\frac{\alpha_1\delta_2}{\langle
    r\rangle }\langle \cdot -\x_0\rangle }
(H_\infty(B)-z)^{-1}e^{-\frac{\alpha_1\delta_2}{\langle
    r\rangle }\langle \cdot -\x_0\rangle }\right \}\nonumber \\
&\cdot \left \{ e^{\frac{\alpha_1 \delta_2}{\langle
    r\rangle }\langle \cdot -\x_0\rangle }
\{H_\infty(B)-H_\infty(0)\}e^{-\frac{\alpha_2\delta_2}{\langle
    r\rangle }\langle \cdot -\x_0\rangle }\right \}\nonumber \\
&\cdot \left \{e^{\frac{\alpha_2\delta_2}{\langle
    r\rangle }\langle \cdot -\x_0\rangle }
(H_\infty(0)-z)^{-1}e^{-\frac{\alpha_2\delta_2}{\langle
    r\rangle }\langle \cdot -\x_0\rangle }\right \}\cdot
 e^{-\frac{(\delta_3-\alpha_2\delta_2)}{\langle
    r\rangle }\langle \cdot -\x_0\rangle }.
\end{align}
Now $H_\infty(B)-H_\infty(0)=2B{\bf A}\cdot \mathbf{P}(0)+B^2{\bf
  A}^2$ is proportional with $B$, and the linear growth of ${\bf A}$ is
compensated by the higher exponential decay on the right hand side. 
Hence \eqref{expdek} and \eqref{expdek5} (at $B=0$) imply that if $\delta_2$ is small enough, we have:
\begin{align}\label{derkompar7}
\left \Vert e^{\frac{\alpha_1 \delta_2}{\langle
    r\rangle }\langle \cdot -\x_0\rangle }\{H_\infty(B)-H_\infty(0)\}
(H_\infty(0)-z)^{-1}e^{-\frac{\alpha_2\delta_2}{\langle
    r\rangle }\langle \cdot -\x_0\rangle }\right 
\Vert \leq \C\langle \x_0\rangle ^2 \cdot |B| \cdot \langle r\rangle^M.
\end{align}
The estimate \eqref{fazam7} (valid uniformly in $B\in [-1,1]$ and the
$\delta$ there should be chosen slightly larger than the $\delta_3$ )
tells us that: 
\begin{align}\label{derkompar77}
\left \Vert e^{-\frac{(\alpha_1-1)\delta_2}{\langle
    r\rangle }\langle \cdot -\x_0\rangle }\left 
\{e^{\frac{\alpha_1\delta_2}{\langle
    r\rangle }\langle \cdot -\x_0\rangle }
(H_\infty(B)-z)^{-1}e^{-\frac{\alpha_2\delta_2}{\langle
    r\rangle }\langle \cdot -\x_0\rangle }\right
\}\right\Vert_{B_2}\leq \C\; \langle r\rangle ^M.
\end{align}
 Hence we have proved the estimate 
\begin{align}\label{derkompar8}
&\left \Vert e^{\frac{\delta_2}{\langle
    r\rangle }\langle \cdot -\x_0\rangle }
\{(H_\infty(B)-z)^{-1}-(H_\infty(0)-z)^{-1}\}e^{-\frac{\delta_3}{\langle
    r\rangle }\langle \cdot -\x_0\rangle }\right \Vert_{B_2}
\leq \C |B|\langle \x_0\rangle ^2 \; \langle r\rangle^M.
\end{align}
Now go back to \eqref{derkompar4} and isolate the linear term in $B$:
\begin{align}\label{derkompar9}
&e^{\frac{\delta_2}{\langle
    r\rangle }\langle \cdot -\x_0\rangle }
\{(H_\infty(B)-z)^{-1}-(H_\infty(0)-z)^{-1}\}e^{-\frac{\delta_3}{\langle
    r\rangle }\langle \cdot -\x_0\rangle } \\
&=-Be^{\frac{\delta_2}{\langle
    r\rangle }\langle \cdot -\x_0\rangle }(H_\infty(0)-z)^{-1}\{2{\bf
  A}\cdot \mathbf{P}(0)\}
(H_\infty(0)-z)^{-1}e^{-\frac{\delta_3}{\langle
    r\rangle }\langle \cdot -\x_0\rangle }\nonumber\\
&-B^2e^{\frac{\delta_2}{\langle
    r\rangle }\langle \cdot -\x_0\rangle }
(H_\infty(0)-z)^{-1}\{{\bf A}^2\}
(H_\infty(0)-z)^{-1}e^{-\frac{\delta_3}{\langle
    r\rangle }\langle \cdot -\x_0\rangle }\nonumber\\
&-e^{\frac{\delta_2}{\langle
    r\rangle }\langle \cdot -\x_0\rangle }
\{(H_\infty(B)-z)^{-1}-(H_\infty(0)-z)^{-1}\}e^{-\frac{\tilde{\delta}}{\langle
    r\rangle }\langle \cdot -\x_0\rangle
}\nonumber \\
&\cdot e^{\frac{\tilde{\delta}}{\langle
    r\rangle }\langle \cdot -\x_0\rangle } 
\{H_\infty(B)-H_\infty(0)\}(H_\infty(0)-z)^{-1}e^{-\frac{\delta_3}{\langle
    r\rangle }\langle \cdot -\x_0\rangle },\nonumber 
\end{align}
where in the last term we inserted an exponential with
$\delta_2<\tilde{\delta}<\delta_3$. Now it is easy to get the
estimate:
\begin{align}\label{derkompar10}
&\left \Vert e^{\frac{\delta_2}{\langle
    r\rangle }\langle \cdot -\x_0\rangle }
\{(H_\infty(B)-z)^{-1}-(H_\infty(0)-z)^{-1}\}e^{-\frac{\delta_3}{\langle
    r\rangle }\langle \cdot -\x_0\rangle } \right . \nonumber \\
&\left . +Be^{\frac{\delta_2}{\langle
    r\rangle }\langle \cdot -\x_0\rangle }
(H_\infty(0)-z)^{-1}\{2{\bf
  A}\cdot \mathbf{P}(0)\}
(H_\infty(0)-z)^{-1}e^{-\frac{\delta_3}{\langle
    r\rangle }\langle \cdot -\x_0\rangle }\right
\Vert_{B_2}\nonumber \\
&= \C |B|^2 \langle \x_0\rangle ^4\; \langle r\rangle ^M.
\end{align}

This is enough to prove that if $N\geq 2$, the mapping in Lemma
\ref{lemma223} is differentiable in the trace norm sense at
$B=0$. Indeed, proceeding as we did for \eqref{oggg1}, we can insert
many cut-off functions and cover the supports of $g_0$ and
$\tilde{g_0}$. Using the same notations, we have that for some
$\delta_2<\delta_3<\delta_4$ we have 
\begin{align}\label{derkompar11}
&\tilde{\chi}_s(\Delta\tilde{g}_0)
(H_\infty(B)-z)^{-N}g_0\chi_{s'}\\
&=\tilde{\chi}_se^{-\frac{\delta_2}{\langle
    r\rangle }\langle \cdot -\x_{s'}\rangle }e^{\frac{\delta_2}{\langle
    r\rangle }\langle \cdot -\x_{s'}\rangle }(\Delta\tilde{g}_0) (H_\infty(B)-z)^{-N}g_0e^{-\frac{\delta_4}{\langle
    r\rangle }\langle \cdot -\x_{s'}\rangle }e^{\frac{\delta_4}{\langle
    r\rangle }\langle \cdot -\x_{s'}\rangle }\chi_{s'}.\nonumber
\end{align}
Now we can differentiate with respect to $B$ in the trace norm-sense,
using the result for the Hilbert-Schmidt norm and the fact that we
have at least two factors in $B_2$ (adapt
\eqref{derkompar77}). At last we again use the
fact that the distance between the supports of $\chi$'s is $\sim
L^\alpha$, and that all growing factors are just polynomials in $L$. We
consider Lemma \ref{lemma223} proved.\qed

\vspace{0.5cm}

Now we can use \eqref{derkompar20} in all the terms on the right hand
side of \eqref{derkompar1} which contain operators of the type treated
in Lemma \ref{lemma223}. The exponential decay of $f_{FD}$ can be used
to obtain a decay faster than any power of $L$. 

All other terms from the remainder can be treated in a similar manner,
and we consider Proposition \ref{lemma100} as proved. \qed

\vspace{0.5cm}

In the remaining part of our paper we will investigate the
thermodynamic limit of the main contribution to
$\partial_B\sigma_L(B)$, given by 
\begin{align}\label{laste1}
 &\partial_B {\rm Tr}
\int_{\Gamma_\omega}
{f}_{FD}(z) 
\left \{ P_1(B)U_L(B,z)P_2(B) U_L(B,z+\omega)+ ( z \rightarrow
  z-\omega) \right \} dz.
\end{align}

\subsubsection{The boundary terms}

Here the magnetic perturbation theory will play a crucial role. There
are several terms in the definition of $U_L$ (see \eqref{geigi4}), and
when we insert them into \eqref{laste1} they will generate even more
terms. 

We will now prove that only the term which contains two resolvents
with $H_\infty$ will contribute at the thermodynamic limit. The main
result can be stated in the following way:
\begin{proposition}\label{lemma300} At $B=0$ and $\alpha$ sufficiently small we have:
\begin{align}\label{derkompar009}
&  \lim_{L\to\infty}\partial_B \left \{\sigma_{L}(B) +\frac{1}{ {\rm Vol}(\Lambda_L)} 
\cdot {\rm Tr}
\int_{\Gamma_\omega}
{f}_{FD}(z) \right.\nonumber \\
&\cdot\left \{ P_1(B)\tilde{g}_0(H_\infty(B)-z)^{-1}g_0 P_2(B) 
\tilde{g}_0(H_\infty(B)-z-\omega)^{-1}g_0 \right . \nonumber \\
&+ \left . \left . z \rightarrow z-\omega \right \}dz \frac{}{} \right \}=0.
\end{align}
\end{proposition}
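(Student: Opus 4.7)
By Proposition~\ref{lemma100} it is enough to show that, upon expanding each copy of $U_L(B,z)$ in \eqref{derkompar1} by means of \eqref{geigi4}, every piece of $\partial_B\bigl[P_1(B)U_L(B,z)P_2(B)U_L(B,z+\omega)+(z\to z-\omega)\bigr]$ other than the pure bulk-bulk product $P_1(B)\tilde g_0(H_\infty(B)-z)^{-1}g_0\,P_2(B)\tilde g_0(H_\infty(B)-z-\omega)^{-1}g_0$ contributes a trace of size $o(L^3)$. The extra pieces fall into three classes: bulk-boundary and boundary-bulk cross products (indexed by a single $\gamma\in E$), and pure boundary-boundary products (indexed by a pair $(\gamma,\gamma')\in E\times E$).

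For the pure boundary class, only $\sim L^{2-2\alpha}$ pairs $(\gamma,\gamma')$ have overlapping supports. On those supports, \eqref{fazam2} absorbs the phases $e^{\pm iB\phi_0(\cdot,\gamma)}$ and replaces $H_\infty(B)$ by the locally gauged operator $H_L(B,\gamma)$, whose vector potential ${\bf A}_\gamma$ is globally bounded by a constant times $L^\alpha$ (see \eqref{geigi2}). Consequently $\partial_B H_L(B,\gamma)=2{\bf A}_\gamma\cdot{\bf P}(0)+2B{\bf A}_\gamma^2$ is a relatively bounded perturbation whose norm grows only polynomially in $L^\alpha$; so $\partial_B(H_L(B,\gamma)-z)^{-1}$ exists in operator norm, and the exponential-decay estimates of the Appendix survive the differentiation with an additional polynomial loss in $L^\alpha$ and in $\langle r\rangle$. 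A Hilbert-Schmidt splitting in the style of Lemma~\ref{lema2} then produces for each pair a trace-norm bound of order $L^{c\alpha}\langle r\rangle^M$ for some fixed $c$ (with $H_\infty$ replaced by $-\Delta_D$ the HS norm of $\tilde g_\gamma(H_L(B,\gamma)-z)^{-1}$ is of the order $L^{3\alpha/2}$ times $L^\alpha$ coming from \eqref{fff2}). Summing over the $\sim L^{2-2\alpha}$ nontrivial pairs, integrating against the exponentially decaying $f_{FD}$, and dividing by $L^3$, one gets a quantity $o(1)$ provided $\alpha$ is small enough.

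The cross class is the delicate part. When $\partial_B$ lands on a boundary factor, the preceding argument applies verbatim. The genuinely new case is when $\partial_B$ hits the bulk factor $\tilde g_0(H_\infty(B)-z)^{-1}g_0$: a naive resolvent identity produces $\tilde g_0(H_\infty(B)-z)^{-1}\bigl(2{\bf a}\cdot{\bf P}(0)+2B{\bf a}^2\bigr)(H_\infty(B)-z)^{-1}g_0$, in which the linearly growing ${\bf a}(\x)$ sits between two resolvents and destroys every direct trace-norm estimate. The remedy is not to differentiate the bare resolvent but the gauged combination $e^{iB\phi_0(\cdot,\gamma)}(H_\infty(B)-z)^{-1}e^{-iB\phi_0(\cdot,\gamma)}$: after invoking the representation \eqref{fazam6}, the antisymmetric phase $e^{iB\varphi_0(\x,\x')}$ is exposed and combines with the neighbouring boundary phase $e^{\pm iB\phi_0(\cdot,\gamma)}$ into a net phase depending only on positions relative to $\gamma$. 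On the support of the enveloping $\tilde{\tilde g}_\gamma$ this is exactly the mechanism that, via \eqref{fazam2}, trades the unbounded potential ${\bf a}(\x)$ against the bounded ${\bf A}(\x-\gamma)={\bf A}_\gamma(\x)$; the resulting kernels differ, in operator norm, from their $B=0$ value by a bounded operator-valued quantity of order $|B|\,L^{O(\alpha)}\,\langle r\rangle^M$. A further Hilbert-Schmidt splitting into factors of sizes $L^{3/2}\langle r\rangle^M$ (from the resolvent ending with $g_0$) and $L^{3\alpha/2}\langle r\rangle^M$ (from the resolvent localised by $\tilde g_\gamma$) gives, per $\gamma$, a trace-norm bound of order $L^{3/2+3\alpha/2+O(\alpha)}\langle r\rangle^M$. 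Summed over $\sim L^{2-2\alpha}$ values of $\gamma$ and divided by $L^3$, this again yields a vanishing limit for $\alpha$ sufficiently small.

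The main obstacle is thus the trace-norm differentiability in $B$ of the bulk resolvent in the presence of the long-range potential ${\bf a}(\x)$. The resolution --- using the gauged representation \eqref{fazam6} together with the boundary phase $e^{\pm iB\phi_0(\cdot,\gamma)}$ already built into $U_L(B,z)$ to cancel the apparent linear growth --- is the technical heart of the argument, and is the natural analogue at the level of $B$-derivatives of the gauge trick that made \eqref{efel22} work. Once this reduction is carried out, the remaining steps (Combes-Thomas decay, Hilbert-Schmidt splittings, volume counting) run in parallel with the proofs of Propositions~\ref{prop22} and \ref{prop23}, and the choice of $\alpha$ small enough (strictly smaller than an explicit constant dictated by the worst exponent appearing above) closes the argument.
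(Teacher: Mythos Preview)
Your overall plan --- split off the cross and boundary--boundary pieces of $U_L$ and show each contributes $o(L^3)$ after taking $\partial_B$ --- matches the paper. But there is a genuine gap in how you handle the $B$-dependence.

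You overlook that the explicit magnetic phases $e^{\pm iB\varphi_0(\cdot,\gamma)}$ in the boundary pieces of $U_L(B,z)$ (see \eqref{geigi4}) themselves depend on $B$. Their derivative is $i\varphi_0(\x,\gamma)e^{iB\varphi_0(\x,\gamma)}$ with $|\varphi_0(\x,\gamma)|=|{\bf A}(\gamma)\cdot\x|$, which is of order $L^2$ when $\gamma\in E$ and $\x$ lies near $\gamma$. This $L^2$ factor destroys the power counting you propose, both for the boundary--boundary pieces (where you only differentiate $H_L(B,\gamma)$ and claim that ``\eqref{fazam2} absorbs the phases'') and for the cross pieces. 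Your proposed remedy for the cross term --- differentiate the conjugated resolvent $e^{iB\varphi_0(\cdot,\gamma)}(H_\infty(B)-z)^{-1}e^{-iB\varphi_0(\cdot,\gamma)}$ --- does not help either: its left variable lies in ${\rm supp}(\tilde g_0)$, of diameter $\sim L$, so the shifted potential ${\bf A}(\cdot-\gamma)$ is still of size $L$ there, not $L^\alpha$. Separately, the arithmetic of your cross-term bound does not close: $L^{3/2+3\alpha/2+O(\alpha)}$ per $\gamma$, times $\sim L^{2-2\alpha}$ values of $\gamma$, divided by $L^3$, gives $L^{1/2-\alpha/2+O(\alpha)}$, which does not tend to zero.

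The paper's resolution (Lemma~\ref{lema30}) works at the level of the \emph{trace}, not the operator. After replacing $(H_\infty(B)-z)^{-1}$ by $S_B(z)$ via \eqref{fazam6} and $(H_L(B,\gamma)-\zeta)^{-1}$ by $(H_L(0)-\zeta)^{-1}$ (an easy step, since ${\bf A}_\gamma$ is bounded by $\C L^\alpha$), one writes the trace as the integral over $\Lambda_L$ of the diagonal value of the kernel. On that diagonal $\x=\x''$ the three phases combine via the identity \eqref{laste18},
\[
\varphi_0(\x,\x')+\varphi_0(\x',\gamma)-\varphi_0(\x,\gamma)=fl(\x,\x',\gamma):=\tfrac12\,{\bf e}_3\cdot\bigl[(\x'-\x)\wedge(\gamma-\x')\bigr],
\]
into a single flux phase $e^{iB\,fl}$. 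The crucial bound is $|fl(\x,\x',\gamma)|\le\tfrac12|\x-\x'|\,|\x'-\gamma|$: the first factor is absorbed by the exponential decay of $G_1(\x,\x';z)$, and the second is at most $\C L^\alpha$ because $\x'\in{\rm supp}(\tilde g_\gamma)$. Hence the $B$-derivative of the regularised trace costs only $L^{O(\alpha)}$ per $\gamma$ (estimate \eqref{laste14}), giving a total of order $L^{2+O(\alpha)}/L^3\to0$. An analogous four-phase identity \eqref{laste47}--\eqref{laste48} does the same for the boundary--boundary term, producing fluxes bounded by $L^{2\alpha}$. This closure of the phases on the diagonal is the idea you are missing; an operator-level gauge conjugation does not achieve it.
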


\noindent{\bf Proof}. Let us start with the following boundary term:

\begin{align}\label{laste10}
 &X_{1}(B,L):= {\rm Tr}
\int_{\Gamma_\omega}
{f}_{FD}(z) 
\left \{ \frac{}{}P_1(B)\tilde{g}_0(H_\infty(B)-z)^{-1}g_0 \right . \nonumber
\\ 
&\cdot P_2(B) \sum_{\gamma\in
    E}e^{iB\varphi_0(\cdot,\gamma)}\tilde{g}_\gamma
(H_L(B,\gamma)-z-\omega )^{-1}e^{-iB\varphi_0(\cdot,\gamma)}g_\gamma \nonumber \\
&\left . + ( z \rightarrow
  z-\omega)\frac{}{} \right \} dz.
\end{align}
We have already seen in Lemma \ref{lema1} that $X_{1}(B,L)$ divided by
the volume of $\Lambda_L$ converges to zero when $L$ converges to
infinity. Now we would like to show that $\{\partial_BX_{1}\}(0,L)$
has the same property. 

Proceeding as in \eqref{kompar3}, we can rewrite $X_1(B,L)$ as 
\begin{align}\label{laste11}
 &X_{1}(B,L)= \sum_{\gamma\in
    E} X_1(B,L,\gamma),\\
&\label{laste12}
 X_1(B,L,\gamma):={\rm Tr}
\int_{\Gamma_\omega}
{f}_{FD}(z) 
\left \{ \frac{}{}P_1(B)\tilde{g}_0(H_\infty(B)-z)^{-1}g_0 \right . \nonumber
\\ 
&\cdot e^{iB\varphi_0(\cdot,\gamma)}P_{2,\gamma}(B)\tilde{g}_\gamma
(H_L(B,\gamma)-z-\omega )^{-1}e^{-iB\varphi_0(\cdot,\gamma)}g_\gamma \nonumber \\
&\left . + ( z \rightarrow
  z-\omega)\frac{}{} \right \} dz.
\end{align}

We now prove the following estimate:
\begin{lemma}\label{lema30} For every $L\geq 1$  we have 
\begin{align}\label{laste14}
\sup_{\gamma\in E}\sup_{0<|B|\leq 1}\left \vert
  \frac{1}{B}\{X_1(B,L,\gamma)-X_1(0,L,\gamma)\}\right \vert \leq
\C(\alpha)\cdot 
L^{5\alpha} .
\end{align}
\end{lemma}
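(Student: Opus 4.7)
\noindent\textbf{Proof proposal for Lemma \ref{lema30}.}
The overarching idea is that the magnetic phases $e^{\pm iB\varphi_0(\cdot,\gamma)}$ allow us to trade the long-range vector potential $\ba(\x)$, whose magnitude on the bulk support is of order $L$, for the locally bounded ${\bf A}_\gamma$, pointwise of order $L^\alpha$ on the support of $\tilde{\tilde{g}}_\gamma$. Once this trade is executed, the $B$-derivative only produces factors controlled by $L^\alpha$, and together with Hilbert-Schmidt bounds on cut-off resolvents (whose cut-off supports have volume of order $L^{3\alpha}$) one arrives at the stated $L^{5\alpha}$ bound.

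First I would integrate by parts $N$ times in $z$ and $z-\omega$ (with $N$ large) to convert single resolvents into sufficiently high powers, using antiderivatives of $f_{FD}$ that still decay exponentially in $|\Re z|$. Then, as in Proposition \ref{prop22}, I would use the magnetic perturbation identity \eqref{fazam6} to establish, uniformly in $|B|\le 1$, the local approximation
\begin{equation*}
(H_\infty(B)-z)^{-1}\tilde{\tilde{g}}_\gamma = e^{iB\varphi_0(\cdot,\gamma)}(H_L(B,\gamma)-z)^{-1}e^{-iB\varphi_0(\cdot,\gamma)}\tilde{\tilde{g}}_\gamma + R_\gamma(B,z),
\end{equation*}
where both $R_\gamma(B,z)$ and $\partial_B R_\gamma(B,z)$ are exponentially small in $L^\alpha/\langle r\rangle$. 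Using $\tilde{\tilde{g}}_\gamma\tilde{g}_\gamma=\tilde{g}_\gamma$ on the boundary side, inserting $\tilde{\tilde{g}}_\gamma$ next to $g_0$ on the bulk side (at the cost of similarly small errors), and then pushing the phase $e^{iB\varphi_0(\cdot,\gamma)}$ through $P_1(B)$ via \eqref{fazam2} to turn $P_1(B)$ into $P_{1,\gamma}(B)$, one obtains a trace in which the two remaining flanking phases are multiplication operators whose pointwise product equals $1$; by cyclicity of the trace they cancel. Consequently, up to $\mathcal{O}(L^{-\infty})$ errors, $X_1(B,L,\gamma)$ equals a trace whose $B$-dependence is entirely channelled through $P_{j,\gamma}(B)$ and $(H_L(B,\gamma)-z)^{-1}$, hence only through the bounded potential ${\bf A}_\gamma$.

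I would then take $\partial_B$: each $\partial_B P_{j,\gamma}$ produces ${\bf A}_\gamma$ (pointwise $\le L^\alpha$), while $\partial_B(H_L(B,\gamma)-z)^{-1}$, via \eqref{derivrez}, yields a term whose norm is controlled by $L^\alpha\langle r\rangle^M$ through the elliptic bound \eqref{fff2}. The resulting operator under the trace is a product of two Hilbert-Schmidt factors, each with cut-off support of volume $\sim L^{3\alpha}$: on the bulk side, a resolvent flanked by $\tilde{g}_0$ and $g_0\tilde{\tilde{g}}_\gamma$ (with momentum operators commuted at bounded cost); on the boundary side, a resolvent flanked by $\tilde{g}_\gamma$ and $g_\gamma$. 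By \eqref{pertii} and \eqref{fff2}, each Hilbert-Schmidt norm is bounded by $\C\cdot L^{5\alpha/2}\langle r\rangle^M$, so the trace norm is of order $L^{5\alpha}\langle r\rangle^M$, and the exponential decay of $f_{FD}$ along $\Gamma_\omega$ closes the $z$-integration.

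The main obstacle is the rigorous justification of the local gauge approximation in the second step, together with control of its $B$-derivative: one must show that the gauge-transformation error concentrates where $1-\tilde{\tilde{g}}_\gamma\neq 0$, which is separated from the support of $\tilde{g}_\gamma$ by $\sim L^\alpha$, and that this geometric separation produces an exponential decay in $L^\alpha$ \emph{uniformly} in $B\in[-1,1]$. This is achieved by a careful combination of \eqref{fazam2}, \eqref{fazam6}, and the pointwise kernel bounds \eqref{fazam5}--\eqref{fazam7}, together with the reasoning of \eqref{derkompar8}--\eqref{derkompar10} to handle the $B$-differentiability of the error term.
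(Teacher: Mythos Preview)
Your route is genuinely different from the paper's, and the step you flag as ``the main obstacle'' is where it breaks down. The paper never replaces the bulk resolvent $(H_\infty(B)-z)^{-1}$ by a gauge-conjugated Dirichlet resolvent. Instead it writes $X_1(B,L,\gamma)-X_1(0,L,\gamma)$ as a two-step telescope through the auxiliary quantity $\tilde X_1(B,L,\gamma)$ of \eqref{laste15}, obtained by replacing $(H_\infty(B)-z)^{-1}$ with the phase-dressed parametrix $S_B(z)$ of \eqref{fazam3} and simultaneously replacing $H_L(B,\gamma)$ and $P_{2,\gamma}(B)$ by their $B=0$ versions. The first half is routine (magnetic perturbation \eqref{fazam6} plus the $L^\alpha$ bound on $\mathbf A_\gamma$). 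The second half is the new idea: one writes the trace as a diagonal integral and observes that on the diagonal $\x=\x''$ the three phases $\varphi_0(\x,\x')+\varphi_0(\x',\gamma)-\varphi_0(\x,\gamma)$ collapse via \eqref{laste18} into the triangle flux $fl(\x,\x',\gamma)$, bounded by $\tfrac12|\x-\x'|\,|\x'-\gamma|$. Since $\x'\in{\rm supp}(\tilde g_\gamma)$ forces $|\x'-\gamma|\lesssim L^\alpha$ and $|\x-\x'|$ is absorbed by the exponential decay of $G_1$, the $B$-difference quotient of the phase contributes only a factor $L^\alpha$, and the remaining trace norm is bounded by $L^{3\alpha}$--$L^{4\alpha}$ via \eqref{laste33}.

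Your proposed identity $(H_\infty(B)-z)^{-1}\tilde{\tilde g}_\gamma = e^{iB\varphi_0(\cdot,\gamma)}(H_L(B,\gamma)-z)^{-1}e^{-iB\varphi_0(\cdot,\gamma)}\tilde{\tilde g}_\gamma + R_\gamma$ with $R_\gamma=\mathcal O(L^{-\infty})$ does not hold as stated. After the gauge transformation, the full-space Hamiltonian becomes $\{\mathbf P(0)+B\mathbf A(\cdot-\gamma)\}^2+V$, whereas $H_L(B,\gamma)$ carries $\mathbf A_\gamma=\tilde{\tilde g}_\gamma\mathbf A(\cdot-\gamma)$; the two vector potentials differ already on the annulus $K(\gamma,12L^\alpha)\setminus K(\gamma,11L^\alpha)$, which lies \emph{inside} ${\rm supp}(\tilde{\tilde g}_\gamma)$, so there is no geometric separation between the right cutoff and the perturbation, and no exponential gain. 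Moreover, $\gamma\in E$ sits within $\sim L^\alpha$ of $\partial\Lambda_L$, so ${\rm supp}(\tilde{\tilde g}_\gamma)$ meets the Dirichlet boundary, and the two resolvents differ there by $O(1)$. A further slip: pushing $e^{iB\varphi_0(\cdot,\gamma)}$ through $P_1(B)$ via \eqref{fazam2} produces $P_1(0)+BA_1(\cdot-\gamma)$, not $P_{1,\gamma}(B)$; on the support of $\tilde g_0$ (the whole bulk) the factor $A_1(\cdot-\gamma)$ is of size $L$, not $L^\alpha$, and cyclicity with $g_\gamma$ only localizes it \emph{after} you have justified the trace-class property of the full product --- which is exactly what the flawed approximation was meant to deliver. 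The flux argument of \eqref{laste18}--\eqref{laste22} sidesteps all of these issues by working directly with the diagonal of the kernel rather than attempting an operator-level replacement of the bulk resolvent.
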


\noindent{\bf Remark.} Before starting the proof of this lemma, let us
note that it immediately implies that 
\begin{align}\label{laste13}
 \lim_{L\to\infty}\frac{1}{{\rm Vol}(\Lambda_L)}\{\partial_BX_{1}\}(0,L)= 0.
\end{align}
This is so because we know that $\{\partial_BX_{1}\}(0,L)$ exists, and
moreover, it 
must be bounded from above by the right hand side of \eqref{laste14} times
the number of $\gamma$'s in $E$, i.e. $\sim L^{2-2\alpha}$. Then if
$\alpha$ is chosen small enough, after dividing by $\sim L^3$ we get
something converging to zero. 

\noindent{\bf Proof of Lemma \ref{lema30}.} Define the function 
\begin{align}\label{laste15}
 & \tilde{ X}_1(B,L,\gamma):={\rm Tr}
\int_{\Gamma_\omega}
{f}_{FD}(z) 
\left \{ \frac{}{}P_1(B)\tilde{g}_0S_B(z)g_0 \right . \nonumber
\\ 
&\cdot e^{iB\varphi_0(\cdot,\gamma)}P_{2}(0)\tilde{g}_\gamma
(H_L(0)-z-\omega )^{-1}e^{-iB\varphi_0(\cdot,\gamma)}g_\gamma \nonumber \\
&\left . + ( z \rightarrow
  z-\omega)\frac{}{} \right \} dz.
\end{align}

The proof of this lemma has two parts. The first one will state that 
\begin{align}\label{ympos1}
 \sup_{\gamma\in E}\sup_{0<|B|\leq 1}\left \vert
  \frac{1}{B}\{X_1(B,L,\gamma)-\tilde{ X}_1(B,L,\gamma)\}\right \vert \leq
\C(\alpha)\cdot 
L^{5\alpha} ,
\end{align}
while the second one is 
\begin{align}\label{ympos2}
 \sup_{\gamma\in E}\sup_{0<|B|\leq 1}\left \vert
  \frac{1}{B}\{\tilde{X}_1(B,L,\gamma)-{ X}_1(0,L,\gamma)\}\right \vert \leq
\C(\alpha)\cdot 
L^{5\alpha}.
\end{align}

\noindent{\bf Part one}. 
The first estimate is not very much different from what we have
already done until now. One uses repeated integration by parts with
respect to $z$ in $X_1(B,L,\gamma)$ and then we expand each resolvent
$(H_\infty(B)-\zeta )^{-1}$ using the first equality in
\eqref{fazam6}, and each resolvent $(H_L(B,\gamma)-\zeta )^{-1}$
as perturbation of $(H_L(0)-\zeta )^{-1}$. Note that due to
\eqref{geigi2}, the growth induced by ${\bf A}_\gamma(\x)$ does not
exceed $L^\alpha$. As for the $H_\infty(B)$, commuting $P_j(B)$ with
the magnetic phases will always transform ${\bf
  A}(\x)$ into ${\bf
  A}(\x-\x')$, whose growth will now be tempered by the exponential
decay from \eqref{fazam5}. The general strategy for all terms arising from integration by
parts with respect to $z$ is to estimate the trace by the trace norm,
using the fact that the trace norm of a product is bounded by the
Hilbert-Schmidt norm of two factors. 
Several other estimates are needed in order to prove \eqref{ympos1}. Remember that $G_N(\x,\x';z)$ is the integral 
kernel of $(H_\infty(0)-z)^{-N}$, $N\geq 1$. Denote by $S^{(N)}_B(z)$ the operator
corresponding to the integral kernel
$e^{iB\varphi_0(\x,\x')}G_N(\x,\x';z)$. If $N=1$ they coincide with
the operators $S_B(z)$ defined in \eqref{fazam3}.Then we have: 

\noindent {\rm i}. For every $B\in [-1,1]$, and $N\geq 1$
\begin{align}\label{laste2}
||S^{(N)}_B(z)||\leq \C(N) \langle r\rangle ^M .
\end{align}

\noindent {\rm ii}. For every $B\in [-1,1]$, $N\geq 1$ and $Q\subset
\R^3$ a compact set:
\begin{align}\label{laste3}
||\chi_QS^{(N)}_B(z)||_{B_2}
\leq \C(N) \sqrt{{\rm Vol}(Q)}\;\langle r\rangle ^M.
\end{align}

\noindent {\rm iii}. For every $B\in [-1,1]$,
\begin{align}\label{laste4}
||(H_\infty(B)-z)^{-1}-S_B(z)+B S_B(z)T_B(z)||
\leq \C |B|^2\langle r\rangle ^M .
\end{align}

\noindent {\rm iv}. For every $B\in [-1,1]$, $N\geq 2$ and $Q_{1,2}\subset
\R^3$ two compact sets:
\begin{align}\label{laste5}
||\chi_{Q_1}S^{(N)}_B(z)\chi_{Q_1}||_{B_1}\leq 
\C(N) \sqrt{{\rm Vol}(Q_1)}\;\sqrt{{\rm Vol}(Q_2)}\;\langle r\rangle ^M.
\end{align}
The first and the third ones are easy consequences of \eqref{fazam6}. For the
 second and fourth ones we have to differentiate $N-1$ times in 
\eqref{fazam6} and write:
\begin{align}\label{fazam99}
S^{(N)}_B(z)&=(-1)^{N-1}(H_\infty(B)-z)^{-N}\\ 
&- B\sum_{k=0}^{N-1}
(-1)^{N-1-k}(H_\infty(B)-z)^{-N+k}T^{(k)}_B(z).\nonumber
\end{align}

\noindent{\bf Part two}.
We will now concentrate on \eqref{ympos2}, which needs a new
idea. 

\vspace{0.2cm}

{\it An heuristic argument}. First we perform some formal computations, in order to illustrate
how magnetic phases will transform the trace into a more regular
object. Assume that the operator under the trace in \eqref{laste15} 
has a jointly continuous
integral kernel; remember that the operator $S_B(z)$ defined in
\eqref{fazam3} had a magnetic phase. Commute this phase with $P_1(B)$
as in \eqref{fazam2}, and write the following formal expression for
the integral kernel of the operator whose trace we want to estimate:

\begin{align}\label{laste16}
 & 
\int_{\Gamma_\omega} dz
{f}_{FD}(z)
\left \{ \frac{}{}\int_{\Lambda_L}d\x' \right . \nonumber \\
&e^{iB\varphi_0(\x,\x')}[P_{1,\x}(0)+BA_1(\x-\x')]\tilde{g}_0(\x)G_1(\x,\x',z)g_0(\x') \nonumber
\\ 
&\cdot e^{iB\varphi_0(\x',\gamma)}P_{2,\x'}(0)\tilde{g}_\gamma(\x')
(H_L(0)-z-\omega )^{-1}(\x',\x'')e^{-iB\varphi_0(\x'',\gamma)}g_\gamma (\x'')\nonumber \\
&\left . + ( z \rightarrow
  z-\omega)\frac{}{} \right \}.
\end{align}
The above expression gives an integral kernel $I(\x,\x'')$. If we could prove joint continuity, 
then we could write 
\begin{align}\label{laste17}
\tilde{ X}_1(B,L,\gamma)=\int_{\Lambda_L}I(\x,\x)d\x.
\end{align}
Now let us see what happens with the phases in \eqref{laste16} when we
put $\x=\x''$. We have the identity:
\begin{align}\label{laste18}
&fl(\x,\x',\gamma):=\frac{1}{2}{\bf e}_3\cdot [(\x' -\x)\wedge
(\gamma-\x')], \nonumber \\
& \varphi_0(\x,\x')+\varphi_0(\x',\gamma)=\varphi_0(\x,\gamma)+fl(\x,\x',\gamma).
\end{align}
The crucial thing is that when $\x=\x''$ in \eqref{laste16}, the
magnetic phases cancel each other and only the flux $fl$ remains. 

Thus the operator given by the following integral kernel
\begin{align}\label{laste19}
 & 
\int_{\Gamma_\omega} dz
{f}_{FD}(z)
\left \{ \frac{}{}\int_{\Lambda_L}d\x' \right . 
e^{iB fl(\x,\x',\gamma )}[P_{1,\x}(0)+BA_1(\x-\x')]\tilde{g}_0(\x)G_1(\x,\x',z)g_0(\x') \nonumber
\\ 
&\cdot P_{2,\x'}(0)\tilde{g}_\gamma(\x')
(H_L(0)-z-\omega )^{-1}(\x',\x'')g_\gamma (\x'')\left . + ( z \rightarrow
  z-\omega)\frac{}{} \right \},
\end{align}
must have the same trace since its kernel has the same diagonal value. Remember that this is just an heuristic 
argument, precise details are given in the next paragraph.   

\vspace{0.2cm}

{\it The rigorous argument}. 
We now start the rigorous proof of \eqref{ympos2}. We integrate by parts with respect to $z$ in 
\eqref{laste15}, and let us first focus on one term, namely the one obtained when all derivatives act on the
resolvent in the middle:
\begin{align}\label{laste25}
 & R(B,L,\gamma):={\rm Tr}
\int_{\Gamma_\omega}
\tilde{f}_{FD}(z) 
\left \{ \frac{}{}P_1(B)\tilde{g}_0S_B(z)g_0 \right .
 e^{iB\varphi_0(\cdot,\gamma)}P_{2}(0)\tilde{g}_\gamma
(H_L(0)-z-\omega )^{-N}e^{-iB\varphi_0(\cdot,\gamma)}g_\gamma \nonumber \\
&\left . + ( z \rightarrow
  z-\omega)\frac{}{} \right \} dz, \quad N\geq 3.
\end{align}

Denote by $\{\psi_k\}_{k\geq 1}$ and $\{\lambda_k\}_{k\geq 1}$ the
eigenfunctions and eigenvalues of $H_L(0)$ respectively. Then the operator: 
 \begin{align}\label{laste26}
 & R_K:=\sum_{j=1}^K
\int_{\Gamma_\omega}
\tilde{f}_{FD}(z) 
\left \{ \frac{}{}P_1(B)\tilde{g}_0S_B(z)g_0 \right . 
e^{iB\varphi_0(\cdot,\gamma)}P_{2}(0)\tilde{g}_\gamma
|\psi_k\rangle \langle \psi_k|\frac{1}{(\lambda_k-z-\omega )^{N}}e^{-iB\varphi_0(\cdot,\gamma)}g_\gamma \nonumber \\
&\left . + ( z \rightarrow
  z-\omega)\frac{}{} \right \} dz
\end{align}
is trace class. Using 
\begin{align}\label{laste28}
&{\rm Tr}(R_K)-R(B,L,\gamma) {\rm Tr}
\int_{\Gamma_\omega}
\tilde{f}_{FD}(z) 
\left \{ \frac{}{}P_1(B)\tilde{g}_0S_B(z)g_0 \right
.
e^{iB\varphi_0(\cdot,\gamma)}P_{2}(0)\tilde{g}_\gamma(H_L(0)-z-\omega)^{-1}\nonumber \\
&\cdot \sum_{j>K}|\psi_k\rangle \langle \psi_k|\frac{1}{(\lambda_k-z-\omega )^{N-1}}e^{-iB\varphi_0(\cdot,\gamma)}g_\gamma \left . + ( z \rightarrow
  z-\omega)\frac{}{} \right \} dz  
\end{align}
we obtain
\begin{align}\label{laste27}
 \lim_{K\to\infty}{\rm Tr}(R_K)=R(B,L,\gamma),
\end{align}
where we use the fact that the square of the resolvent is trace class (here $N-1\geq 2$), together
with the boundedness of the rest of the factors, with norms polynomially bounded in $\langle r\rangle$.

Now let us show that $R_K$ has an integral kernel $R_K(\x,\x')$ which is jointly
continuous on $\Lambda_L\times \Lambda_L$. For, choose a
point $(\x_0,\x_0')\in \Lambda_L\times \Lambda_L$ and let us prove continuity there. We know by elliptic
regularity that $\psi_k$'s are smooth in $\Lambda_L$. Then the
function  $P_{2}(0)\tilde{g}_\gamma \psi_k$ is smooth in $\Lambda_L$. Denote by $\chi_0$ a 
smoothed-out characteristic
function of a small ball around $\x_0$, whose support is included in
$\Lambda_L$. The operator $\tilde{g}_0S_B(z)g_0$ sends
smooth functions into smooth functions, hence
$\tilde{g}_0S_B(z)g_0\chi_0P_{2}(0)\tilde{g}_\gamma \psi_k$ is smooth
in $\Lambda_L$. Then since the integral kernel of $S_B$ is smooth
outside its diagonal, it means that
$\tilde{g}_0S_B(z)g_0(1-\chi_0)P_{2}(0)\tilde{g}_\gamma \psi_k$ is
smooth at $\x_0$, and remains so even after applying $P_1(B)$. All
bounds are growing at most like a polynomial in $\langle r\rangle$,
hence the integral in $z$ preserves the continuity. The variable
$\x_0'$ only meets smooth functions in $\Lambda_L$, and we are done.

We thus conclude that $R_K$ is trace class and has a continuous integral
kernel. For every increasing sequence of compacts $\Omega_s$ such that
$\Omega_s \nearrow \Lambda_L$ (in the sense that $\Omega_s \subset
\Lambda_L$ and $\lim_{s\to\infty}{\rm
  Vol}(\Lambda_L\setminus \Omega_s)= 0$), we can 
write:
\begin{align}\label{laste29}
 & {\rm Tr} (R_K)=\lim_{s\to\infty} \int_{\Omega_s} R_K(\x,\x)d\x.
\end{align}
Let us denote by $Q_\gamma (B,z)$ the operator given by the integral kernel
\begin{align}\label{laste20}
&Q_\gamma (\x,\x';B,z)\\
&:=e^{iB fl(\x,\x',\gamma )}
[P_{1,\x}(0)+BA_1(\x-\x')]\tilde{g}_0(\x)G_1(\x,\x',z)g_0(\x')\tilde{\tilde{g}}_\gamma(\x'),\nonumber 
\end{align}
At this point we can get rid of the magnetic phases using 
\eqref{laste18}, and using the notation from \eqref{laste20} we have 
\begin{align}\label{laste30}
 &{\rm Tr}( R_K)=\sum_{j=1}^K
{\rm Tr}\int_{\Gamma_\omega}
\tilde{f}_{FD}(z) 
\left \{ \frac{}{}Q_\gamma(B,z) \right . P_{2}(0)\tilde{g}_\gamma
|\psi_k\rangle \langle \psi_k|\frac{1}{(\lambda_k-z-\omega )^{N}}
g_\gamma \nonumber \\
&\left . + ( z \rightarrow
  z-\omega)\frac{}{} \right \} dz.
\end{align}
Now $Q_\gamma$ is a nice bounded operator, and we can let $K\to
\infty$, because the integral will converge in the trace class in the
same way as in \eqref{laste28}. Then \eqref{laste27} implies that:
\begin{align}\label{laste31}
 &R(B,L,\gamma){\rm Tr}\int_{\Gamma_\omega}
\tilde{f}_{FD}(z) 
\left \{ \frac{}{}Q_\gamma(B,z) \right . P_{2}(0)\tilde{g}_\gamma
(H_L(0)-z-\omega )^{-N}
g_\gamma \nonumber \\
&\left . + ( z \rightarrow
  z-\omega)\frac{}{} \right \} dz.
\end{align}

Denote by $\tilde{Q}_\gamma(B,z)$ the
operator given by the integral kernel
\begin{align}\label{laste23}
&\tilde{Q}_\gamma (\x,\x';B,z):=\frac{1}{B}(e^{iB fl(\x,\x',\gamma )}-1)
 P_{1,\x}(0)\tilde{g}_0(\x)G_1(\x,\x',z)g_0(\x')\tilde{\tilde{g}}_\gamma(\x')\nonumber
\\
&+ e^{iB fl(\x,\x',\gamma )}
A_1(\x-\x')\tilde{g}_0(\x)G_1(\x,\x',z)g_0(\x')\tilde{\tilde{g}}_\gamma(\x'). 
\end{align}
We can write:
\begin{align}\label{laste32}
 & R(B,L,\gamma)-R(0,L,\gamma)=B\;{\rm Tr}
\int_{\Gamma_\omega}
\tilde{f}_{FD}(z) 
\left \{ \frac{}{}   \tilde{Q}_\gamma          P_{2}(0)\tilde{g}_\gamma
  (H_L(0)-z-\omega )^{-N}g_\gamma \right .\nonumber \\
&\left . + ( z \rightarrow
  z-\omega)\frac{}{} \right \}.
\end{align}
We also note the estimates:
\begin{align}\label{laste22}
|fl(\x,\x',\gamma )|\leq \frac{1}{2}|\x-\x'|\; |\x'-\gamma|,\quad
|e^{iB fl(\x,\x',\gamma )}-1|\leq B|fl(\x,\x',\gamma )|.
\end{align}
Now it is easy to see from \eqref{laste23}, \eqref{laste22} and
\eqref{intkernrez} that $\tilde{Q}_\gamma$ belongs to
$B(L^2)$, with a norm bounded by $\langle r\rangle^M L^\alpha$. 
By writing 
\begin{align}\label{laste34}
(H_L(0)-z-\omega)^{-2}g_\gamma
&=(H_L(0)-z-\omega)^{-1}\tilde{g}_\gamma(H_L(0)-z-\omega)^{-1}g_\gamma\\
&+(H_L(0)-z-\omega)^{-1}(1-\tilde{g}_\gamma)(H_L(0)-z-\omega)^{-1}
g_\gamma,\nonumber 
\end{align}
we can see that the operator $(H_L(0)-z-\omega
)^{-2}g_\gamma$ is trace class and  
\begin{align}\label{laste33}
||(H_L(0)-z-\omega)^{-2}g_\gamma ||_{B_1}\leq \C \cdot \langle
  r\rangle ^M L^{3\alpha},
\end{align}
after estimating the Hilbert-Schmidt norm of each factor in the two
terms. The second one will be exponentially small due to the support
properties of $g_\gamma$'s. 

We have thus proved 
\begin{align}\label{ympos22}
 \sup_{\gamma\in E}\sup_{0<|B|\leq 1}\left \vert
  \frac{1}{B}\{R(B,L,\gamma)-R(0,L,\gamma)\}\right \vert \leq
\C\cdot 
L^{4\alpha}.
\end{align}
After summation over $\gamma$, the bound is like $L^{2+2\alpha}$, and if $\alpha<1/2$ it will not contribute to 
the thermodynamic limit. 
 
Remember that this was just one possible term arising after integrating by parts $N-1$ times in 
\eqref{laste15}. All other terms having 
sufficiently many derivatives acting on the resolvent, can be treated in a similar way. A different 
class of terms is represented by the one in which all derivatives act on
$S_B(z)$.  Let us define:
\begin{align}\label{laste40}
 &R_1(B,L,\gamma)\nonumber \\
&:= {\rm Tr}
\int_{\Gamma_\omega}
\tilde{f}_{FD}(z) 
\left \{ \frac{}{}P_1(B)\tilde{g}_0S^{(N)}_B(z)g_0 \right
. 
e^{iB\varphi_0(\cdot,\gamma)}P_{2}(0)\tilde{g}_\gamma(H_L(0)-z-\omega)^{-1}
e^{-iB\varphi_0(\cdot,\gamma)}g_\gamma \nonumber \\
&\left . + ( z \rightarrow
  z-\omega)\frac{}{} \right \} dz.
\end{align}
We commute back $P_2(0)$ over the phase at its left and write:

\begin{align}\label{laste41}
&Q_2(B,z,\gamma):=P_1(B)\tilde{g}_0S^{(N)}_B(z)g_0
 P_2(B)\tilde{\tilde{g}}_\gamma, \\
 &R_1(B,L,\gamma)= {\rm Tr}
\int_{\Gamma_\omega}\tilde{f}_{FD}(z)
\left \{ \frac{}{}Q_2(B,z,\gamma)\right .
e^{iB\varphi_0(\cdot,\gamma)}\tilde{g}_\gamma(H_L(0)-z-\omega)^{-1}
e^{-iB\varphi_0(\cdot,\gamma)}g_\gamma \nonumber \\
&\left . + ( z \rightarrow
  z-\omega)\frac{}{} \right \} dz.\nonumber 
\end{align}
Now if $N$ is large enough, by using \eqref{fazam2}, the regularity of
$G_N(\x,\x';z)$, and the exponential decay of the kernels, one can
prove an estimate ($N$ large enough):
\begin{align}\label{laste42}
&|Q_2(B,z,\gamma)(\x,\x')|\leq \C(N) \cdot \langle r\rangle ^M
e^{-\frac{\delta}{\langle r\rangle}|\x-\x'|}\tilde{\tilde{g}}_\gamma(\x').
\end{align}
It means that the integrand in $R_1$ is a product of two
Hilbert-Schmidt operators. We can again introduce the cut-off with the
spectral projection of $H_L(0)$, get rid of the magnetic phases and
introduce the more regular phases, and so on. We consider that Lemma
\ref{lema30} is proved.\qed

\vspace{0.5cm}

Besides $X_1$ treated in the previous lemma, there is only one other
boundary term which needs special attention. This term is the one containing a double boundary sum:
\begin{align}\label{laste43}
 &X_{2}(B,L)= \sum_{\gamma,\gamma'\in
    E} X_2(B,L,\gamma,\gamma'),\\
&\label{laste44}
 X_2(B,L,\gamma,\gamma'):={\rm Tr}
\int_{\Gamma_\omega}
{f}_{FD}(z) \nonumber \\
 &\cdot 
\left \{\frac{}{}e^{iB\varphi_0(\cdot,\gamma')}P_{1,\gamma'}(B)\tilde{g}_{\gamma'}
(H_L(B,\gamma')-z )^{-1}e^{-iB\varphi_0(\cdot,\gamma')}g_{\gamma'}\right . \nonumber
\\ 
&\cdot e^{iB\varphi_0(\cdot,\gamma)}P_{2,\gamma}(B)\tilde{g}_\gamma
(H_L(B,\gamma)-z-\omega )^{-1}e^{-iB\varphi_0(\cdot,\gamma)}g_\gamma \nonumber \\
&\left . + ( z \rightarrow
  z-\omega)\frac{}{} \right \} dz.
\end{align}
and we want to prove that for every $L\geq 1$  we have 
\begin{align}\label{laste46}
\sup_{\gamma,\gamma'\in E}\sup_{0<|B|\leq 1}\left \vert
  \frac{1}{B}\{X_2(B,L,\gamma,\gamma')-X_2(0,L,\gamma,\gamma')\}\right \vert \leq
\C(\alpha)\cdot 
L^{5\alpha} .
\end{align}
Note that this would again imply something like \eqref{laste13} but
for $X_2$, because there is a finite, $L$-independent number of $\gamma$'s and
$\gamma'$'s with joint support.

The strategy is the same. We define
 \begin{align}\label{laste45}
&\tilde{ X}_2(B,L,\gamma,\gamma'):={\rm Tr}
\int_{\Gamma_\omega} f_{FD}(z) \nonumber \\
 &\cdot \left \{ \frac{}{} e^{iB\varphi_0(\cdot,\gamma')}P_{1}(0)\tilde{g}_{\gamma'}
(H_L(0)-z )^{-1}e^{-iB\varphi_0(\cdot,\gamma')}g_{\gamma'} \right .\nonumber
\\ 
&\cdot e^{iB\varphi_0(\cdot,\gamma)}P_{2}(0)\tilde{g}_\gamma
(H_L(0)-z-\omega )^{-1}e^{-iB\varphi_0(\cdot,\gamma)}g_\gamma \nonumber \\
& + \left . ( z \rightarrow
  z-\omega)\frac{}{}  \right \} dz,
\end{align}
and we want to prove two analogues of \eqref{ympos1} and
\eqref{ympos2}. The analogue of \eqref{ympos1} is ``easy'', but the
analogue of \eqref{ympos2} again requires a limiting procedure which
would allow us to write the $\tilde{ X}_2(B,L,\gamma,\gamma')$ as the
trace of a more regular object in $B$. The main point is that this
new object will be given by the composition of those four magnetic
phases present in $\tilde{ X}_2$. Namely, let us notice the following
identity:
\begin{align}\label{laste47}
&\varphi_0(\x,\gamma')+\varphi_0(\gamma',\x')+\varphi_0(\x',\gamma)+\varphi_0(\gamma,\x)\\
&=fl(\x,\gamma',\x')+fl(\x',\gamma,\x)\nonumber \\
&=fl(\x,\gamma',\gamma)+fl(\x',\gamma,\gamma').\label{laste48}
\end{align}
Now \eqref{laste48} allows us to write:
\begin{align}\label{laste49}
&\tilde{ X}_2(B,L,\gamma,\gamma')={\rm Tr}
\int_{\Gamma_\omega}
{f}_{FD}(z) 
\left \{ \frac{}{}e^{iBfl(\cdot ,\gamma',\gamma)}P_{1}(0)\tilde{g}_{\gamma'}
(H_L(0)-z )^{-1}g_{\gamma'}\right . \nonumber
\\ 
&\cdot e^{iBfl(\cdot ,\gamma,\gamma')}P_{2}(0)\tilde{g}_\gamma
(H_L(0)-z-\omega )^{-1}g_\gamma \nonumber \\
&\left . + ( z \rightarrow
  z-\omega)\frac{}{} \right \} dz.
\end{align}
The good thing about this formula is that on the supports of
$g_\gamma$'s, these fluxes are at most of order $L^{2\alpha}$, being
bounded from above by $|\x-\gamma|\cdot |\gamma-\gamma'|$. Remember
that the non-zero terms must have $|\gamma-\gamma'|\leq \C\cdot
L^\alpha$. Now we can expand the exponentials and prove the analogue
of \eqref{ympos2}. Thus Proposition \ref{lemma300} is proved.\qed 
\vspace{0.5cm}

\subsubsection{The bulk contribution}

At this point we are left with the contribution coming from terms
only containing $H_\infty(B)$. Define: 
\begin{align}\label{finish1}
& X_0(B,L):= {\rm Tr}
\int_{\Gamma_\omega}
{f}_{FD}(z)\left \{\frac{}{} P_1(B)\tilde{g}_0(H_\infty(B)-z)^{-1}g_0 P_2(B) 
\tilde{g}_0(H_\infty(B)-z-\omega)^{-1}g_0 \right . \nonumber \\
&+  \left . z \rightarrow z-\omega  \frac{}{} \right \}dz.
\end{align}
We will compute $\frac{1}{{\rm Vol}(\Lambda_L)}\partial_BX_0(0,L)$ and
show that it converges to $\{\partial_B\sigma_\infty\}(0)$.  

Define:
\begin{align}\label{finish2}
&\tilde{ X}_0(B,L):= {\rm Tr}
\int_{\Gamma_\omega}
{f}_{FD}(z) \left \{\frac{}{} P_1(B)\tilde{g}_0S_B(z)g_0 P_2(B) 
\tilde{g}_0S_B(z+\omega)g_0 \right .+  \left . z \rightarrow z-\omega  \frac{}{} \right \}dz.
\end{align}

Now we can prove the last technical result:
\begin{proposition}\label{kass}
The following two double limits exist:
\begin{align}\label{finish3}
\sigma_1 &:=\lim_{L\to\infty}\frac{1}{{\rm Vol}(\Lambda_L)}\lim_{B\to 0}
\frac{1}{B}\{X_0(B,L)-\tilde{X}_0(B,L)\},\\
\label{finish4}\sigma_2&:=\lim_{L\to\infty}\frac{1}{{\rm Vol}(\Lambda_L)}
\lim_{B\to 0}\frac{1}{B}\{\tilde{X}_0(B,L)-{X}_0(0,L)\}.
\end{align}
Moreover, the mapping $s_B$ defined in Theorem \ref{teorema1} is
differentiable at $B=0$ and
\begin{align}\label{finish5}
\lim_{L\to\infty}\{\partial_B\sigma_{L}\}(0)&\lim_{L\to\infty}\frac{1}{{\rm
    Vol}(\Lambda_L)}\{\partial_BX_0\}(0,L)\nonumber \\
&=\sigma_1+\sigma_2=\{\partial_B\sigma_\infty\}(0)=-\int_\Omega
\{\partial_B s_{B}\}_{B=0}(\x)d\x.
\end{align}
\end{proposition}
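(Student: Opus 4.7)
The plan is to handle $\sigma_1$ and $\sigma_2$ by separate magnetic-perturbation arguments, then reconcile their sum with the right-hand side of \eqref{finish5} via the bulk identification of the integral kernel.

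For the existence of $\sigma_1$, the strategy is to expand $(H_\infty(B)-z)^{-1}$ around $S_B(z)$ using \eqref{laste4}, which may be rewritten as
\begin{equation*}
(H_\infty(B)-z)^{-1} = S_B(z) - B\,S_B(z) T_B(z) + R_B(z), \qquad \|R_B(z)\|\leq \C\,|B|^2\langle r\rangle^M.
\end{equation*}
Substituting this identity into each of the two resolvents appearing in $X_0(B,L)$ and expanding bilinearly, the difference $X_0(B,L)-\tilde X_0(B,L)$ splits into two terms that are linear in $B$ (each containing exactly one factor of $-BS_B T_B$, the other being $S_B$) plus $O(B^2)$ remainders. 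Proceeding as in Section 2, I would integrate by parts in $z$ enough times so that the trace norms of all pieces are controlled by $f_{FD}(z)\langle r\rangle^{-2}$, times some polynomial in $\mathrm{Vol}(\Lambda_L)$. Dividing by $B$ and letting $B\to 0$ leaves a trace, over $\Lambda_L$, of a fixed operator built from $G_1(\cdot,\cdot;z)$, $T_0$, the $P_j(0)$'s and the cut-offs $g_0,\tilde g_0$. To pass to the thermodynamic limit one repeats Proposition \ref{prop222}: the boundary pieces inside $\tilde g_0,g_0$ contribute only a surface-sized error (by \eqref{fazam7} and Lemma \ref{bucurestilema1}), while the bulk piece is a translation of a fixed kernel, so by periodicity of $V$ the integral of its diagonal over $\Lambda_L$ grows like $\mathrm{Vol}(\Lambda_L)$ times an integral over $\Omega$.

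For the existence of $\sigma_2$, the key is the cancellation of magnetic phases on the diagonal, already exploited in Lemma \ref{lema30}. Following the rigorous argument there, I would insert the spectral resolution of $H_\infty(0)$ in the middle factor, rewrite $\tilde X_0(B,L)$ as a limit of traces of operators with jointly continuous kernels, and then combine the two magnetic phases using \eqref{laste18}. Because of the antisymmetry of $\varphi_0$ and the presence of the outer $P_2(0)\tilde g_0$ factor, the surviving phase on the diagonal reduces to a flux-type function $fl$ which depends at most linearly on $B$ (to leading order) and is bounded on supports of the cut-offs. After removing the spectral cut-off (uniformly in $B$ near $0$) and differentiating in $B$, I obtain a well-defined integral over $\Lambda_L$ whose integrand is again a translate of a fixed periodic function modulo an $o(\mathrm{Vol}(\Lambda_L))$ boundary error, which yields convergence of the rescaled derivative as $L\to\infty$.

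For the final identification in \eqref{finish5}, I combine Propositions \ref{lemma100} and \ref{lemma300} with the existence of $\sigma_1$ and $\sigma_2$ to conclude
\begin{equation*}
\lim_{L\to\infty}\{\partial_B\sigma_L\}(0) \;=\; -\lim_{L\to\infty}\frac{1}{\mathrm{Vol}(\Lambda_L)}\{\partial_B X_0\}(0,L) \;=\; -(\sigma_1+\sigma_2).
\end{equation*}
On the other hand, on the bulk $\Lambda_L\setminus\Xi_L(1/2)$ the operators $\tilde g_0(H_\infty(B)-z)^{-1}g_0$ and $g_0(H_\infty(B)-z)^{-1}\tilde g_0$ agree (up to surface-supported corrections controlled by Lemma \ref{bucurestilema1}) with the full-space resolvent multiplied by $g_0$, so the diagonal of the integrand of $X_0$ converges pointwise to the kernel of $F_\infty$ at $(\x,\x)$. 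Using the $\mathbb{Z}^3$-periodicity of $s_B$ established in Theorem \ref{teorema1}(ii), the bulk trace equals $\mathrm{Vol}(\Lambda_L)\int_\Omega s_B(\x)\,d\x + o(\mathrm{Vol}(\Lambda_L))$. Exchanging $\partial_B$ with the integral (justified by the uniform-in-$B$ control obtained in the $\sigma_1,\sigma_2$ analysis, applied now pointwise in $\x$) produces the differentiability of $B\mapsto s_B\in L^\infty(\Omega)$ at $B=0$ and the formula $\partial_B\sigma_\infty(0)=-\int_\Omega\{\partial_B s_B\}_{B=0}(\x)\,d\x$.

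The main obstacle is the uniform-in-$L$ control of the $O(B^2)$ remainder in both decompositions: one must show that after dividing by $B$ and by $\mathrm{Vol}(\Lambda_L)$, it vanishes in the iterated limit $B\to 0$, $L\to\infty$. This is where the full strength of the three-layered cut-off partition of Section 4.1, the pointwise exponential estimates of Proposition \ref{prop6}, and the flux-identity trick of Lemma \ref{lema30} must all be combined; equivalently, the genuine content of the statement is the commutation of $\partial_B$ with the thermodynamic trace.
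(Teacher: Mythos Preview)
Your treatment of $\sigma_1$ follows the paper's line closely: expand each resolvent via \eqref{fazam6}, isolate the $B$-linear piece $-B S_B T_B$, integrate by parts in $z$ to make everything trace class, and then strip the cut-offs as in Proposition~\ref{prop222}. This is fine.

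For $\sigma_2$, however, the argument you sketch does not quite fit the object at hand. Two concrete issues:
\begin{itemize}
\item You propose to ``insert the spectral resolution of $H_\infty(0)$ in the middle factor'' as in Lemma~\ref{lema30}. But in $\tilde X_0(B,L)$ both factors are $S_B$'s built from the \emph{whole-space} Green function $G_1$; there is no $(H_L(0)-\zeta)^{-1}$ with discrete spectrum to expand. The spectral-projection trick of \eqref{laste26}--\eqref{laste30} was tailored to the Dirichlet operator on $\Lambda_L$ and cannot be transplanted verbatim to $H_\infty(0)$.
\item You invoke the flux identity \eqref{laste18} and speak of a ``surviving phase on the diagonal \dots\ a flux-type function $fl$''. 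In the present situation there is no third point $\gamma$: the two phases are $e^{iB\varphi_0(\x,\y)}$ and $e^{iB\varphi_0(\y,\x')}$, and on the diagonal $\x=\x'$ they cancel \emph{exactly} by antisymmetry of $\varphi_0$, with no residual flux. The $B$-linear contribution comes instead from the shifts $P_j(B)\mapsto P_j(0)+BA_j(\x-\y)$ produced by commuting the momenta through the phases (see \eqref{d3323}--\eqref{d3325}).
\end{itemize}
The paper bypasses both difficulties by a more elementary route: integrate by parts $N$ times so that the kernels $G_{N+1-k}$, $G_{k+1}$ are jointly continuous; then the integral kernel of the integrand is explicitly written as \eqref{d3323}, the diagonal value is computed directly, the phases drop out, and the derivative at $B=0$ is the concrete expression \eqref{d3325}. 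The thermodynamic limit then follows from \eqref{d3326}--\eqref{dpdp44} by showing that replacing $g_0$ by $1$ only costs a surface term plus an exponentially small correction.

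Your final paragraph (identification with $\partial_B s_B$) is in the right spirit and matches the paper's closing remarks. The substantive gap is in the mechanism for $\sigma_2$: rework it using the direct kernel computation after integration by parts rather than the spectral-resolution/flux machinery of Lemma~\ref{lema30}.
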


\noindent{\bf Proof}. Let us start with \eqref{finish3}. Using \eqref{fazam6} 
one can show the following identity:
\begin{align}\label{finish3a}
&\lim_{B\to 0}\frac{1}{B}\{X_0(B,L)-\tilde{X}_0(B,L)\}=-{\rm Tr}
\int_{\Gamma_\omega}
{f}_{FD}(z) \nonumber \\
&\cdot\left \{\frac{}{} P_1(0)\tilde{g}_0(H_\infty(0)-z)^{-1}T_0(z)g_0 P_2(0) 
\tilde{g}_0(H_\infty(0)-z-\omega)^{-1}g_0 \right . \nonumber \\
&+  \left . z \rightarrow z-\omega  \frac{}{} \right \}dz-{\rm Tr}
\int_{\Gamma_\omega}{f}_{FD}(z) \nonumber \\
&\cdot\left \{\frac{}{} P_1(0)\tilde{g}_0(H_\infty(0)-z)^{-1}g_0 P_2(0) 
\tilde{g}_0(H_\infty(0)-z-\omega)^{-1}T_0(z+\omega)g_0 \right . \nonumber \\
&+  \left . z \rightarrow z-\omega  \frac{}{} \right \}dz.
\end{align}
Then by integrating many times by parts, the integrand will become trace 
class, and we can get rid of the cut-off functions $\tilde{g}_0$ and $g_0$ 
since their removal will only contribute with a surface correction. 
Hence we can write:
\begin{align}\label{finish3b}
&\sigma_1=-\lim_{L\to \infty}\frac{1}{{\rm Vol(\Lambda_L)}}{\rm Tr}\;
\chi_{\Lambda_L} 
\int_{\Gamma_\omega}
{f}_{FD}(z) \nonumber \\
&\cdot\left \{\frac{}{}
P_1(0)(H_\infty(0)-z)^{-1}T_0(z) P_2(0) 
(H_\infty(0)-z-\omega)^{-1} \right . \nonumber \\
&+  \left . z \rightarrow z-\omega  \frac{}{} \right \}dz-
\lim_{L\to \infty}\frac{1}{{\rm Vol(\Lambda_L)}}{\rm Tr}\;\chi_{\Lambda_L} 
\int_{\Gamma_\omega}{f}_{FD}(z) \nonumber \\
&\cdot\left \{\frac{}{}
 P_1(0)(H_\infty(0)-z)^{-1} P_2(0) 
(H_\infty(0)-z-\omega)^{-1}T_0(z+\omega) \right . \nonumber \\
&+  \left . z \rightarrow z-\omega  \frac{}{} \right \}dz.
\end{align}
Now one can prove (as we did for $F_\infty$) that the two operators defined 
above by integrals over $\Gamma_\omega$ have jointly continuous integral 
kernels, whose diagonal values are $\mathbb{Z}^3$-periodic. 
It means that the limit 
exists and equals the integral of the kernels' diagonal value over the unit 
cube in $\R^3$. 

Now let us continue with the proof of \eqref{finish4}. First, we can get rid 
of $\tilde{g}_0$ because $P_j(B)$ is local. Let us integrate by parts $N$ 
times with respect to $z$, $N$ large. Then a typical term in the integrand 
defining $\tilde{X}_0(B,L)$ will be:
$$P_1(B)S_B^{(N+1-k)}(z)g_0 P_2(B)S_B^{(k+1)}
(z+\omega)g_0,\quad k\in\{0,...,N\},$$
where as before $S_B^{(N)}(z)$ has the integral kernel 
$e^{iB\phi_0(\x,\x')}G_{N}(\x,\x';z)$.

This operator will have an integral kernel given by:
\begin{align}\label{d3322}
&\int_{\R^3}P_{1,\x}(B)e^{iB\varphi_0(\x,\y)}G_{N+1-k}(\x,y;z)g_0(\y)\nonumber \\
&\cdot P_{2,\y}(B)e^{iB\varphi_0(\y,\x')}G_{k+1}(\y,\x';z+\omega)g_0(\x')d\y.
\end{align}
We commute the momenta with the magnetic phases and obtain:
\begin{align}\label{d3323}
&\int_{\R^3}e^{iB\varphi_0(\x,\y)}\{P_{1,\x}(0)+BA_1(\x-\y)\}
G_{N+1-k}(\x,\y;z)g_0(\y)\nonumber \\
&e^{iB\varphi_0(\y,\x')}\{P_{2,\y}(0)+BA_2(\y-\x')\}G_{k+1}(\y,\x';z+\omega)
g_0(\x')d\y.
\end{align}
This integral is absolutely convergent and defines a continuous function in 
$\x$ and $\x'$ (we can see this from the regularity and exponential 
localization of $G_{N}(\x,\x';z)$ and its first order derivatives). In order 
to  
perform the trace of this operator we put $\x=\x'$. 
The two magnetic phases will disappear, 
thus we get:
\begin{align}\label{d3324}
&\int_{\R^3}\{P_{1,\x}(0)+BA_1(\x-\y)\}
G_{N+1-k}(\x,\y;z)g_0(\y)\nonumber \\
&\{P_{2,\y}(0)+BA_2(\y-\x)\}G_{k+1}(\y,\x;z+\omega)
g_0(\x)d\y.
\end{align}
The contribution to $\lim_{B\to 0}\frac{1}{B}\{\tilde{X}_0(B,L)-{X}_0(0,L)\}$ 
coming from this term will be:
\begin{align}\label{d3325}
&R_L(\x)\\
&:=g_0(\x)\int_{\R^3}A_1(\x-\y)
G_{N+1-k}(\x,\y;z)g_0(\y)A_2(\y-\x)G_{k+1}(\y,\x;z+\omega)
d\y.\nonumber 
\end{align}
Now we have to investigate the existence of the limit:
\begin{align}\label{dpdp}
\lim_{L\to\infty}\frac{1}{{\rm
    Vol}(\Lambda_L)}\int_{\Lambda_L}R_L(\x)d\x.
\end{align}
 Let us first note that due to the 
exponential localization of $G_k$'s (see \eqref{intkernrez}) we have the 
following uniform estimate:
\begin{align}\label{d3326}
&\sup_{\x\in \R^3}\int_{\R^3}|A_1(\x-\y)|\; 
|G_{N+1-k}(\x,\y;z)|\; |A_2(\y-\x)|\;|G_{k+1}(\y,\x;z+\omega)|
d\y\nonumber \\ 
&\leq \C\cdot r^M.
\end{align}
If we look back at the definition of $g_0$, we see that it equals $1$ on the 
complementary in $\Lambda_L$ 
of a boundary neighborhood like $\Xi_L(t_0)$ with $t_0>1$ (see 
\eqref{margine1}). Denote by $\chi_L$ the characteristic function of 
$\Lambda_L\setminus \Xi_L(2t_0)$. Thus we have 
\begin{align}\label{d3327}
\chi_L\; g_0=\chi_L,\quad {\rm dist}\{{\rm supp}(1-g_0),{\rm supp}(\chi_L)\}
\geq t_0L^\alpha.
\end{align}
Because of the uniform estimate \eqref{d3326}, 
the limit in \eqref{dpdp} exists if and only if the following one exists:
\begin{align}\label{dpdp2}
\lim_{L\to\infty}\frac{1}{{\rm
    Vol}(\Lambda_L)}\int_{\Lambda_L}\chi_L(\x)R_L(\x)d\x,
\end{align}
because the difference between integrands only gives a surface contribution. 
Let us now define:
\begin{align}\label{dpdp3}
&R_\infty(\x)\\
&:=\int_{\R^3}A_1(\x-\y)
G_{N+1-k}(\x,\y;z)A_2(\y-\x)G_{k+1}(\y,\x;z+\omega)
d\y.\nonumber 
\end{align}
The difference between $\chi_L R_L$ and $\chi_L R_\infty$ comes from the 
integration over the support of $1-g_0$. But due to \eqref{d3327} and 
the exponential decay of $G_k$'s, this difference is of order 
$e^{-\delta L^\alpha/\langle r\rangle}$, 
thus will not contribute to the limit. Moreover, $R_\infty$ is 
$\mathbb{Z}^3$-periodic, therefore we can write:
\begin{align}\label{dpdp44}
\lim_{L\to \infty}\sup_{z\in \Gamma_\omega}\langle r\rangle^{-M}\left \vert \frac{1}{{\rm
    Vol}(\Lambda_L)}\int_{\Lambda_L}R_L(\x)d\x-\int_\Omega R_\infty(\x)d\x
\right\vert =0,
\end{align}
where $M$ is some large enough positive number. Then the exponential decay 
of $f_{FD}$ will insure the convergence of the $\Gamma_\omega$-integrals, thus 
\eqref{finish4} is proved. 

The last ingredient in the proof of Proposition \ref{kass} is the computation 
of $\partial_B\sigma_\infty(0)$ and the comparison with $\sigma_1+\sigma_2$. 
But the steps are very similar to those we have already done in order to 
compute $\sigma_1$ and $\sigma_2$. First, one integrates by parts many times 
with respect to $z$ in order to obtain a ``nice'' form for $F_\infty$. Second, 
using the magnetic perturbation theory one writes down a Taylor expansion 
in $B$ of $s_B(\x)$ at $B=0$ which only contains ``regularized'' terms {\it and where we can interchange the 
expansion in $B$ with the thermodynamic limit $L\to\infty$}. This strategy has 
been already used in \cite{CNP} for the Faraday effect (including the spin 
contribution, neglected here), and in \cite{BCL2} for generalized susceptibilities.  \qed

\vspace{0.5cm}

\section{Appendix:  Uniform exponential decay}

The following proposition contains two key estimates which we are going to use throughout this paper. 
\begin{proposition}\label{prop1}
Assume that $z\in \mathbb{C}$ and ${\rm dist}\{z,[0,\infty)\}= \eta>0 $. 
Then for any
$\alpha\in\{1,2,3\}$ we have 
\begin{align}\label{keyest}
&\sup_{L>1}||[D_\alpha +Ba_\alpha]\left \{(-i\nabla +B {\bf a})_D^2-z\right
\}^{-1}||\nonumber \\
&\leq \sqrt{1/\eta+\max\{\Re(z),0\}/\eta^2}.
\end{align}
Moreover, there exists a constant $C$ such that: 
\begin{align}\label{keyest2}
\sup_{L>1}\sup_{\Re(z)\geq 0}\langle |z|\rangle 
^{-1}||[D_\alpha +Ba_\alpha](H_L(B)-z)^{-1}||\leq C/\eta.
\end{align}
\end{proposition}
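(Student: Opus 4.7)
The plan is to prove both estimates by a quadratic-form argument, exploiting the identity
$$\sum_{\alpha=1}^3 \|P_\alpha(B)\phi\|^2 = \langle \phi, (-i\nabla + B{\bf a})_D^2\, \phi\rangle$$
which holds for every $\phi$ in the form domain $H_0^1(\Lambda_L)$. This is exactly what is needed to circumvent the linear growth of ${\bf a}$, since we never treat $P_\alpha(B)$ as an operator on $L^2$ but always as a form.

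For \eqref{keyest}, set $H_0 = (-i\nabla + B{\bf a})_D^2$ and $\phi = (H_0 - z)^{-1}\psi$. Since $H_0$ is self-adjoint and nonnegative, its resolvent maps $L^2(\Lambda_L)$ continuously into its form domain $H_0^1(\Lambda_L)$, and $\|\phi\| \leq \|\psi\|/\eta$. Writing $H_0 = (H_0 - z) + z$ and taking real parts gives
$$\langle \phi, H_0\phi\rangle = \Re\langle\phi,\psi\rangle + \Re(z)\,\|\phi\|^2.$$
Thus $\|P_\alpha(B)\phi\|^2 \leq \sum_\beta \|P_\beta(B)\phi\|^2 = \langle\phi, H_0\phi\rangle$, and Cauchy--Schwarz together with $\|\phi\| \leq \|\psi\|/\eta$ yield
$$\|P_\alpha(B)\phi\|^2 \leq \frac{\|\psi\|^2}{\eta} + \frac{\max\{\Re(z),0\}\,\|\psi\|^2}{\eta^2},$$
where the case $\Re(z) \leq 0$ is handled by dropping the (nonpositive) $\Re(z)\|\phi\|^2$ term before taking Cauchy--Schwarz. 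This is exactly \eqref{keyest}, with $L$-independence coming from the fact that all constants involve only $\eta$.

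For \eqref{keyest2} the same argument applies to $\phi = (H_L(B)-z)^{-1}\psi$, but now $H_0 = H_L(B) - V$, so
$$\sum_\alpha \|P_\alpha(B)\phi\|^2 = \langle\phi, H_L(B)\phi\rangle - \langle\phi, V\phi\rangle = \Re\langle\phi,\psi\rangle + \Re(z)\,\|\phi\|^2 - \langle\phi, V\phi\rangle.$$
Since $V$ is real and bounded (it is smooth and $\mathbb{Z}^3$-periodic), $|\langle\phi, V\phi\rangle| \leq \|V\|_\infty \|\phi\|^2$; combined with $\|\phi\| \leq \|\psi\|/\eta$ (the spectrum of $H_L(B)$ is in $[0,\infty)$ by assumption) and the hypothesis $\Re(z) \geq 0$, one obtains
$$\|P_\alpha(B)\phi\|^2 \leq \frac{\|\psi\|^2}{\eta} + \frac{(|z| + \|V\|_\infty)\,\|\psi\|^2}{\eta^2}.$$
Dividing by $\langle|z|\rangle^2$ and taking the square root gives the claimed bound $C/\eta$ for a constant $C$ depending only on $\|V\|_\infty$.

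The only genuine subtlety is the first step: one has to know that $(H_L(B)-z)^{-1}$ and $(H_0-z)^{-1}$ actually send $L^2(\Lambda_L)$ into the form domain $H_0^1(\Lambda_L)$, so that the quadratic-form identity may be applied. This is immediate from the construction of the Friedrichs extension recalled right after \eqref{impuls2}, so no real obstacle remains; the argument is purely algebraic once the correct domain is identified, and this is what makes the bound uniform in $L$ despite the unboundedness of $B{\bf a}$.
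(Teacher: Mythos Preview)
Your proof of \eqref{keyest} is exactly the paper's argument: the same quadratic-form identity (the paper's equation \eqref{keyest3}) followed by Cauchy--Schwarz and the trivial resolvent bound.

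For \eqref{keyest2} you take a shorter route than the paper. You apply the quadratic-form identity directly to $\phi=(H_L(B)-z)^{-1}\psi$ and absorb $V$ as a bounded form perturbation. The paper instead proceeds indirectly: it first uses \eqref{keyest} at $z=i\lambda$ to get $\|P_\alpha(B)[(-i\nabla+B{\bf a})_D^2-i\lambda]^{-1}\|\leq C/\sqrt{\lambda}$, then chooses $\lambda_0$ large and inverts a Neumann series in $V$ to control $[(-i\nabla+B{\bf a})_D^2-i\lambda_0](H_L(B)-i\lambda_0)^{-1}$, and finally uses the resolvent identity to reach general $z$. Your argument is cleaner for the statement as written. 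The paper's detour, however, produces as an intermediate step the estimate \eqref{222}, namely $\|[(-i\nabla+B{\bf a})_D^2-i\lambda_0](H_L(B)-z)^{-1}\|\leq C|z|/\eta$, which is reused in the proof of Proposition~\ref{prop2} (for \eqref{expdek3} via \eqref{expdek6}). So the paper's longer path is not wasted; it supplies a graph-norm bound that is needed later, whereas your argument yields only the $P_\alpha(B)$ bound.

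One small remark: after dividing your inequality by $\langle|z|\rangle^2$ you get a bound of the form $\eta^{-1}+C'\eta^{-2}$ under the square root, not literally $C^2/\eta^2$. This still gives $\leq C/\eta$ once $\eta$ is bounded above (as it always is on the contours used in the paper; cf.\ the hypothesis $0<\eta<1$ in Propositions~\ref{prop5} and~\ref{prop6}), and the paper's own bound has the same feature, so this is harmless.
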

\noindent{\bf Proof.} The estimate \eqref{keyest} is an easy consequence of the
following trivial identity, valid for every $\psi\in L^2(\Lambda_L)$:
\begin{align}\label{keyest3}
&\sum_{\alpha=1}^3||[D_\alpha +Ba_\alpha]\left \{(-i\nabla +B {\bf a})_D^2-z\right
\}^{-1}\psi||^2 \\
&=\Re \left (\langle  \{(-i\nabla +B {\bf a})_D^2-z
\}^{-1}\psi,\psi \rangle \right )+\Re(z)||\{(-i\nabla +B {\bf a})_D^2-z
\}^{-1}\psi||^2.\nonumber
\end{align}
The estimate \eqref{keyest2} is a bit more involved. 
From \eqref{keyest} we have
that for every $\lambda>1$:

$$\sup_{L>1}||[D_\alpha +Ba_\alpha]\left \{(-i\nabla +B {\bf a})_D^2-i\lambda \right
\}^{-1}||\leq \frac{C}{\sqrt{\lambda}}.$$
Since $V$ is bounded we have: 
$$\sup_{L>1}||V[(-i\nabla +B {\bf a})_D^2-i\lambda]^{-1}||\leq \frac{C}{\lambda}.$$
Choosing a $\lambda_0$ large enough and using the Neumann series in
$V$ for
the resolvent we have 
$$\sup_{L>1}\left\{||V(H_L(B)-i\lambda_0)^{-1}||+|| [(-i\nabla +B
  {\bf a})_D^2-i\lambda_0](H_L(B)-i\lambda_0)^{-1}||\right \}\leq
1/2.$$
Using the resolvent identity we obtain:
\begin{equation}\label{222}
\sup_{L>1}||[(-i\nabla +B
  {\bf a})_D^2-i\lambda_0](H_L(B)-z)^{-1}||\leq C
  |z|/\eta.
\end{equation}
Hence writing 
\begin{align}
[D_\alpha +Ba_\alpha](H_L(B)-z)^{-1}&=[D_\alpha +Ba_\alpha][(-i\nabla +B
  {\bf a})_D^2-i\lambda_0]^{-1}\nonumber \\
&\cdot [(-i\nabla +B
  {\bf a})_D^2-i\lambda_0](H_L(B)-z)^{-1}\nonumber 
\end{align}
we obtain the result.
\qed 

\vspace{0.5cm}

We will need a certain type of uniform
exponential localization, stated in the proposition below. If $\x_0$ is some point in $\Lambda_L$ and $\alpha\in \R$, then let $e^{\alpha \langle \cdot -\x_0\rangle }$ denote the multiplication operator with the function $e^{\alpha \sqrt{ |\x -\x_0|^2+1}}$. Note that multiplication with the exponential weight is a bounded operator if $L<\infty$ and leaves invariant the domain of $H_L(B)$. 

\begin{proposition}\label{prop2}
Fix $\x_0\in\Lambda_L$ and ${\rm
  dist}\{z,[0,\infty)\}= \eta>0 $. Denote by $r:=\Re(z)$. Then there exists a
$\delta_0>0$ and a constant $C$ such that for every $0\leq \delta\leq \delta_0$ we have
 \begin{align}\label{expdek}
\sup_{L>1}\sup_{r\in \R}\sup_{\x_0\in\Lambda}\left \Vert 
e^{{\langle \cdot -\x_0\rangle \frac{\pm\delta}{\langle r\rangle }}}
(H_L(B)-z)^{-1}e^{\langle \cdot -\x_0\rangle {\frac{\mp\delta}
{\langle r\rangle}}}\right
\Vert \leq C,
\end{align}
\begin{align}\label{expdek2}
&\sup_{L>1}\sup_{r\in \R}\sup_{\x_0\in\Lambda}\left \{
  \langle r\rangle ^{-1}\left \Vert [D_\alpha+Ba_\alpha]
e^{{\langle\cdot -\x_0\rangle\frac{\pm\delta}{\langle r\rangle 
}}}(H_L(B)-z)^{-1}e^{\langle \cdot -\x_0\rangle{\frac{\mp\delta}{\langle r 
\rangle }}}\right
\Vert \right \}\leq C,
\end{align}
and 
\begin{align}\label{expdek3}
&\sup_{L>1}\sup_{r\in \R}\sup_{\x_0\in\Lambda}\left \{
  \langle r\rangle ^{-1}\left \Vert e^{{\langle \cdot -\x_0\rangle 
\frac{\pm\delta}{\langle r\rangle }}}
[(-i\nabla +B{\bf a})_D^2+1](H_L(B)-z)^{-1}e^{\langle \cdot -\x_0\rangle 
{\frac{\mp\delta}{\langle r\rangle }}}\right
\Vert \right \}\leq C. 
\end{align}
\end{proposition}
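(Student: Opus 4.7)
The plan is to apply the Combes-Thomas conjugation method with a spectrum-dependent scaling of the exponential weight. Set $\rho(\x) := \langle \x - \x_0\rangle = \sqrt{|\x-\x_0|^2+1}$, which is smooth with $|\nabla \rho| \leq 1$ and $|\Delta \rho| \leq C$ uniformly in $\x$ and $\x_0$. Writing $\tau := \pm\delta/\langle r\rangle$, the commutation $e^{\tau\rho}\mathbf{P}_j(B)e^{-\tau\rho} = \mathbf{P}_j(B) + i\tau(\partial_j\rho)$ gives, after squaring and summing,
\begin{equation*}
e^{\tau\rho} H_L(B) e^{-\tau\rho} = H_L(B) + W_\tau,\qquad W_\tau := 2i\tau(\nabla\rho)\cdot\mathbf{P}(B) + \tau(\Delta\rho) - \tau^2|\nabla\rho|^2,
\end{equation*}
valid on the form domain $H_0^1(\Lambda_L)$ (and extended from $C_0^\infty$ when $L=\infty$).

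The key step is to absorb $W_\tau$ by a Neumann series. By \eqref{keyest2} and its adjoint (obtained via $z\mapsto\bar z$), both $\|\mathbf{P}_j(B)(H_L(B)-z)^{-1}\|$ and $\|(H_L(B)-z)^{-1}\mathbf{P}_j(B)\|$ are bounded by $C\langle r\rangle/\eta$. Combined with $\|(H_L(B)-z)^{-1}\|\leq 1/\eta$ and $|\tau|\leq \delta/\langle r\rangle$, the momentum piece of $W_\tau$ contributes $O(\delta/\eta)$ and the bounded pieces contribute $O(\delta/(\langle r\rangle\eta))$, so
\begin{equation*}
\|W_\tau(H_L(B)-z)^{-1}\| + \|(H_L(B)-z)^{-1}W_\tau\| \leq C\delta/\eta
\end{equation*}
uniformly in $L$, $\x_0$, and $z$. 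Choosing $\delta_0(\eta)$ so small that the right-hand side is at most $1/2$, a Neumann series gives $(H_L(B)+W_\tau-z)^{-1}=(1+(H_L(B)-z)^{-1}W_\tau)^{-1}(H_L(B)-z)^{-1}$, bounded by $2/\eta$. Since $e^{\tau\rho}(H_L(B)-z)^{-1}e^{-\tau\rho}=(H_L(B)+W_\tau-z)^{-1}$ by construction, this proves \eqref{expdek}.

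For \eqref{expdek2}, inserting $e^{-\tau\rho}e^{\tau\rho}$ between the momentum factor and the resolvent yields
\begin{equation*}
e^{\tau\rho}\mathbf{P}_j(B)(H_L(B)-z)^{-1}e^{-\tau\rho} = (\mathbf{P}_j(B)+i\tau\,\partial_j\rho)\,(H_L(B)+W_\tau-z)^{-1}.
\end{equation*}
The $i\tau\partial_j\rho$ term contributes $O(\tau/\eta)$. For the $\mathbf{P}_j(B)$ term I would expand $(H_L(B)+W_\tau-z)^{-1}=\sum_{n\geq 0}(-1)^n(H_L(B)-z)^{-1}[W_\tau(H_L(B)-z)^{-1}]^n$ and read the operator as a geometric series each of whose summands begins with $\mathbf{P}_j(B)(H_L(B)-z)^{-1}$; by \eqref{keyest2}, each such term is bounded by $C\langle r\rangle/\eta$, and the series converges with ratio at most $1/2$, producing a total bound of order $\langle r\rangle/\eta$. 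For \eqref{expdek3}, I write $(-i\nabla+B\mathbf{a})_D^2+1 = H_L(B)+1-V$, so that
\begin{equation*}
[(-i\nabla+B\mathbf{a})_D^2+1](H_L(B)-z)^{-1} = 1 + (z+1-V)(H_L(B)-z)^{-1},
\end{equation*}
whose norm is $O(\langle r\rangle/\eta)$, using that $|z|\leq\langle r\rangle+\eta$ if $\Re z\geq 0$ and $|z|=\eta$ if $\Re z<0$. Conjugating as before produces $[(-i\nabla+B\mathbf{a})_D^2+1+W_\tau](H_L(B)+W_\tau-z)^{-1}$, and the same Neumann-series argument gives the claimed $\langle r\rangle/\eta$ bound.

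The central obstacle is achieving uniformity in $z$ along an unbounded contour: a naive Combes-Thomas argument with a constant decay rate would fail because $\|\mathbf{P}(B)(H_L(B)-z)^{-1}\|$ grows like $\langle r\rangle$, so the perturbation $W_\tau$ would eventually dominate $(H_L(B)-z)^{-1}$. The scaling $\tau\sim 1/\langle r\rangle$ exactly compensates for this growth, which is precisely why the exponential estimates in the body of the paper all carry the characteristic decay rate $\delta/\langle r\rangle$.
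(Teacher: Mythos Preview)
Your argument is correct and follows essentially the same Combes--Thomas strategy as the paper: conjugate by $e^{\tau\rho}$ with $\tau=\pm\delta/\langle r\rangle$, use \eqref{keyest2} to bound the perturbation $W_\tau(H_L(B)-z)^{-1}$ by $C\delta/\eta$, and invert by a Neumann series. Your treatment of \eqref{expdek3} via the algebraic identity $(-i\nabla+B\mathbf{a})_D^2+1=H_L(B)+1-V$ is a slight streamlining of the paper's route through \eqref{222} and \eqref{expdek6}, and your explicit remark about why the scaling $\tau\sim 1/\langle r\rangle$ is forced is a nice addition; otherwise the two proofs coincide. (One small slip: in \eqref{expdek2} the momentum sits \emph{outside} the exponential, i.e.\ the object is $P_j(B)\,e^{\tau\rho}(H_L-z)^{-1}e^{-\tau\rho}$, whereas you compute $e^{\tau\rho}P_j(B)(H_L-z)^{-1}e^{-\tau\rho}$; these differ by the bounded commutator $i\tau(\partial_j\rho)\cdot e^{\tau\rho}(H_L-z)^{-1}e^{-\tau\rho}$, so the conclusion is unaffected.)
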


\noindent{\bf Proof.}  An heuristic explanation of \eqref{expdek} is the following: if we apply the resolvent on a function which is exponentially localized near $\x_0$ and decays like $e^{- \frac{\delta}{\langle r 
\rangle }\langle \x -\x_0\rangle }$, then we do not loose exponential decay. Moreover, the $L^2$ bounds are uniform in $z$, $L$ and the location of $\x_0$.

 that For $s\in\R$,
the well-known Combes-Thomas rotation \cite{CT} gives: 
$$e^{s\langle \cdot -\x_0\rangle }(H_L(B)-z)e^{-s\langle \cdot -\x_0\rangle 
}=H_L(B)-z+s\sum_{j=1}^3w_j[D_j+Ba_j]+sV_1+s^2V_2,$$
where $w_j, V_1,V_2$ are bounded functions, uniformly in
$L$ and $\x_0$. 

Now put $s=\delta/\langle r\rangle $, and use \eqref{keyest2}. If 
$\delta$ is small
enough, we get that uniformly in $L$, $\x_0$ and $r$ we have
$$||\{s\sum_{j=1}^3w_j[D_j+Ba_j]+sV_1+s^2V_2\}(H_L(B)-z)^{-1}||\leq
1/2,$$
which gives
\begin{align}\label{expdek4}
& e^{s\langle \cdot -\x_0\rangle}(H_L(B)-z)^{-1}e^{-s\langle \cdot -\x_0\rangle}=(H_L(B)-z)^{-1}\nonumber \\
&\cdot \left \{1+\left [s\sum_{j=1}^3w_j(D_j+Ba_j)+sV_1+s^2V_2\right
  ](H_L(B)-z)^{-1}\right \}^{-1}.
\end{align}
This implies \eqref{expdek}, and together with \eqref{keyest2} we
also get \eqref{expdek2}. 

Let us now concentrate ourselves on the last estimate \eqref{expdek3}. Up to a
commutation, \eqref{expdek2} gives 
 \begin{align}\label{expdek5}
\sup_{L>1}\sup_{r\in \R}\sup_{\x_0\in\Lambda}\left \{
  \langle r\rangle ^{-1}\cdot \left \Vert e^{{\langle \cdot -\x_0\rangle\frac{\delta}
{\langle r\rangle }}}[D_\alpha+Ba_\alpha](H_L(B)-z)^{-1}e^{-\langle \cdot -\x_0\rangle{\frac{\delta}{\langle r\rangle }}}\right
\Vert \right \}\leq C. 
\end{align}
Thus again up to a commutation, \eqref{expdek3} follows if we
can prove
\begin{align}\label{expdek6}
\sup_{L>1}\sup_{r\in \R}\sup_{\x_0\in\Lambda}\left \{
  \langle r\rangle ^{-1}\left \Vert[(-i\nabla +B{\bf a})_D^2+1] e^{{\langle \cdot -\x_0\rangle\frac{\delta}{\langle r\rangle }}}(H_L(B)-z)^{-1}e^{-\langle \cdot -\x_0\rangle{\frac{\delta}{\langle r\rangle }}}\right
\Vert \right \}\leq C.
\end{align}
But this estimate follows from \eqref{expdek4} and \eqref{222}.

\qed

\vspace{0.5cm}

\begin{corollary}\label{cor1}
Let $\lambda\geq \lambda_0>0$. Then there exists $c>0$ such that  
\begin{align}\label{expcor}
\sup_{L>1}\sup_{\lambda\geq \lambda_0}\sup_{\x_0\in\Lambda}\left 
\Vert e^{\pm c\langle \cdot
    -\x_0\rangle 
    \sqrt{\lambda}}(H_L(B)+\lambda)^{-1}
e^{\mp c\langle \cdot
    -\x_0\rangle \sqrt{\lambda}}\right
\Vert \leq \frac{\C(\lambda_0)}{\lambda}.
\end{align}
\end{corollary}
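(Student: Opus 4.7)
\textbf{Proof plan for Corollary \ref{cor1}.} The strategy is a direct adaptation of the Combes--Thomas argument already used to prove Proposition \ref{prop2}, but with the exponential rate rescaled from $\delta/\langle r\rangle$ to $c\sqrt{\lambda}$. This is the natural (Agmon) rate for a uniformly elliptic second-order operator at spectral distance $\lambda$ from the bottom of the spectrum. I take $z = -\lambda$ and aim to show that $c$ can be chosen independently of $L$, $\lambda\geq \lambda_0$, and $\x_0\in\Lambda_L$.

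First I would perform the Combes--Thomas conjugation with parameter $s = c\sqrt{\lambda}$. Writing $f_{j,\x_0}(\x):=(x_j-x_{0,j})/\sqrt{|\x-\x_0|^2+1}$, which is bounded by $1$ together with all its derivatives uniformly in $\x_0$, a direct computation gives
\begin{align*}
e^{s\langle \cdot -\x_0\rangle}(H_L(B)+\lambda)e^{-s\langle \cdot -\x_0\rangle}
= H_L(B) + \lambda + sW_1 + sV_1 + s^2 V_2,
\end{align*}
where $W_1 = 2\sum_j f_{j,\x_0}(D_j+Ba_j)$ is first order and $V_1,V_2$ are multiplication operators uniformly bounded in $L,\x_0$. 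The same formula holds with $-s$ in place of $s$, with just sign changes in $W_1,V_1$.

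Next I would establish the key ``gradient'' bound $\|[D_\alpha+Ba_\alpha](H_L(B)+\lambda)^{-1}\|\leq C/\sqrt{\lambda}$ for $\lambda\geq \lambda_0$, uniformly in $L$. This follows from \eqref{keyest} applied at $z=-\lambda$ (so $\eta=\lambda$ and $\max\{\Re z,0\}=0$, yielding a $1/\sqrt{\lambda}$ bound for the magnetic Dirichlet Laplacian resolvent), combined with the resolvent identity
\begin{equation*}
(H_L(B)+\lambda)^{-1} = [(-i\nabla+B\ba)_D^2+\lambda]^{-1} - [(-i\nabla+B\ba)_D^2+\lambda]^{-1}V(H_L(B)+\lambda)^{-1},
\end{equation*}
and the trivial estimate $\|(H_L(B)+\lambda)^{-1}\|\leq 1/\lambda$ (since $\mathrm{spec}(H_L(B))\subset[0,\infty)$).

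Putting $s=c\sqrt{\lambda}$ and combining with the obvious bounds $\|V_j\|_\infty\leq C$ and $\|(H_L(B)+\lambda)^{-1}\|\leq 1/\lambda$, I obtain
\begin{align*}
\|sW_1(H_L(B)+\lambda)^{-1}\| &\leq 2c\sqrt{\lambda}\cdot C/\sqrt{\lambda}=2cC,\\
\|sV_1(H_L(B)+\lambda)^{-1}\| &\leq c\sqrt{\lambda}\cdot C/\lambda \leq cC/\sqrt{\lambda_0},\\
\|s^2V_2(H_L(B)+\lambda)^{-1}\| &\leq c^2\lambda\cdot C/\lambda=c^2C.
\end{align*}
Choosing $c>0$ sufficiently small (depending only on $\lambda_0$, $\|V\|_\infty$, and the constants in \eqref{keyest}) makes the sum $\leq 1/2$, uniformly in $L$, $\lambda\geq \lambda_0$, and $\x_0$. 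A Neumann series then gives
\begin{equation*}
e^{\pm s\langle \cdot -\x_0\rangle}(H_L(B)+\lambda)^{-1}e^{\mp s\langle \cdot -\x_0\rangle} = (H_L(B)+\lambda)^{-1}\bigl\{I + (\pm sW_1 \pm sV_1 + s^2 V_2)(H_L(B)+\lambda)^{-1}\bigr\}^{-1},
\end{equation*}
whose norm is bounded by $2/\lambda$, completing the proof.

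There is really no serious obstacle here: the only point requiring any care is to pick the right scaling $s\sim\sqrt{\lambda}$ and to check that the ellipticity-based bound $\|(D_\alpha+Ba_\alpha)(H_L(B)+\lambda)^{-1}\|=O(\lambda^{-1/2})$ survives uniformly in $L$ (via the boundary-condition-independent estimate \eqref{keyest}). Everything else is the standard Combes--Thomas Neumann-series closure.
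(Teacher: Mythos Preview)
Your proposal is correct and follows essentially the same route as the paper: apply \eqref{keyest} at $z=-\lambda$ to get the $\lambda^{-1/2}$ gradient bound, transfer it from $(-i\nabla+B\ba)_D^2$ to $H_L(B)$ via the resolvent identity with the bounded potential $V$, and then run the Combes--Thomas conjugation with $s=c\sqrt{\lambda}$ exactly as in \eqref{expdek4}. Your write-up is in fact more detailed than the paper's, which compresses the last step into ``proceed as in \eqref{expdek4}''.
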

\noindent{\bf Proof.} We use the key estimate \eqref{keyest} for the
case when $\eta\geq \lambda$ and $\Re(z)=-\lambda<0$. This gives us 
$$||(D_\alpha+B a_\alpha)[(-i\nabla+B{\bf a})_D^2+\lambda]^{-1}||\leq 
\C/\sqrt{\lambda}$$
hence 
$$||(D_\alpha+B a_\alpha)[H_L(B)+\lambda]^{-1}||\leq \C/\sqrt{\lambda}.$$
Now we proceed as in \eqref{expdek4} and we get the result. Finally,
note that by repeating the argument of Proposition \ref{prop2} we can
obtain a uniform estimate in $\lambda$, $L$ and $\x_0$:
\begin{align}\label{expcor2}
\left \Vert e^{\pm c\langle \cdot
    -\x_0\rangle 
    \sqrt{\lambda}}[(-i\nabla+B{\bf a})_D^2+\lambda](H_L(B)+\lambda)^{-1}
e^{\mp c\langle \cdot
    -\x_0\rangle \sqrt{\lambda}}\right
\Vert \leq \C.
\end{align}

\qed 

\vspace{0.5cm}

\begin{proposition}\label{prop3}
The operator $[(-i\nabla +B{\bf a})_D^2+\lambda]^{-1}$ has an integral
kernel $K_L(\x,\x')$ which is jointly continuous away from the diagonal
$\x=\x'$,  and obeys the estimate 
 \begin{align}\label{intkern}
 \vert K_L(\x,\x') \vert \leq 
\frac{e^{-\sqrt{\lambda}|\x-\x'|}}{4\pi |\x-\x'|},
\end{align}
for every $\x\neq \x'$ in $\Lambda_L$. 
\end{proposition}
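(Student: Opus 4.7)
The natural route is via the diamagnetic inequality applied to the heat semigroup, followed by a Laplace transform in $\lambda$. The plan is as follows.

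First, I would represent the resolvent kernel as the Laplace transform of the heat kernel:
\begin{equation*}
K_L(\x,\x') = \int_0^\infty e^{-\lambda t}\, e^{-t(-i\nabla+B\mathbf{a})_D^2}(\x,\x')\, dt.
\end{equation*}
Since $(-i\nabla+B\mathbf{a})_D^2$ is a nonnegative selfadjoint operator and $\lambda>0$, the integral converges in the operator norm, and it suffices to establish the pointwise bound inside the integrand and exchange absolute value with the integral.

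Second, I would invoke the diamagnetic inequality (Kato--Simon) in its pointwise form for the magnetic Dirichlet heat kernel: for every $t>0$,
\begin{equation*}
\bigl|\,e^{-t(-i\nabla+B\mathbf{a})_D^2}(\x,\x')\bigr| \;\le\; e^{-t(-\Delta_D)}(\x,\x'),
\end{equation*}
where $-\Delta_D$ is the Dirichlet Laplacian on $\Lambda_L$. The Dirichlet heat kernel is in turn dominated by the free heat kernel on $\R^3$ (by positivity and the maximum principle, or equivalently, by monotonicity of Dirichlet heat kernels under domain inclusion), so
\begin{equation*}
e^{-t(-\Delta_D)}(\x,\x') \;\le\; \frac{1}{(4\pi t)^{3/2}}\, e^{-|\x-\x'|^2/(4t)}.
\end{equation*}
Substituting into the Laplace transform representation and using the classical identity
\begin{equation*}
\int_0^\infty e^{-\lambda t}\, \frac{1}{(4\pi t)^{3/2}}\, e^{-|\x-\x'|^2/(4t)}\, dt = \frac{e^{-\sqrt{\lambda}\,|\x-\x'|}}{4\pi|\x-\x'|}
\end{equation*}
gives the claimed pointwise bound \eqref{intkern}.

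Third, for joint continuity away from the diagonal, I would argue by elliptic regularity applied to the distributional equation $[(-i\nabla+B\mathbf{a})_D^2+\lambda]\,K_L(\cdot,\x') = \delta_{\x'}$: off the diagonal the right-hand side vanishes, so $K_L(\cdot,\x')$ is smooth in $\x$ on $\Lambda_L\setminus\{\x'\}$, and similarly in $\x'$ by taking the formal adjoint (the kernel satisfies $K_L(\x,\x')=\overline{K_L(\x',\x)}$ up to the standard complex conjugation of $B\mathbf{a}$). Joint continuity then follows from local uniform bounds provided by Sobolev embedding once we know both $\x$- and $\x'$-derivatives are locally controlled by the Laplace-transform representation above.

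The only delicate point is the first step: one needs the diamagnetic inequality in its \emph{pointwise} form for Dirichlet boundary conditions. This is standard (it follows, for example, from the Feynman--Kac--It\^o formula with killing at the boundary, or from the Trotter-product approximation together with Kato's inequality $|\nabla|u|| \leq |(-i\nabla+B\mathbf{a})u|$), and once accepted the rest is a routine Laplace-transform computation.
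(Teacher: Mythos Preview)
Your proposal is correct and follows essentially the same route as the paper: the paper also invokes the pointwise diamagnetic inequality (obtained via the Feynman--Kac--It\^o representation) to bound the magnetic Dirichlet heat kernel by the free Gaussian, and then takes the Laplace transform to recover the Yukawa-type bound. Your treatment of joint continuity via elliptic regularity is more detailed than the paper's, which leaves this point implicit.
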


\noindent {\bf Proof.} The argument is based on several well known results. 
First, one uses the
Feynman-Kac-It{\^ o} representation for the kernel of the semi-group $e^{-t
  (-i\nabla +B{\bf a})_D^2}$, $t>0$ (see \cite{BHL}) and obtains a 
diamagnetic inequality in $\Lambda_L$:
$$\left | e^{-t (-i\nabla +B{\bf a})_D^2}(\x,\x')\right |\leq  e^{t
  \Delta_D}(\x,\x')\leq (4\pi t)^{-3/2}e^{-\frac{|\x-\x'|^2}{4t}},
\quad \x,\x'\in\Lambda_L.$$
Second, we perform a Laplace transform and obtain the result.
\qed 

\vspace{0.5cm}

\begin{proposition}\label{prop4}
All the results in this section are also valid if the operators are
defined on the whole space $\R^3$ (formally $L=\infty$).
\end{proposition}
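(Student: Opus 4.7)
The plan is to show that each estimate in Propositions \ref{prop1}--\ref{prop3} and Corollary \ref{cor1} carries over to $L=\infty$ essentially verbatim, since the machinery (quadratic-form identities, Combes--Thomas rotations, Neumann series in $V$, Feynman--Kac--It\^o representation) is agnostic of whether one works on a bounded box with Dirichlet data or on all of $\R^3$. Concretely, $H_\infty(B)$ is essentially self-adjoint on $C_0^\infty(\R^3)$ with form domain the magnetic Sobolev space, so the polarization identity \eqref{keyest3} still holds for every $\psi\in\mathrm{Dom}(H_\infty(B))$; this gives \eqref{keyest} on $\R^3$, and then the Neumann-series bootstrap in $V$ is identical to the finite-$L$ case and yields \eqref{keyest2}. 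I would simply rerun the same lines with $\Lambda_L$ replaced by $\R^3$ and $(-i\nabla+B\mathbf{a})_D^2$ replaced by its self-adjoint realization on $\R^3$.

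For Proposition \ref{prop2} and Corollary \ref{cor1}, the Combes--Thomas rotation is an algebraic identity at the symbol level; the only genuine subtlety is that on $\R^3$ the weight $e^{s\langle\cdot-\x_0\rangle}$ is unbounded, so the conjugated resolvent is not a priori a bounded operator on $L^2(\R^3)$. I would handle this by the standard truncation: replace the weight by $\varphi_n(\x):=\min\{e^{s\langle\x-\x_0\rangle},n\}$ (a bounded Lipschitz function whose derivatives are controlled by $s$ times itself on the set where no truncation occurs), run the argument leading to \eqref{expdek4} with $\varphi_n$ in place of the exponential, obtain the $L^2$ bound uniformly in $n$ (the constant comes only from \eqref{keyest2}, which we already have on $\R^3$), and then pass to the limit $n\to\infty$ by monotone/Fatou-type arguments together with dominated convergence applied pointwise to resolvent kernels on a dense set. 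The same trick delivers \eqref{expdek2}, \eqref{expdek3}, \eqref{expcor} and \eqref{expcor2} at $L=\infty$.

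For Proposition \ref{prop3} the diamagnetic inequality $|e^{-t(-i\nabla+B\mathbf{a})^2}(\x,\x')|\le e^{t\Delta}(\x,\x')$ is a classical consequence of the Feynman--Kac--It\^o formula on $\R^3$ (see \cite{BHL}), so the Laplace-transform argument applies unchanged. A clean alternative, which I would mention as a sanity check, is to invoke strong resolvent convergence $H_L(B)\to H_\infty(B)$ as $L\to\infty$ (via Friedrichs-extension monotonicity along $\Lambda_L\uparrow \R^3$) together with lower semicontinuity of operator norms under strong limits: since all the finite-$L$ bounds are uniform in $L>1$, they descend to $L=\infty$. The only mild obstacle worth flagging is the one already addressed above, namely making the weighted conjugation rigorous on $\R^3$; once the truncation step is in place, every other line of the finite-$L$ argument copies over without change.
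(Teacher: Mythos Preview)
Your proposal is correct and aligns with the paper's approach. The paper's proof is extremely terse: it simply says ``standard limiting and cut-off arguments'' are needed because the exponential weights do not preserve the domain of $H_\infty(B)$, and that the key ingredient is the uniformity in $L>1$ already established. You identify exactly this obstacle and supply the concrete truncation $\varphi_n=\min\{e^{s\langle\cdot-\x_0\rangle},n\}$ that the paper leaves implicit; your alternative via strong resolvent convergence $H_L(B)\to H_\infty(B)$ combined with the uniform-in-$L$ bounds is precisely the ``limiting'' half of the paper's sketch. In short, you have unpacked what the paper only gestures at, and both routes you describe are the intended ones.
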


\noindent{\bf Proof.} The argument relies on various standard limiting
and cut-off arguments, which are necessary because the exponential growing factors do not invariate the operator domain of $H_\infty(B)$. The most important ingredient (uniformity in $L>1$ of all our previous
estimates) has been already proved. \qed

\noindent {\bf Acknowledgments.}  H.C. acknowledges support from the Danish 
F.N.U. grant {\it  Mathematical Physics}.

\end{document}